\def\colorful{0}
\def\nnewcolor{1}
\newcommand{\new}[1]{{\color{red} #1}}
\newcommand{\toremove}[1]{\new{\st{#1}}}
\newcommand{\new}[1]{{#1}}
\newcommand{\toremove}[1]{}
\newcommand{\dnew}[1]{{ #1}}
\newif\ifhyper\IfFileExists{hyperref.sty}{\hypertrue}{\hyperfalse}
\ifhyper\usepackage{hyperref}\fi
\newtheorem{theorem}{Theorem}[section]
\newtheorem{question}{Question}[section]
\newtheorem{lemma}[theorem]{Lemma}
\newtheorem{corollary}[theorem]{Corollary}
\newtheorem{claim}[theorem]{Claim}
\newtheorem{fact}[theorem]{Fact}
\crefname{fact}{fact}{facts}
\Crefname{fact}{Fact}{Facts}
\newtheorem{observation}[theorem]{Observation}
\theoremstyle{definition}
\newtheorem{definition}[theorem]{Definition}
\newtheorem{assumption}[theorem]{Assumption}
\newcommand{\R}{\mathbb{R}}
\newcommand{\Z}{\mathbb{Z}}
\newcommand{\E}{\mathbf{E}}
\newcommand{\var}{\mathbf{Var}}
\newcommand{\dtv}{d_{\mathrm TV}}
\newcommand{\ignore}[1]{}
\newcommand{\eps}{\epsilon}
\newcommand{\pr}{\mathbf{Pr}}
\newcommand{\Bin}{\mathop{\mathrm{Binom}}\nolimits}
\newcommand{\Poi}{\mathop{\mathrm{Poi}}\nolimits}
\newcommand{\Mult}{\mathop{\mathrm{Multinom}}\nolimits}
\renewcommand{\eqref}[1]{(\ref{#1})}
\newcommand{\eqdef}{\stackrel{{\mathrm {\footnotesize def}}}{=}}
\newcommand{\floor}[1]{\lfloor#1\rfloor}
\newcommand{\abs}[1]{\lvert#1\rvert}
\begin{document}

\title{Optimal Testing of Discrete Distributions with High Probability}

\author{
Ilias Diakonikolas\thanks{Supported by NSF Award CCF-1652862 (CAREER), 
NSF AiTF Award CCF-2006206, and a Sloan Research Fellowship.}\\
University of Wisconsin-Madison\\
{\tt ilias@cs.wisc.edu}\\
\and
Themis Gouleakis\thanks{Some of this work was performed while the author was a postdoctoral researcher at USC.}\\
MPI, Germany\\
{\tt tgouleak@mpi-inf.mpg.de}\\
\and 
Daniel M. Kane\thanks{Supported by NSF Award CCF-1553288 (CAREER) and a Sloan Research Fellowship.}\\
University of California, San Diego\\
{\tt dakane@cs.ucsd.edu}\\
\and
John Peebles\thanks{Supported by the Swiss National Science Foundation Grant \#200021\_182527.}\\ 
Yale University\\
{\tt john.peebles@yale.edu}\\
\and
Eric Price\thanks{Supported by NSF Award CCF-1751040 (CAREER).}\\
UT Austin\\
{\tt ecprice@cs.utexas.edu }\\
}

\maketitle

\setcounter{page}{0}
\thispagestyle{empty}

\begin{abstract}
We study the problem of testing discrete distributions with a focus on the high probability regime.
Specifically, given samples from one or more discrete distributions, a property $\mathcal{P}$, and 
parameters $0< \eps, \delta <1$, we want to distinguish {\em with probability at least $1-\delta$}
whether these distributions satisfy $\mathcal{P}$ or are $\eps$-far from $\mathcal{P}$
in total variation distance. Most prior work in distribution testing studied the constant confidence case 
(corresponding to $\delta = \Omega(1)$), and provided sample-optimal testers for a range of properties.
While one can always boost the confidence probability of any such tester by black-box amplification, 
this generic boosting method typically leads to sub-optimal sample bounds.

Here we study the following broad question: For a given property $\mathcal{P}$, can we {\em characterize} 
the sample complexity of testing $\mathcal{P}$ as a function of all relevant problem parameters, 
including the error probability $\delta$? Prior to this work, uniformity testing was the only statistical task
whose sample complexity had been characterized in this setting. As our main results,
we provide the first algorithms for closeness and independence testing that are sample-optimal, within 
constant factors, as a function of all relevant parameters. We also show matching
information-theoretic lower bounds on the sample complexity of these problems.
Our techniques naturally extend to give optimal testers for  related problems. To illustrate the generality of our methods, 
we give optimal algorithms for testing collections of distributions and testing closeness with unequal sized samples.
\end{abstract}


\newpage

\section{Introduction} \label{sec:intro}

\subsection{Background and Motivation} \label{ssec:background}
This paper studies problems in distribution property testing~\cite{GR00, BFR+:00, Batu13},
a field at the intersection of property testing~\cite{RS96, GGR98} and
statistical hypothesis testing~\cite{NeymanP, lehmann2005testing}.
The prototypical problem of this field is the following: Given sample access to a collection of unknown
probability distributions and a pre-specified global property $\mathcal{P}$ of these distributions,
determine whether the distributions satisfy $\mathcal{P}$ or are ``far''
from satisfying the property. (See Section \ref{ssec:results} for a formal definition.) The main goal is to characterize
the sample and computational complexity of this general question, 
for any given property $\mathcal{P}$ of interest,
as a function of the relevant parameters. 
During the past two decades, distribution property testing has received significant attention
within the computer science and statistics communities. The reader is referred
to~\cite{Rub12, Canonne15} for two surveys on the topic.
It should be noted that the TCS definition of distribution testing 
is equivalent to the minimax view of statistical hypothesis testing, 
pioneered in the statistics community
by Ingster and coauthors (see, e.g.,~\cite{IS02}.)

The vast majority of prior research in distribution testing
focused on characterizing the complexity of testing various properties
of arbitrary discrete distributions
in the ``constant confidence regime.'' That is,
the testing algorithm is allowed to fail with probability (say) at most $1/3$.
This regime is by now fairly well understood:
For a range of natural and important
properties (see, e.g.,~\cite{Paninski:08, CDVV14, VV14, DKN:15, DKN:15:FOCS,
ADK15, DK16, DiakonikolasGPP16, CDKS18, Neyk20}),
prior work has developed testers with provably optimal sample complexity (up to universal constant factors).
More recently, a body of work has focused on leveraging a priori structure
of the underlying distributions to obtain significantly improved sample
complexities~\cite{BKR:04, DDSVV13, DKN:15, DKN:15:FOCS, CDKS17, DaskalakisP17, DaskalakisDK16, DKN17, DiakonikolasKP19}.
Similarly, all these results on testing structured distributions study the constant confidence regime.


Since distribution property testing is a (promise) decision problem, one can use standard amplification
to boost the confidence probability of any tester to any desired value in a black-box manner.
Suppose we have a testing algorithm for property $\mathcal{P}$ that guarantees
confidence probability $2/3$ (failure probability $1/3$) with $N$ samples.
Using amplification, we can increase the confidence
probability to $1-\delta$, for any $\delta>0$, by increasing the sample complexity of the algorithm
by a factor of $\Theta(\log(1/\delta))$. In part due to this simple fact, the initial definition of property testing~\cite{GGR98}
had set the confidence parameter $\delta$ to be constant by default. As Goldreich
notes~\cite{Gold17-whp}, ``eliminating the error probability as a
parameter does not allow to ask whether or not one may improve over
the straightforward error reduction''. Indeed, as we will see below, for a range of tasks
this $\Theta (\log(1/\delta))$ multiplicative increase in the sample size is sub-optimal.

The previous paragraph leads us to the following general question:

\begin{question}\label{qn:main}
For a given property $\mathcal{P}$, can we {\em characterize} the sample complexity of testing $\mathcal{P}$
as a function of all relevant problem parameters, including the error probability $\delta$?
\end{question}

We believe that Question~\ref{qn:main} is of fundamental theoretical and practical interest
that merits investigation in its own right.  The analogous question
in the context of {\em distribution learning} has been intensely studied in statistical learning theory
(see, e.g.,~\cite{VWellner96, DL:01}) and tight bounds are known in a range of settings.

Question~\ref{qn:main} is of substantial interest
in statistical hypothesis testing, where the
family of distribution testing algorithms with failure probability
$\delta$ for a given property $\mathcal{P}$ is equivalent to the family of
minimax statistical tests whose probability of Type I error ($p$-value) and
probability of Type II error are both at most $\delta$.
Standard techniques for addressing
the problem of multiple comparisons, such as Bonferroni correction,
require vanishingly small $p$-values. In such settings, obtaining optimal testers in the high-confidence
regime might have practical implications in application areas
of hypothesis testing (e.g., in biology).

It should be noted that Question~\ref{qn:main} has received renewed research attention in the information theory
and statistics communities. Specifically,~\cite{HM13-it, Kim20}
focused on developing testers with improved dependence on $\delta$ for uniformity testing~\cite{HM13-it},
equivalence and independence testing~\cite{Kim20}. Prior to this work, uniformity testing---and, via Goldreich's reduction~\cite{Goldreich16}, identity testing---was the only statistical task
whose sample complexity had been characterized in the high-confidence regime~\cite{DGPP17}.
As shown in~\cite{DGPP17}, all previously studied uniformity testers are in fact sub-optimal
in the high-confidence regime. In other words, obtaining an optimal sample bound 
was not just a matter of improved analysis, but a new algorithm was required.

Most relevant to the results of this paper is the concurrent work by Kim, Balakrishnan, and Wasserman~\cite{Kim20}.
Kim {\em et al.}~\cite{Kim20} give equivalence and independence testers for discrete distributions
with respect to the total variation distance (i.e., in the same setting as ours)
whose sample complexities beat standard amplification as a function of $\delta$ (in some parameter regimes).
As we show in this paper, their sample complexity upper bounds are sub-optimal -- by roughly a quadratic factor.
See Section~\ref{ssec:related} for a detailed description of the most relevant prior work.

\subsection{Our Contributions} \label{ssec:results}

In this work, we systematically investigate the sample complexity of distribution testing
in the high-confidence regime. Our main focus is on the problems of closeness (equivalence) testing
and independence testing. We develop new techniques that lead to
the first sample-optimal testing algorithms for these properties. Moreover, we prove 
information-theoretic lower bounds showing that the sample complexity of our algorithms
is optimal in all parameters (within a constant factor). Our techniques can be naturally adapted
to give sample-optimal testers for other properties. To illustrate the generality of our methods, 
we show that our techniques lead to sample-optimal testers (and matching lower bounds) 
for testing properties of collections of distributions and testing closeness with unequal sized samples.

We start with a general definition of distribution property testing for tuples of distributions.

\begin{definition}[$(\eps, \delta)$-testing of property $\mathcal{P}$] \label{def:testing}
Let $\mathcal{P}$ be a property of a $k$-tuple of distributions.
Given parameters $0< \eps, \delta < 1$, and sample access to a collection of distributions $p^{(1)}, \ldots, p^{(k)}$,
we want to distinguish {\em with probability at least $1-\delta$} between the following cases:
\begin{itemize}
\item {\bf Completeness}: $(p^{(1)}, \ldots, p^{(k)}) \in \mathcal{P}$.
\item {\bf Soundness}: $(p^{(1)}, \ldots, p^{(k)})$ is $\eps$-far from $\mathcal{P}$, in total variation distance, 
i.e., for every $(q^{(1)}, \ldots, q^{(k)}) \in \mathcal{P}$ the average total variation distance between $p^{(i)}$
and $q^{(i)}$, $i \in [k]$, is at least $\eps$.
\end{itemize}
We call this the problem of {\em $(\eps, \delta)$-testing property $\mathcal{P}$}.
An algorithm that solves this problem will be called an \emph{$(\varepsilon,\delta)$-tester} for property $\mathcal{P}$.
\end{definition}

Here we focus on testing properties of distributions on discrete domains.
Definition~\ref{def:testing} captures all testing tasks we study in this paper.
Our contributions are described in detail in the proceeding discussion.

\medskip

The task of closeness testing (or equivalence testing) of two discrete distributions $p, q$ supported on $[n]$
corresponds to the case $k=2$ of Definition~\ref{def:testing} and the property in question is 
$\mathcal{P} = \{ (p, q): p = q \}$. In other words, given samples from $p$ and $q$,
we want to distinguish between the cases that $p = q$ and $\dtv(p, q) \geq \eps$.
For closeness testing, we show:

\begin{theorem}[Closeness Testing] \label{thm:close-main}
There exists a computationally efficient $(\eps, \delta)$-closeness tester for discrete distributions of support size $n$
with sample complexity
$$\Theta\left( n^{2/3}\log^{1/3}(1/\delta) / \eps^{4/3} + (n^{1/2} \log^{1/2}(1/\delta)+\log(1/\delta))/\eps^2 \right) \;.$$
Moreover, this sample size upper bound is information-theoretically optimal, 
within a universal constant factor, for all $n, \eps, \delta$.
\end{theorem}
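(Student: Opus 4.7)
The plan is to combine Poissonization, a chi-squared-type statistic in the spirit of \cite{CDVV14}, and a Bernstein-type tail bound, together with a heavy/light decomposition of the domain. After Poissonization at rate $m$, the per-element counts $X_i \sim \Poi(mp_i)$ and $Y_i \sim \Poi(mq_i)$ are mutually independent across $i$, so the per-coordinate contributions to any chi-squared-type statistic decouple. I would split $[n]$ into a heavy part $H = \{i : \max(p_i,q_i) \ge \tau\}$ and a light part $L = [n]\setminus H$ for a threshold $\tau$ to be chosen later. On $H$, which has at most $O(1/\tau)$ elements, one estimates each $p_i - q_i$ directly from empirical frequencies and rejects if some $|\widehat{p}_i - \widehat{q}_i|$ is too large; Chernoff plus a union bound shows this step costs $O(\log(1/\delta)/\eps^2)$ samples and is responsible for the third term in the claimed bound. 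On $L$, I would use the statistic
\[
Z \;=\; \sum_{i \in L} \frac{(X_i - Y_i)^2 - X_i - Y_i}{X_i + Y_i + 1} \, ,
\]
which has mean $0$ under $p = q$ and mean $\gtrsim m \sum_{i \in L}(p_i - q_i)^2/(p_i + q_i)$ under the $\eps$-far alternative. Since the summands are independent across $i$, the key concentration tool is Bernstein's inequality, applied after per-coordinate truncation.

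The remaining two terms arise from balancing the Bernstein contributions against the signal gap $\mu$ between the null and alternative means. Under the null, the total variance $\sigma^2 = \sum_{i \in L}\Var(Z_i)$ scales with the light-mass profile, while after truncation each summand is bounded in absolute value by $M \lesssim m\tau$. Matching $\mu \gtrsim \sigma\sqrt{\log(1/\delta)}$ reproduces the CDVV $n^{2/3}/\eps^{4/3}$ scaling with an additional $\log^{1/3}(1/\delta)$ factor, yielding the first term; matching $\mu \gtrsim M \log(1/\delta)$ with optimally chosen $\tau$ produces the $n^{1/2}\log^{1/2}(1/\delta)/\eps^2$ term, paralleling the DGPP \cite{DGPP17} uniformity analysis translated to the two-sample setting. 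For the matching information-theoretic lower bound, I would exhibit three hard families, one per term: (a) two Bernoullis of bias $\eps$ for the $\log(1/\delta)/\eps^2$ term; (b) a two-sample Paninski-type \cite{Paninski:08} construction (one distribution uniform on $[n]$, the other a random $\pm \eps/n$ perturbation of it) for the $n^{1/2}\log^{1/2}(1/\delta)/\eps^2$ term, following the approach of \cite{DGPP17}; and (c) the same perturbation pattern restricted to a sub-domain of carefully tuned size for the $n^{2/3}\log^{1/3}(1/\delta)/\eps^{4/3}$ term. In each case one upper-bounds the $\chi^2$ divergence (or a higher moment of the likelihood ratio) between the null and the alternative mixture and concludes via Le Cam or Neyman--Pearson.

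I expect the main obstacle to be obtaining the sharp $\log^{1/3}(1/\delta)$ dependence in the chi-squared regime, on both sides. On the algorithmic side, the original CDVV argument relies on Chebyshev applied to the second moment, which only yields $1/\poly(\delta)$ dependence; replacing Chebyshev with Bernstein requires a per-coordinate truncation of $Z_i$ that keeps $M$ small without destroying the signal, together with a variance bound that carefully exploits the cancellation between the $(X_i - Y_i)^2$ and $X_i + Y_i$ terms in the numerator. On the lower-bound side, the standard second-moment $\chi^2$ computation only rules out constant-confidence testers, so one must control higher $\chi^2$ moments (or equivalently the moment generating function) of the likelihood ratio between the null and alternative priors. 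Once these concentration and anti-concentration ingredients are in place, the remaining work is a careful optimization of the Poissonization rate $m$, the truncation threshold $\tau$, and the heavy/light cutoff so that all three terms of the claimed bound are simultaneously attained.
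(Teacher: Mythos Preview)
Your upper-bound plan diverges from the paper's in a way that leaves a real gap. You propose to analyze the CDVV chi-squared statistic via a heavy/light split, per-coordinate truncation, and Bernstein; the paper explicitly remarks that it could not push this through (``the chi-squared closeness tester of~\cite{CDVV14} is Lipschitz, but not Lipschitz enough for the straightforward analysis to obtain an optimal bound'' and that ``simple analyses based on McDiarmid's inequality lead to sub-optimal sample complexity when the sample size is $\Omega(n)$''). Instead, the paper introduces a \emph{new} statistic $Z=\sum_i\bigl(|X_i-Y_i|+|X'_i-Y'_i|-|X_i-X'_i|-|Y_i-Y'_i|\bigr)$, built from two independent sample batches from each distribution. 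Because this $Z$ is $2$-Lipschitz in each of the $4k$ samples, McDiarmid gives $|Z-\E[Z]|=O(\sqrt{k\log(1/\delta)})$ with probability $1-\delta$ directly, with no truncation, no heavy/light split, and no dependence on the unknown $p_i,q_i$. The three terms in the sample complexity then fall out of three different lower bounds on $\E[Z]$ under the alternative (Claim~\ref{cl:pointwise} in the paper), namely $\E[Z_i]\gtrsim \min\bigl\{|kp_i-kq_i|,\,(kp_i-kq_i)^2,\,(kp_i-kq_i)^2/\sqrt{kp_i+kq_i}\bigr\}$. Your plan, by contrast, hinges on a truncation that simultaneously keeps $M$ small and preserves the signal, and on a heavy/light split defined via the \emph{unknown} probabilities $\max(p_i,q_i)\ge\tau$; neither step is supplied, and the paper's experience suggests that filling them in is at least as hard as the problem itself.

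On the lower-bound side, your construction~(c) as stated (``the same perturbation pattern restricted to a sub-domain of carefully tuned size'') is not the right hard instance for the $n^{2/3}\log^{1/3}(1/\delta)/\eps^{4/3}$ term. The correct family requires, in addition to the $\pm\eps/n$ perturbed light bins, a set of roughly $k$ \emph{heavy decoy} bins each of mass $\Theta(1/k)$, identical under completeness and soundness; these heavy bins generate the collisions that mask the light-bin signal and are what force the $n^{2/3}$ rate. Methodologically, the paper also departs from your higher-moment/$\chi^2$ route: since in the high-confidence regime the mutual information is already close to one bit, the paper instead upper-bounds the KL divergence between the sample distributions and invokes the inequality $\dtv\le 1-\tfrac12 e^{-D}$ (Fact~\ref{lem:better-dtv-ub}) to conclude that any tester with failure probability $\delta$ requires $D\ge\log(2/\delta)$. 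This is both simpler and sharper than controlling the MGF of the likelihood ratio.
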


\medskip

The statistical task of (two-dimensional) independence testing of a discrete distribution $p$ on the domain $[n] \times [m]$
corresponds to the case $k=1$ of Definition~\ref{def:testing}, where the property of interest is 
 $\mathcal{P} = \{p: p \textrm{ is a product distribution}\}$. That is, we want to distinguish between the case that $p$
 is a product distribution versus $\eps$-far, in total variation distance, 
 from any product distribution. For independence testing, we show:

\begin{theorem}[Independence Testing] \label{thm:indep-main}
There exists a computationally efficient $(\eps, \delta)$-independence tester for discrete distributions on $[n] \times [m]$,
where $n \geq m$, with sample complexity
$$\Theta\left( n^{2/3} m^{1/3} \log^{1/3}(1/\delta) / \eps^{4/3} + ((nm)^{1/2}\log^{1/2}(1/\delta)+\log(1/\delta))/\eps^2 \right) \;.$$
Moreover, this sample size upper bound is information-theoretically optimal, within a universal constant factor, for all $n, m, \eps, \delta$.
\end{theorem}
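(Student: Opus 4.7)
I would prove \Cref{thm:indep-main} by adapting the classical independence-testing-via-flattening approach and combining it with the high-confidence closeness tester of \Cref{thm:close-main}. Throughout, assume WLOG $n \geq m$, and let $N$ denote the stated sample complexity.

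\textbf{Upper bound.} The plan proceeds in three stages. (1) \emph{Learn the $Y$-marginal:} spend an initial $O(\sqrt{m\log(1/\delta)}/\eps^2 + \log(1/\delta)/\eps^2)$ samples (absorbed into the stated complexity) to produce an estimate $\hat p_Y$ satisfying $\|\hat p_Y - p_Y\|_2 = O(\eps/\sqrt m)$ except with probability $\delta/10$, via Bernstein's inequality applied per coordinate and a union bound over $[m]$. (2) \emph{Flatten:} using $\hat p_Y$, define a flattening $\phi\colon [n]\times[m]\to [n]\times[M]$ with $M=\Theta(m)$ that splits each $y\in[m]$ into $\max(1,\lfloor M\hat p_Y(y)\rfloor)$ virtual copies, distributing $p_Y(y)$ uniformly across the copies. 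Samples from $\tilde p := \phi_\ast(p)$ can be simulated from samples of $p$. Standard properties of such flattenings give (a) $\tilde p$ is a product distribution iff $p$ is; (b) $d_{TV}(\tilde p,\tilde p_X\otimes\tilde p_Y)=\Theta(d_{TV}(p,p_X\otimes p_Y))$; and (c) $\|\tilde p_Y\|_\infty=O(1/m)$ with high probability, so $\|\tilde p\|_2^2\leq \|\tilde p_Y\|_\infty = O(1/m)$.

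(3) \emph{Bilinear $\chi^2$ test:} with additional samples from $\tilde p$, form a bilinear ``degree-$4$ in the counts'' statistic $Z$ that is unbiased for $\|\tilde p - \tilde p_X\otimes \tilde p_Y\|_2^2$, similar to the independence testers of \cite{ADK15, DK16}. A direct variance calculation shows that in the null case $\tilde p = \tilde p_X\otimes\tilde p_Y$, the variance $\var(Z)$ scales with $\|\tilde p\|_2$ and $\|\tilde p\|_\infty$ in a way that—combined with the flattening bound $\|\tilde p\|_2^2 = O(1/m)$—matches the squared signal exactly when $N \geq n^{2/3} m^{1/3}\log^{1/3}(1/\delta)/\eps^{4/3} + \sqrt{nm}\log^{1/2}(1/\delta)/\eps^2$. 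Converting this from constant-$\delta$ (Chebyshev) to high-confidence guarantees is the crucial step, and I would do it by the same Bernstein-style concentration of Poissonized empirical sums used in the proof of \Cref{thm:close-main}.

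\textbf{Lower bound.} For the matching lower bound, I would adapt Ingster's two-point method to product distributions. The hard construction contrasts a random product distribution close to $u_X \otimes u_Y$ (the null) with $u_X\otimes u_Y$ plus a random rank-$2$ perturbation whose row- and column-sums are zero, scaled so that the result is $\eps$-far from every product distribution (the alternative). Thanks to the product structure, the $\chi^2$ divergence between the induced Poissonized $N$-sample mixtures factorizes cleanly across the $X$- and $Y$-coordinates; the computation pins down $N = \Theta(n^{2/3}m^{1/3}/\eps^{4/3})$ in the constant-$\delta$ regime, and the $\log^{1/3}(1/\delta)$ factor follows from the high-confidence amplification of Le Cam's inequality via the strategy of \cite{DGPP17}. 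The additive $\sqrt{nm}\log^{1/2}(1/\delta)/\eps^2 + \log(1/\delta)/\eps^2$ lower bound follows from the standard identity-testing lower bound on support of size $nm$, in its high-confidence form.

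\textbf{Main obstacle.} The crux is the variance analysis of the bilinear $\chi^2$ statistic in step (3), which must combine the flattening bound $\|\tilde p_Y\|_\infty=O(1/m)$ (yielding the $m^{1/3}$ savings over the $(nm)^{2/3}$ that a naive invocation of \Cref{thm:close-main} would give) with Bernstein-type concentration (yielding the $\log^{1/3}(1/\delta)$ factor). The self-referential feature of independence testing—the tester must estimate $\tilde p_X,\tilde p_Y$ from the same data used for the test—makes this variance bound strictly more intricate than that of closeness testing, where both distributions are sampled externally.
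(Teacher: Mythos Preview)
Your proposal has genuine gaps in the upper bound, and misses the paper's main innovations.

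\textbf{Flattening only the $Y$-marginal is not enough, and high-probability flattening of the $X$-marginal is infeasible within the sample budget.} Your Step~(2) flattens only along~$Y$ and asserts $\|\tilde p\|_2^2=O(1/m)$. While that bound is correct, it is not strong enough: plugging $b=\Theta(1/\sqrt m)$ into the standard $\ell_2$-closeness analysis on a domain of size $nm$ does not yield the $n^{2/3}m^{1/3}$ term. The \cite{DK16} tester flattens \emph{both} marginals, and the $X$-flattening is what converts $(nm)^{2/3}$ into $n^{2/3}m^{1/3}$. But in the regime where $k\ll n$, you cannot learn $p_X$ to the required accuracy with probability $1-\delta$; random-sample flattening of $p_X$ only succeeds \emph{on average}. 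The paper addresses this by drawing one sample pool $\overline S$, then repeatedly re-randomizing which samples are used for flattening versus for the test statistic, aborting whenever the number $N_q$ of $q$-side collisions is too large. They show that for \emph{any} fixed $\overline S$ the abort probability is at most $1/2$ (Lemma~\ref{lm:abort}), while over random $\overline S$ the probability of returning a wrong answer is $O(\delta)$; summing over the geometric number of retries gives overall failure $O(\delta)$ (Lemma~\ref{lem:ub-final}). Your proposal has no analogue of this retry mechanism.

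\textbf{The concentration step is the heart of the matter and is not addressed.} You write that you would obtain high-confidence guarantees ``by the same Bernstein-style concentration \ldots\ used in the proof of \Cref{thm:close-main}.'' But the closeness tester in this paper does \emph{not} use a $\chi^2$ statistic with Bernstein; it introduces a new $2$-Lipschitz statistic $Z=\sum_i\bigl(|X_i-Y_i|+|X'_i-Y'_i|-|X_i-X'_i|-|Y_i-Y'_i|\bigr)$ and applies McDiarmid. The paper explicitly notes that the $\chi^2$ statistic of \cite{CDVV14} is ``not Lipschitz enough for the straightforward analysis to obtain an optimal bound,'' and that the \cite{DK16} $\ell_2$-reduction ``does not behave well in the high probability regime.'' For independence, the paper further shows via a symmetrization argument (Lemma~\ref{lem:Zconc}) that $|Z-\E[Z]|=O\bigl(\sqrt{(N+\log(1/\delta))\log(1/\delta)}\bigr)$ where $N$ is the number of \emph{non-singleton} samples, and controls $N$ (not~$k$) via the flattening. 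This is what produces the $\log^{1/3}(1/\delta)$ dependence in the first term; a degree-$4$ $\chi^2$ statistic with a generic Bernstein bound will not give this, and you have not indicated how it would.

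For the lower bound, your sketch is in the right spirit but vague on the key point: the paper replaces the mutual-information bound of \cite{DK16} by a KL-divergence bound (via Fact~\ref{lem:better-dtv-ub}), since mutual information is capped at $1$ and cannot certify failure probability $\delta$ when $\delta$ is tiny. Your ``high-confidence amplification of Le~Cam's inequality'' would need to be exactly this.
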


The main focus of this paper is on developing the techniques required to establish 
Theorems~\ref{thm:close-main} and~\ref{thm:indep-main}. Building on these techniques,
we obtain optimal testers for two additional fundamental properties. 

In the task of testing collections of distributions, we are given access to $m$ distributions 
$p^{(1)}, \ldots, p^{(m)}$ supported on $[n]$ and we want to distinguish between the case that 
$p^{(1)} =  p^{(2)}  = \ldots = p^{(m)}$ and the case that $\min_{q} (1/m) \sum_{i=1}^m \dtv(p^{(i)} , q) \geq \eps$.
Our algorithm is given samples of the form $(i, j)$, where $i$ is drawn uniformly at random from $[m]$ 
and $j \in [n]$ is drawn from $p^{(i)}$. While this problem has strong similarities to independence testing,
it also has some significant differences. For this testing task, we show:

\begin{theorem}[Testing Collections of Distributions] \label{thm:coll-main}
There exists a computationally efficient $(\eps, \delta)$-tester for testing closeness of collections of $m$ distributions on $[n]$ 
with sample complexity
$$\Theta\left( n^{2/3} m^{1/3} \log^{1/3}(1/\delta) / \eps^{4/3} + ((nm)^{1/2}\log^{1/2}(1/\delta)+\log(1/\delta))/\eps^2 \right) \;.$$
Moreover, this sample size upper bound is information-theoretically optimal, within a universal constant factor, for all $n, m, \eps, \delta$.
\end{theorem}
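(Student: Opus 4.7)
The plan is to view the collection testing problem as a structured form of independence testing. Given $N$ samples $(i,j)$ with $i$ drawn uniformly on $[m]$ and $j \sim p^{(i)}$, these are iid draws from the joint distribution $P = \frac{1}{m} \sum_i \delta_i \otimes p^{(i)}$ on $[m] \times [n]$, whose marginal on $[m]$ is exactly $u_m$ (the uniform distribution). The completeness case becomes $P = u_m \otimes q$, and a direct computation gives $\dtv(P, u_m \otimes q) = \frac{1}{m}\sum_i \dtv(p^{(i)},q)$, so the soundness condition translates to $\min_q \dtv(P, u_m\otimes q) \geq \eps$. Moreover, because the $[m]$-marginal of $P$ equals $u_m$ exactly, any product $r\otimes q'$ within TV distance $\eps'$ of $P$ must have $\dtv(r, u_m) \leq \eps'$ by data processing, whence $\dtv(P, u_m \otimes q') \leq 2\eps'$ by the triangle inequality. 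Thus soundness for the collection problem implies that $P$ is $\Omega(\eps)$-far from every product distribution on $[m]\times [n]$.

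In the regime $n \geq m$, plugging $P$ into Theorem~\ref{thm:indep-main} with parameter $\eps/2$ yields exactly the claimed sample complexity. The nontrivial regime is $m > n$, where a black-box application of independence testing would give the suboptimal bound $m^{2/3}n^{1/3}$ instead of $n^{2/3}m^{1/3}$. To close this gap I would adapt the independence tester of Theorem~\ref{thm:indep-main} to exploit that the $[m]$-marginal is known to be uniform rather than merely estimated. Using the same heavy/light bucketing of $[n]$ as in the independence tester, we apply a chi-squared-type statistic of the form $\sum_{i,j}(X_{i,j} - \wh q(j) N/m)^2 / (\wh q(j) N/m)$ on the light part and an identity-test-style statistic on the heavy part, where the ``expected per-row count'' $N/m$ is plugged in as a known constant rather than estimated from the samples. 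Substituting this known marginal into the moment bounds reduces the contribution of row fluctuations and yields the sharper $n^{2/3}m^{1/3}/\eps^{4/3}$ leading term. The $\sqrt{nm}\log^{1/2}(1/\delta)/\eps^2$ and $\log(1/\delta)/\eps^2$ terms arise from the subquadratic regime and a Bernstein-type tail bound on Poissonized sums, respectively, exactly as in the proof of Theorem~\ref{thm:indep-main}.

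The main obstacle is the variance bookkeeping in the $m > n$ regime, where one must carefully track how the known uniform $[m]$-marginal saves a factor in the variance of the chi-squared statistic compared to general independence testing; the intuition is that the independence tester pays for estimating two unknown marginals while here we pay only for the unknown $q$. For the matching lower bound, the hard instance takes $P = u_m \otimes q_0$ in the completeness case (for $q_0$ a two-level ``staircase'' distribution of the type used in the independence testing lower bound) and, in the soundness case, perturbs each row $p^{(i)}$ by an independent random signed vector of appropriate magnitude, calibrated so that $\frac{1}{m}\sum_i\dtv(p^{(i)},q_0) \geq \eps$ with high probability. The $\log(1/\delta)$ scaling is obtained by combining a $\chi^2$-divergence mutual information bound between the completeness and soundness ensembles with the high-confidence amplification techniques of~\cite{DGPP17}.
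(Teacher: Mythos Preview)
Your reduction of the collection problem to independence testing of $P$ on $[m]\times[n]$ is exactly the paper's starting point, and your soundness argument (that $\eps$-far in the collection sense implies $\Omega(\eps)$-far from any product) matches the paper's. Your black-box use of Theorem~\ref{thm:indep-main} in the $n\geq m$ regime is also fine.

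The gap is in your treatment of the $m>n$ regime. You propose a chi-squared-type statistic with the known uniform marginal $N/m$ substituted in, together with ``heavy/light bucketing'' and ``Bernstein-type tail bounds''. This is neither what the paper's independence tester does nor what its collections tester does, and the paper explicitly argues (Section~\ref{ssec:related}) that the flatten-then-$\ell_2$-test paradigm of~\cite{DK16} fails to give optimal $\delta$-dependence. Your sketch gives no indication of how a chi-squared statistic would achieve the required $\sqrt{N\log(1/\delta)}$-type concentration; standard variance/Bernstein analysis of chi-squared statistics produces tails that are quadratic rather than linear in $\log(1/\delta)$, which loses exactly the factor you need. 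The paper's actual fix is concrete and different: it keeps the same absolute-value statistic $Z$ and the same concentration machinery (Lemma~\ref{lem:Zconc}), but exploits the known uniform $[m]$-marginal by generating each $q$-sample from a \emph{single} $p$-sample with a freshly re-randomized $[m]$-coordinate, rather than by pairing two $p$-samples. This removes the need to flatten on $[m]$ at all, and the $N_q$ bound of Lemma~\ref{lem:ENq} then goes through verbatim with the roles of $n$ and $m$ as in the claim.

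Your lower-bound sketch also has a genuine error: a mutual-information bound cannot yield $\Omega(\log(1/\delta))$ lower bounds because mutual information is capped at one bit. The paper addresses exactly this point in Section~\ref{sec:lb-ind} and replaces mutual information with a KL-divergence bound between the YES- and NO-ensemble sample distributions, invoking Fact~\ref{lem:better-dtv-ub} to convert $D\!K\!L \ll \log(1/\delta)$ into a failure-probability lower bound. The ``amplification techniques of~\cite{DGPP17}'' you cite are for a different problem and do not substitute for this step.
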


Our final result is for the problem of testing closeness between two unknown discrete distributions when we have access
to unequal sized samples from the two unknown distributions. This problem interpolates between the vanilla closeness testing task
(with equal sized samples) and the task of identity testing (where one of the two distributions is known exactly). 
For this task, we show:

\begin{theorem}[Closeness Testing with Unequal Sized Samples] \label{thm:close-unequal-main}
There exists a computationally efficient $(\eps, \delta)$-closeness tester for discrete distributions of support size $n$
that draws $O(K+k)$ samples from one distribution and $O(k)$ samples from the other, as long as 
$$k \geq C \left(n \sqrt{\log(1/\delta) / \min(n, K)} + \log(1/\delta)\right)/\eps^2 \;,$$
where $C>0$ is a universal constant.
Moreover, this sample size tradeoff is information-theoretically optimal, 
within a universal constant factor, for all $n, \eps, \delta$.
\end{theorem}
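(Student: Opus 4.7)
The plan is to adapt the Poissonized chi-squared machinery of Theorem~\ref{thm:close-main} to the asymmetric-sample setting. First, I would Poissonize: draw $\Poi(K+k)$ samples from the first distribution, yielding independent bin counts $X_i \sim \Poi((K+k)p_i)$, and $\Poi(k)$ samples from the second, yielding $Y_i \sim \Poi(k q_i)$. A direct calculation shows that
\[
U_i \;=\; \bigl(k X_i - (K+k) Y_i\bigr)^{2} \;-\; k^{2} X_i \;-\; (K+k)^{2} Y_i
\]
is an unbiased estimator of $k^{2}(K+k)^{2}(p_i - q_i)^{2}$, which motivates the weighted closeness statistic $Z = \sum_{i=1}^{n} U_i/(X_i + Y_i + r_i)$, for a small regularizer $r_i \ge 0$ chosen to tame the large-deviation tail. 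Under $p = q$ we have $\E[Z] = 0$, while under $\dtv(p,q) \ge \eps$ a standard calculation yields an $\eps^{2}$-scale lower bound on $\E[Z]$.

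Second, I would prove that $Z$ concentrates around its mean with failure probability at most $\delta$ via a Bernstein-style argument (or, equivalently, by bounding all moments of $Z$ as in the analysis underlying Theorem~\ref{thm:close-main}). Two ingredients are needed: (i) a tight bound on $\Var[Z]$ in terms of the Poisson rates and the $\ell_p$-norms of $p$ and $q$; and (ii) a bound on the maximum per-bin contribution --- controlled by $r_i$ --- so that the Bernstein deviation term depends on $\delta$ only via $\polylog(1/\delta)$. In the regime $K \gtrsim n$, the analysis essentially reduces to identity testing of $q$ against a tight empirical proxy for $p$ and yields the $\sqrt{n\log(1/\delta)}/\eps^{2}$ term; in the regime $K \ll n$, the $p$-sample is too sparse to be treated as a ground truth, and the $n\sqrt{\log(1/\delta)/K}/\eps^{2}$ term arises from $\Var[Z]$ on bins where $(K+k)p_i \ll 1$. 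The additive $\log(1/\delta)/\eps^{2}$ term is handled separately by isolating any bin whose empirical count is large enough to single-handedly decide the outcome, and falling back to a biased-coin tester with the required confidence.

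Third, for the matching lower bound I would use a Le-Cam two-point construction parameterized by the asymmetry ratio $K/k$, analogous to the lower bound for Theorem~\ref{thm:close-main}. A product prior that perturbs $q$ around a near-uniform reference $p$ on a carefully-sized sub-domain has squared Hellinger distance from its null counterpart that, tensorized over $\Poi(K)$ samples from $p$ and $\Poi(k)$ from $q$, reaches $\Theta(\log(1/\delta))$ exactly at the claimed threshold on $k$. The additive $\log(1/\delta)/\eps^{2}$ term follows from a standard reduction to distinguishing an $\eps$-biased coin from a fair one with confidence $\delta$.

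The main technical obstacle will be step two: producing a single statistic --- or a small bucketing scheme based on the observed $p$-mass --- whose variance and higher-moment bounds are \emph{simultaneously} tight across the whole range $K \in [k, \infty)$. The balanced regime $K \approx k$ and the identity-testing regime $K \gg n$ naturally call for different per-bin weightings, and the interpolating regime requires a careful choice of $r_i$ so that no extra $\log(1/\delta)$ factor is lost in the final bound.
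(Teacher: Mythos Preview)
Your proposal rests on a misreading of the paper's closeness tester: Theorem~\ref{thm:close-main} is \emph{not} proved via a chi-squared statistic. The paper's tester uses the Lipschitz absolute-difference statistic
\[
Z_i = |X_i-Y_i| + |X'_i-Y'_i| - |X_i-X'_i| - |Y_i-Y'_i|,
\]
and obtains concentration via McDiarmid. In fact the authors explicitly remark (Section~\ref{ssec:techniques}) that the chi-squared statistic of \cite{CDVV14} ``is Lipschitz, but not Lipschitz enough for the straightforward analysis to obtain an optimal bound'' --- exactly the difficulty your Bernstein step would face. The per-bin term $U_i/(X_i+Y_i+r_i)$ in your proposal still scales like $\Theta(X_i+Y_i)$ in the worst case, so the Bernstein maximum-increment factor is not uniformly bounded and the resulting deviation bound will carry a data-dependent term that you have not controlled. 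Your suggestion to ``isolate any bin whose empirical count is large'' is too vague to close this gap in the high-confidence regime.

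More importantly, your plan never explains how the \emph{extra} $K$ samples from one distribution actually buy you anything; you just Poissonize and plug them into a symmetric-looking chi-squared form. The paper's mechanism is completely different: it spends $\min(n,K)$ of the abundant-side samples on \emph{flattening} (splitting heavy bins), and then runs the same absolute-difference statistic on the flattened distributions. Flattening reduces the number $N$ of non-singleton samples, and the concentration lemma (Lemma~\ref{lem:Zconc}) gives deviation $O(\sqrt{(N+\log(1/\delta))\log(1/\delta)})$ rather than $O(\sqrt{k\log(1/\delta)})$. Because flattening only succeeds with constant probability, the algorithm is wrapped in the same \textrm{ABORT}/retry structure as the independence tester (Section~\ref{sec:indep-alg}): for \emph{any} sample multiset the abort probability is at most $1/2$, while on a random multiset the wrong-answer probability is at most $\delta$. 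The lower bound (Appendix~\ref{sec:lb-unequal}) is in the same spirit as your Le-Cam sketch but is carried out via the KL-divergence route of Section~\ref{sec:lb-ind} with an explicit mixed construction ($p_i=q_i=1/K$ with probability $K/n$, else $p_i=\eps/n$ and $q_i\in\{0,2\eps/n\}$), not a Hellinger computation.
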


\subsection{Overview of Techniques} \label{ssec:techniques}

In this section, we provide a detailed overview of our upper and lower bound
techniques. Our main technical and conceptual innovation lies in the development of our 
upper bounds. To keep this section concrete, we describe our techniques in the context
of closeness and independence testing. Our algorithms for testing collections and closeness
with unequal sized samples use very similar ideas to those of our independence tester. 

\paragraph{Closeness Tester.}
To obtain a closeness tester that performs well
in the high confidence regime, we need to design a  
test statistic that exhibits strong concentration bounds. 
A reasonable approach to enforce this requirement would be to 
ensure that the test statistic is Lipschitz in the samples, 
so that we can leverage an appropriate concentration inequality
(e.g., McDiarmid's inequality) to obtain the necessary concentration. 
We note that the chi-squared closeness tester of~\cite{CDVV14} is Lipschitz,
but not Lipschitz enough for the straightforward analysis to obtain an optimal bound.
While we conjecture that the
\cite{CDVV14} closeness tester is indeed optimal, here we develop
a new and easier to analyze closeness tester. Our new closeness
tester (and its analysis) will also be crucially used for our independence tester. 

We are now ready to describe the new statistic that 
our closeness tester relies on.
Let $X_i, Y_i$ be the number of samples assigned to bin (domain element) $i \in [n]$,
from $p$ and $q$ respectively. A natural starting point is to consider the absolute value of the
difference $|X_i - Y_i|$. Namely, we could consider the statistic $Z = \sum_{i=1}^n |X_i-Y_i|$ and 
output ``YES'' or ``NO'' based on its magnitude. 
Unfortunately, this random variable $Z$ does not have mean zero in the completeness case
(i.e., when $p=q$). Furthermore, one can construct instances where the expectation of this 
statistic is not even minimized when $p=q$.
To fix this issue, we will need to subtract an appropriate proxy for what the value
should be if $p=q$. To do this, we draw a second set of samples with $X'_i$
and $Y'_i$ samples in bin $i$ from each of the distributions. We then use
the test statistic
$$Z = \sum_{i=1}^n \left( |X_i-Y_i| + |X'_i-Y'_i| - |X_i - X'_i| - |Y_i- Y'_i| \right) \;.$$
If $p=q$, it is clear that $X_i,X'_i,Y_i,Y'_i$ are i.i.d., and so $Z$ is
mean zero. The challenging part of the proof involves showing that if $p$ is $\eps$-far from $q$, then 
$\E[Z]$ must be large. Since $Z$ is Lipschitz, it satisfies strong concentration bounds, 
and so with sufficiently many samples we can distinguish the two cases with high probability.
A careful analysis shows that this tester is indeed sample optimal for the entire parameter regime.

\paragraph{Independence Tester.}
Let $p$ be a discrete distribution on $[n] \times [m]$.
It is easy to see (and well-known) that
the independence testing problem amounts to distinguishing the case
where $p=q$ from the case that $p$ is $\eps$-far from $q$, where $q$
is the product of $p$'s marginals. Unfortunately, directly applying
Theorem~\ref{thm:close-main} to this domain of size $nm$ gives a poor
sample complexity in one of the three terms. In particular, the first term 
would be $n^{2/3} m^{2/3}$, not $n^{2/3} m^{1/3}$.  
Of course, this is an issue even for the constant confidence regime.
We thus need a better bound when this term is dominant, 
which we will obtain using tighter concentration bounds on our statistic 
$Z$ from the previous subsection.

We start by observing that if $Z$ is computed by drawing a total of
$k$ independent samples, the fact that $Z$ is Lipschitz implies a
variance bound of $O(k)$.  By McDiarmid's inequality, it follows that
$Z$ is within $O(\sqrt{k\log(1/\delta)})$ of its mean value with
probability $1-\delta$.  However, we note that the value of the output
statistic for $Z$ does not really depend on all of the samples.  In
particular, any bin (domain element) with exactly one sample drawn
from it (from the combination of $p$ and $q$) will not contribute to
the statistic. Hence, if we let $N$ be the number of non-isolated
samples, then in some sense, the variance of $Z$ will be bounded above
by $N$. Formally speaking, some technical work is needed here, because we may have gotten
unlucky and drawn samples with an unusually small value of $N$. To address this, 
we use a symmetrization argument to show that
$|Z-\E[Z]|=O(\sqrt{(N+\log(1/\delta))\log(1/\delta)})$ with
probability at least $1-\delta$ (see Lemma~\ref{lem:Zconc}). If we can
ensure that the number of non-isolated samples is not too large, this
stronger concentration bound should allow us to use fewer samples.

In order to decrease the number of non-singleton samples in our distribution, 
it is natural to want our underlying distributions 
to have small $\ell_2$ norm. An approach to achieve this is by 
using the flattening technique of~\cite{DK16}. The basic idea of flattening is to use some 
of our samples to identify the heavy bins in our distribution, and then to artificially 
subdivide these bins in order to decrease the total $\ell_2$ norm of the distribution. 
This technique is especially  useful for the product distribution $q$, as we can separately 
identify the heavy $x$-coordinates and heavy $y$-coordinates, 
rather than using what would need to be substantially more samples to identify all of the heavy pairs. 
{\em However, there are two major difficulties with using flattening in this setting. To circumvent these 
obstacles, new ideas are needed, as explained in the proceeding discussion.}

First, although flattening can be used to reduce the number of
collisions coming from samples of $q$, it will not necessarily reduce
the number of collisions from $p$-samples to acceptable levels. We get
around this issue by noting that if most of the collisions contributing to $N$
come largely from $p$-samples, then with high probability it will be case
that $Z \gg N$, in which case the larger variance term will not hurt us much.
A second, more difficult, problem to handle is this: although it is not hard to
show that flattening works {\em on average}, it simply is not true
that flattening yields a small number of collisions with sufficiently
high probability. This is a major issue in our setting, since our goal 
is to obtain the optimal sample complexity with high confidence!

To circumvent the latter problem, we will need to
substantially restructure our algorithm.  Essentially, we will pick a
set $S$ of samples once at the beginning of our algorithm. We then
randomly assign samples of $S$ to be used either to flatten $x$ and
$y$ coordinates, or to generate samples from $p$ and $q$.  If we got
unlucky and our flattening was not sufficient (because the number of
$q$-samples that collided was too large), we will try again using the
same initial set $S$ of samples, but re-randomizing the way these
samples are used.

To show that this new algorithm works, we will need to establish two statements:
\begin{enumerate}
\item For \emph{any} set of initial samples $S$, the probability that we will need to try again 
is at most $50\%$ (so, on average, we only need to try a constant number of times).
\item The probability that a given try causes our algorithm to
terminate with the wrong answer is at most $\delta$.
\end{enumerate}
Combining the second statement with the fact that on average we will
only need $O(1)$ many tries before we get an answer, the total
probability of failure will be bounded by
$\delta \E[\text{\# tries}] = O(\delta).$ This allows us to get a
high-probability bound even though our analysis of flattening only
works on average.

\paragraph{Sample Complexity Lower Bounds.}
We sketch our sample complexity lower bound for independence testing.
The corresponding lower bound for closeness testing follows as a special
case in a black-box manner.

Our lower bound proof follows the same outline as the lower bound proof in~\cite{DK16}. 
The gist of the argument in that work was that we reduced
to the following problem: We have two explicit pseudo-distributions\footnote{A ``pseudo-distribution'' is like a distribution, except not necessarily normalized to sum to one.}
$D_{\textrm{yes}}$ (over independent pseudo-distributions) and $D_{\textrm{no}}$ 
(over usually far from independent pseudo-distributions). We pick a random
pseudo-distribution from one of these families, take $\Poi(k)$ samples
from it, hand them to the algorithm, and ask the algorithm to determine
which ensemble we started with. It was shown in~\cite{DK16} that it is impossible to
do this reliably by bounding the {\em mutual information} between the samples and
the bit determining which ensemble was sampled from.

This approach, unfortunately, does not suffice for high probability bounds.
\new{\cite{DK16} worked in the constant confidence 
regime, where the mutual information is close to $0$.} In contrast, 
in the high confidence regime, the mutual information will be close to $1$.
While, in principle, bounding the mutual information away from $1$ might
suffice to prove lower bounds in the high confidence regime, the mutual
information bounds achievable with the~\cite{DK16} techniques 
are not sufficiently strong, in the sense that they can only bound
the mutual information by a quantity bigger than $1$, given enough samples. 

To overcome this technical hurdle, we replace our bounds on
mutual information with bounds on KL-divergence. Unlike the mutual
information (which is bounded by $1$ bit), the KL-divergence between our
distributions can become arbitrarily large. It is also not hard to see
that if two distributions can be distinguished with probability
$1-\delta$, the KL-divergence is $\Omega(\log(1/\delta))$. (See \Cref{lem:better-dtv-ub}.)

Given the above observation, our lower bound ensembles are identical
to the ones used in~\cite{DK16}. Furthermore, the analytic techniques
we use to bound the KL-divergence are very similar, using essentially
the same expression as an upper bound on KL-divergence as was used as
an upper bound on mutual information.  Another technical issue is that
we need to show that the reduction to our hard instance over
pseudo-distributions still works for high probability testing, which
is not difficult, but needs to be carefully checked.

\subsection{Prior and Concurrent Work} \label{ssec:related}
Prior to this work, the question of developing sample-optimal testers in the high-confidence regime has been considered
for uniformity testing (and, via Goldreich's reduction, identity testing). Specifically,~\cite{HM13-it} showed
that Paninski's uniformity tester (based on the number of unique elements) has the sample-optimal sample complexity
of $O(\sqrt{n \log(1/\delta)}/\eps^2)$ in the sublinear sample regime, i.e., when the sample size is $o(n)$. More recently,
~\cite{DGPP17} gave a different tester that achieves the optimal sample complexity $O((\sqrt{n \log(1/\delta)}+\log(1/\delta))/\eps^2)$
in the entire regime of parameters.

As already mentioned, prior to our work, uniformity was the only property for which the high confidence 
regime has been analyzed. We now comment on some closely related literature.~\cite{CDVV14} 
gave a chi-squared tester and showed that it is sample-optimal in the constant confidence regime. 
We believe that the same tester is optimal in the high-confidence regime.
However, a proof of this statement seems rather non-trivial. 
In particular, simple analyses based on McDiarmid's inequality \cite{mcdiarmid_1989} 
lead to sub-optimal sample complexity when the sample size is $\Omega(n)$.
The new closeness tester introduced in this work is arguably simpler 
with a compact analysis, and it is crucial for our much more involved independence tester.

The work of \cite{ADK15} gave an independence tester that is sample optimal '
in the constant confidence regime for the special 
case that the two dimensions have the same support size (i.e., $n=m$). 
The performance of this tester is sub-optimal in the high-confidence regime, 
as it relies on a non-Lipschitz identity tester. \cite{DK16} gave a sample-optimal independence tester 
for the general case (where $n \geq m$), which is the only known sample-optimal tester 
in the constant confidence regime for this problem. Unfortunately, this tester is also sub-optimal 
in the high-confidence regime for the following reason.
\cite{DK16} uses the flattening technique to reduce the problem under total variation ($\ell_1$) distance
to an $\ell_2$-closeness testing problem. The issue is that the $\ell_2$-testing task does not behave
well in the high probability regime, so this approach does not suffice to give optimal testers in this setting.  
While our optimal independence tester in this paper also leverages the flattening technique, it requires
several new conceptual and technical ideas.

Concurrent and independent work~\cite{Kim20} provided testers for
closeness and independence testing in the high-confidence
regime. Their algorithms distinguish between the Type 1 and Type 2
error probabilities $\alpha$ and $\beta$ respectively. Our results in
this paper correspond to the setting that $\alpha = \beta = \delta$.
Their testers have polynomial dependence on $1/\beta$ and therefore do
not perform well in our setting.  For constant $\beta$, their testers
perform better than naive amplification but still sub-optimally in
the parameter $\alpha$.  For example, their Theorem~8.1 gives a closeness tester 
with sample complexity of
$m = O(n^{2/3} \log^{2/3}(1/\alpha)/\beta^{4/3} + n^{1/2} \log(1/\alpha)/\beta^2)$.
Even for $\beta = \Theta(1)$, this is essentially quadratically worse
in $\log (1/\alpha)$ than applying Theorem~\ref{thm:close-main} with
$\delta = \alpha$.

\subsection{Organization} \label{sec:organ}
After setting up the required preliminaries in Section~\ref{sec:prelims},
we give our testing algorithms for closeness and independence
in Sections~\ref{sec:closeness-alg} and~\ref{sec:indep-alg}. In Section~\ref{sec:lb-ind}, we establish our sample complexity
lower bounds for these two problems. Our upper bounds for testing collections and closeness with unequal sized
samples are given in Appendices~\ref{sec:col} and~\ref{sec:close-unequal-alg} respectively. 
Finally, Appendix~\ref{sec:lb-unequal} proves our sample lower bound for closeness with unequal sized samples.

\section{Preliminaries} \label{sec:prelims}

\subsection{Notation}

We write $[n]$ to denote the set $\{1, \ldots, n\}$. 
We consider discrete distributions over $[n]$ with corresponding
probability mass functions $p: [n] \rightarrow [0,1]$ satisfying $\sum_{i=1}^n p_i =1.$ 
We use the notation $p_i$ to denote the probability of element
$i$ in distribution $p$. The $\ell_1$ (resp. $\ell_2$) norm of a distribution is identified with the $\ell_1$ (resp. $\ell_2$) 
norm of the corresponding vector, i.e., $\|p\|_1 = \sum_{i=1}^n p_i$ and $\|p\|_2 = \sqrt{\sum_{i=1}^n p^2_i}$. 
Similarly, the $\ell_1$ (resp. $\ell_2$) distance between distributions $p$ and $q$ is the $\ell_1$ (resp. $\ell_2$) 
norm of the vector of their difference. 
The total variation distance between distributions $p, q$ on $[n]$ is 
$\dtv(p, q) \eqdef \frac{1}{2} \cdot \|p-q\|_1$. The KL divergence between 
two discrete distributions $p$ and $q$ on $[n]$ is $D(p||q) =  \sum_i p_i \log(p_i/q_i)$.

A Poisson distribution with parameter $\lambda$ is denoted $\Poi(\lambda)$. The binomial and multinomial distributions are denoted $\Bin(n, p)$ and $\Mult(n, \{p_i\}_{i=1}^k)$, respectively.


The main concentration inequality used in our upper bounds is McDiarmid's inequality.
\begin{fact}[McDiarmid's Inequality\cite{mcdiarmid_1989}]
Let $f$ be a multivariate function with $m$ independent random inputs whose codomain is
$\R$ and such that, for each $i \in [m]$, changing the $i$th
coordinate alone can change the output by at most $c_i$ additively. Then
$\pr[\left|f(X) - \E[f(x)]\right| \geq t] \leq 2e^{-\frac{2t^2}{\sum_i c_i^2}}.$
\end{fact}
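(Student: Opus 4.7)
The plan is to prove McDiarmid's inequality via the classical Doob martingale approach combined with Hoeffding's lemma, essentially lifting the one-dimensional Hoeffding bound to sums of bounded martingale differences. Throughout, write $X = (X_1, \ldots, X_m)$ and $\mu = \E[f(X)]$.

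First, I would set up the Doob martingale $Y_i := \E[f(X) \mid X_1, \ldots, X_i]$ for $i = 0, 1, \ldots, m$, so that $Y_0 = \mu$, $Y_m = f(X)$, and the differences $D_i := Y_i - Y_{i-1}$ form a martingale difference sequence with respect to the filtration $\mathcal{F}_i = \sigma(X_1, \ldots, X_i)$. The key structural step is to show that, conditional on $\mathcal{F}_{i-1}$, the random variable $D_i$ lies in an interval of length at most $c_i$. The argument uses independence: fixing $X_1, \ldots, X_{i-1}$, define $g_i(x) := \E[f(X_1, \ldots, X_{i-1}, x, X_{i+1}, \ldots, X_m)]$, so that $D_i = g_i(X_i) - \E[g_i(X_i)]$. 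The bounded-differences hypothesis implies that $|g_i(x) - g_i(x')| \le c_i$ for any $x, x'$ in the range of $X_i$, which in turn forces $D_i$ (conditional on $\mathcal{F}_{i-1}$) to lie in an interval of length at most $c_i$.

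Second, I would apply Hoeffding's lemma conditionally: for any random variable $W$ taking values in an interval of length $c$ with $\E[W] = 0$, one has $\E[e^{\lambda W}] \le \exp(\lambda^2 c^2/8)$. Applied to $D_i$ conditional on $\mathcal{F}_{i-1}$, this gives $\E[e^{\lambda D_i} \mid \mathcal{F}_{i-1}] \le \exp(\lambda^2 c_i^2/8)$. Using the tower property and iterating over $i = m, m-1, \ldots, 1$ yields
\[
\E\!\left[\exp\!\left(\lambda \sum_{i=1}^m D_i\right)\right] \le \exp\!\left(\tfrac{\lambda^2}{8} \sum_{i=1}^m c_i^2\right).
\]
Since $\sum_{i} D_i = f(X) - \mu$, Markov's inequality applied to $e^{\lambda(f(X)-\mu)}$ gives $\pr[f(X) - \mu \ge t] \le \exp(-\lambda t + \lambda^2 \sum c_i^2 / 8)$, and optimizing over $\lambda > 0$ by taking $\lambda = 4t/\sum c_i^2$ produces $\exp(-2t^2/\sum_i c_i^2)$. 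Symmetrizing (applying the same argument to $-f$) and union-bounding over the two tails yields the factor of $2$ in front and completes the proof.

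The main obstacle is the careful verification of Step~1 — specifically, justifying that the conditional range of the martingale increment $D_i$ is bounded by $c_i$ uses the independence of the $X_j$ in a nontrivial way (one needs to write $g_i(X_i) - \E[g_i(X_i)]$ and exploit that the conditional distribution of $(X_{i+1},\ldots,X_m)$ does not depend on $X_i$). The remaining steps (Hoeffding's lemma and the Chernoff optimization) are standard once that structural fact is in hand.
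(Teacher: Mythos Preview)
Your proof is correct and is the standard Doob-martingale/Hoeffding's-lemma argument for McDiarmid's inequality. Note, however, that the paper does not supply its own proof of this statement: it is recorded as a \textbf{Fact} with a citation to \cite{mcdiarmid_1989}, so there is nothing in the paper to compare against beyond observing that your derivation is the classical one underlying that reference.
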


A commonly used method for bounding from above the total variation distance in terms of KL divergence is Pinsker's inequality. 
However, Pinsker's inequality is mainly useful when the KL divergence is small. In the high probability regime, 
the KL divergence is larger than $1$ and this gives no information about the total variation distance. 
Our sample complexity lower bounds instead use a different inequality, which is better suited for the high probability regime.

\begin{fact}[see, e.g., Lemmas 2.1 and 2.6 of \cite{tsybakov_introduction_2009}] \label{lem:better-dtv-ub}
For any pair of distributions $p, q$, we have that
$\dtv(p,q) \leq 1 - (1/2) e^{-D(p||q)}$. Equivalently, it holds $D(p||q) \geq \log(2/\delta)$, 
where $1-\delta$ is the total variation distance.
\end{fact}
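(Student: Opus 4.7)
I would prove the inequality $\dtv(p,q) \leq 1 - (1/2) e^{-D(p||q)}$ by first establishing the Bretagnolle--Huber-style bound
\[
\dtv(p,q)^2 \;\leq\; 1 - e^{-D(p||q)},
\]
and then using the factorization $1 - \dtv(p,q)^2 = (1-\dtv(p,q))(1+\dtv(p,q))$ together with the trivial estimate $1 + \dtv(p,q) \leq 2$ to peel off the stated one-sided inequality. The equivalent logarithmic form then follows by substituting $\delta = 1 - \dtv(p,q)$ and taking logarithms of both sides. The entire argument proceeds by sandwiching the Hellinger affinity $\rho(p,q) \eqdef \sum_{i=1}^n \sqrt{p_i q_i}$ between an exponential of $D(p||q)$ from below and a quadratic function of $\dtv(p,q)$ from above.

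For the lower bound $\rho(p,q) \geq e^{-D(p||q)/2}$, I would apply Jensen's inequality to the convex function $-\log$ against the law $p$:
\[
-\log \rho(p,q) \;=\; -\log \sum_i p_i \sqrt{q_i/p_i} \;\leq\; -\sum_i p_i \log\sqrt{q_i/p_i} \;=\; \tfrac{1}{2}\, D(p||q).
\]
Terms with $p_i = 0$ or $q_i = 0$ are handled by the standard conventions $0\log 0 = 0$ and $0 \log(0/0) = 0$; in the degenerate case of disjoint supports one has $\dtv(p,q) = 1$ and $D(p||q) = \infty$, so the target inequality holds trivially.

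For the upper bound $\rho(p,q)^2 \leq 1 - \dtv(p,q)^2$, I would write $\sqrt{p_i q_i} = \sqrt{\min(p_i, q_i)} \cdot \sqrt{\max(p_i, q_i)}$ and apply Cauchy--Schwarz to the vectors $\bigl(\sqrt{\min(p_i,q_i)}\bigr)_i$ and $\bigl(\sqrt{\max(p_i,q_i)}\bigr)_i$:
\[
\rho(p,q)^2 \;\leq\; \Big(\sum_i \min(p_i,q_i)\Big) \Big(\sum_i \max(p_i,q_i)\Big) \;=\; \bigl(1 - \dtv(p,q)\bigr)\bigl(1 + \dtv(p,q)\bigr),
\]
using the standard identities $\sum_i \min(p_i,q_i) = 1 - \dtv(p,q)$ and $\sum_i \max(p_i,q_i) = 1 + \dtv(p,q)$. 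Combining this with the Jensen bound gives $(1-\dtv(p,q))(1+\dtv(p,q)) \geq e^{-D(p||q)}$, and dividing by $1 + \dtv(p,q) \leq 2$ yields $1 - \dtv(p,q) \geq (1/2) e^{-D(p||q)}$, as desired. I do not anticipate a serious obstacle: both inequalities are one-line computations, and the only thing to watch is the disjoint-support corner case flagged above, which is handled by inspection.
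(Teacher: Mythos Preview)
Your proof is correct. The paper itself does not supply a proof of this fact; it simply states it with a citation to Tsybakov's book (Lemmas~2.1 and~2.6). Your argument via the Hellinger affinity---Jensen to get $\rho(p,q)\ge e^{-D(p\|q)/2}$, Cauchy--Schwarz on the $\min/\max$ decomposition to get $\rho(p,q)^2 \le (1-\dtv)(1+\dtv)$, then dividing by $1+\dtv\le 2$---is exactly the standard route taken in that reference, so there is nothing to contrast.
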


\section{Sample-Optimal Closeness Tester} \label{sec:closeness-alg}

In this section, we give our optimal closeness tester, described in pseudo-code below.
 
\medskip

\new{
\begin{algorithm}[H]
    \SetKwInOut{Input}{Input}
    \SetKwInOut{Output}{Output}
    \SetKwRepeat{Do}{do}{while}
\new{
    \Input{ sample access to distributions $p, q$ over $[n]$, $\eps>0$, and $\delta>0$.}
    \Output{``YES'' if $p =q$, ``NO'' if $\dtv(p, q) \geq \eps$; \new{both with probability at least $1-\delta$}.}
    \label{k:close-def} Set $k = C \left( n^{2/3}\log^{1/3}(1/\delta)/\eps^{4/3}+ \big(n^{1/2} \log^{1/2}(1/\delta) + \log(1/\delta)\big)/\eps^2 \right)$,
    where $C>0$ is a sufficiently large universal constant.\\
   Set $(\widetilde{m_p}, \widetilde{m_p}', \widetilde{m_q}, \widetilde{m_q}') = \Mult \left(4k, (1/4, 1/4, 1/4, 1/4)\right)$.\\
     Draw two multi-sets of independent samples from $p$ of sizes $\widetilde{m_p}, \widetilde{m_p}'$ respectively, 
          and two multi-sets of independent samples from $q$ of sizes $\widetilde{m_q}, \widetilde{m_q}'$ respectively.
          Let $\widetilde{X}=(\widetilde{X}_i)_{i=1}^n$,  $\widetilde{X}'=(\widetilde{X}'_i)_{i=1}^n$, 
          $\widetilde{Y}=(\widetilde{Y}_i)_{i=1}^n$, $\widetilde{Y}'=(\widetilde{Y}'_i)_{i=1}^n$
          be the corresponding histograms of the samples.  \\
          \label{Z:tilde-close-def} Compute the value of the random variable $\widetilde{Z}=\sum_{i=1}^n \widetilde{Z}_i$, 
            where, for $i \in [n]$, we define
            \begin{equation} \label{eqn:tilde-zi-closeness}
           \widetilde{Z}_i=|\widetilde{X}_i-\widetilde{Y}_i| + |\widetilde{X}'_i - \widetilde{Y}'_i | - |\widetilde{X}_i-\widetilde{X}'_i| 
           - |\widetilde{Y}_i-\widetilde{Y}'_i| \;.
           \end{equation}\\
           Set the threshold $T = C' \sqrt{k\log(1/\delta)}$, 
           where $C'$ is a universal constant (derived from the analysis of the algorithm).\\
           \eIf{$\widetilde{Z} \leq T$}{return ``YES"}{return ``NO"}

\caption{ \textsc{Test-Closeness$(p, q, n, \eps,\delta)$}
}
}
\end{algorithm}
}

\medskip

The main result of this section is the following theorem:

\begin{theorem}\label{thm:close-alg}
There exists a universal constant $C>0$ such that the following holds:
When 
\begin{equation} \label{eqn:k-close-bound}
k \geq C \left( n^{2/3}\log^{1/3}(1/\delta)/\eps^{4/3}+ \big(n^{1/2} \log^{1/2}(1/\delta) + \log(1/\delta)\big)/\eps^2 \right) \;,
\end{equation}
Algorithm \textsc{Test-Closeness} is an $(\eps, \delta)$-closeness tester in total variation distance.
\end{theorem}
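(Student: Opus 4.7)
The plan is to analyze the statistic $\widetilde{Z}$ by (i) establishing sharp concentration of $\widetilde{Z}$ around its expectation, and (ii) computing or lower-bounding $\E[\widetilde{Z}]$ in the completeness and soundness cases. I view the $4k$ draws as i.i.d.\ ``labeled'' samples, each independently choosing a type uniformly from $\{p, p', q, q'\}$ and then a bin from the corresponding distribution; this reproduces exactly the joint law from steps 2--3 of the algorithm. Changing any single labeled sample moves one unit of mass between two (type, bin) cells, affecting at most four of the bin-wise absolute values in \eqref{eqn:tilde-zi-closeness} by at most $1$ each. Hence $\widetilde{Z}$ is $O(1)$-Lipschitz in its $4k$ i.i.d.\ inputs, and McDiarmid's inequality gives $\pr[|\widetilde{Z}-\E[\widetilde{Z}]| \geq t] \leq 2\exp(-\Omega(t^2/k))$; taking $C'$ large enough, $|\widetilde{Z}-\E[\widetilde{Z}]| \leq T$ with probability at least $1-\delta$.

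In the completeness case ($p = q$) the four vectors $\widetilde{X}, \widetilde{X}', \widetilde{Y}, \widetilde{Y}'$ are exchangeable, so each pair of terms in \eqref{eqn:tilde-zi-closeness} has the same marginal distribution; thus $\E[\widetilde{Z}_i] = 0$ for every $i$, and the algorithm outputs ``YES'' with probability $\geq 1-\delta$. For soundness I need the matching inequality $\E[\widetilde{Z}] \geq 2T$ whenever $\|p-q\|_1 \geq 2\eps$. I use the standard Poissonization step: by a Chernoff bound, the four totals $\widetilde{m_p},\widetilde{m_p}',\widetilde{m_q},\widetilde{m_q}'$ all lie in $[k/2, 2k]$ with probability $\geq 1-\delta$ for our choice of $k$, and after this conditioning I may treat $\widetilde{X}_i \sim \Poi(a_i)$, $\widetilde{Y}_i \sim \Poi(b_i)$ (and analogously for the primed copies) as mutually independent across $i$ and type, where $a_i = k p_i$ and $b_i = k q_i$.

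The core estimate is the following per-bin lower bound:
$$\E[\widetilde{Z}_i] \;\gtrsim\; \min\!\Big(|a_i - b_i|,\ \tfrac{(a_i - b_i)^2}{\max(1,\ \sqrt{a_i + b_i})}\Big).$$
This is proved by a direct Skellam computation split into three regimes. When $|a_i-b_i| \gtrsim \sqrt{a_i+b_i}$ (``heavy''), $\E[|\Poi(a_i)-\Poi(b_i)|] \geq |a_i-b_i|$ dominates the same-distribution terms of order $\sqrt{a_i}+\sqrt{b_i}$, giving $\E[\widetilde{Z}_i] \asymp |a_i-b_i|$. When $1 \lesssim a_i+b_i$ and $|a_i-b_i| \lesssim \sqrt{a_i+b_i}$ (``Gaussian''), a second-order Taylor expansion of $a \mapsto \E[|\Poi(a)-\Poi(b)|]$ around $a = b$ yields $\E[\widetilde{Z}_i] \asymp (a_i-b_i)^2/\sqrt{a_i+b_i}$, with the first-order contributions cancelling in the combination \eqref{eqn:tilde-zi-closeness}. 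When $a_i + b_i \lesssim 1$ (``small''), the leading nonzero terms in the power series (arising from the pairs $(1,0),(0,1),(1,1),(2,0),(0,2),\ldots$) give $\E[\widetilde{Z}_i] = 2(a_i-b_i)^2 + O((a_i+b_i)^3)$.

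With this bound in hand, I split the bins by which term of the min is active. Since $\sum_i |a_i-b_i| \geq 2k\eps$, at least one class contributes $\Omega(k\eps)$ to this sum, and three Cauchy--Schwarz calculations recover exactly the three terms in \eqref{eqn:k-close-bound}: the heavy class directly gives $\E[\widetilde{Z}] \gtrsim k\eps$, forcing $k \gtrsim \log(1/\delta)/\eps^2$; the Gaussian class, via $\sum_i \sqrt{a_i+b_i} \leq \sqrt{2nk}$, gives $\E[\widetilde{Z}] \gtrsim k^{3/2}\eps^2/\sqrt{n}$, forcing $k \gtrsim \sqrt{n\log(1/\delta)}/\eps^2$; and the small class, via bin count $\leq n$, gives $\E[\widetilde{Z}] \gtrsim k^2\eps^2/n$, forcing $k \gtrsim n^{2/3}\log^{1/3}(1/\delta)/\eps^{4/3}$. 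The main obstacle is the per-bin estimate itself: because the subsequent Cauchy--Schwarz steps are each essentially tight, the constants in all three Skellam regimes (and at their transitions) must be controlled carefully, or one of the three terms in the sample complexity is lost.
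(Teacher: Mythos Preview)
Your proposal is essentially correct and tracks the paper's argument closely: McDiarmid on the $4k$ i.i.d.\ labeled samples for concentration, exchangeability for the completeness mean, a per-bin lower bound of the form $\min\{|a-b|,(a-b)^2,(a-b)^2/\sqrt{a+b}\}$, and a three-way case split plus Cauchy--Schwarz/H\"older to recover each term of~\eqref{eqn:k-close-bound}. Two points are worth flagging.

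First, your Poissonization sentence is not quite right: conditioning on $\widetilde m_p,\widetilde m_p',\widetilde m_q,\widetilde m_q'\in[k/2,2k]$ does \emph{not} make the bin counts independent Poissons. The paper handles this cleanly by introducing a genuinely Poissonized statistic $Z$ (each of the four sample sets has size $\Poi(k)$, so the bin counts really are independent Poissons), proving the expectation bounds for $Z$, and then transferring them to $\widetilde Z$ via the Lipschitz property: since $\widetilde Z=(Z\mid m=4k)$ and $Z$ is $2$-Lipschitz in the total sample count, $|\E[Z]-\E[\widetilde Z]|\le 2\,\E|m-4k|=O(\sqrt{k})$, which is swallowed by the threshold.

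Second, your route to the per-bin bound (a direct Skellam/Taylor analysis in three regimes) is workable but laborious, and as you note the constants near the regime boundaries are delicate. The paper sidesteps this entirely with a short structural argument: writing $a=b+c$ with $c\ge 0$, it couples $X_i=B_1+C_1$, $X_i'=B_2+C_2$ with $B_\ell\sim\Poi(b)$, $C_\ell\sim\Poi(c)$ independent, rewrites $\E[Z_i]$ symmetrically, and uses the elementary identity $f(x,y):=\tfrac12(|x+y|+|y-x|-2|y|)=\max\{0,|x|-|y|\}$ together with $f(x_1-x_2,y)\le\max\{f(x_1,y),f(x_2,y)\}$ to obtain
\[
\E[Z_i]\ \ge\ \E\bigl[\max\{0,\ \min(C_1,C_2)-|B_1-B_2|\}\bigr].
\]
This reduces the whole per-bin estimate to two easy probabilistic facts (a lower bound on $\Pr[\min(C_1,C_2)\ge c/2]$ and on $\Pr[|B_1-B_2|\le c/4]$), yielding all three branches of the $\min$ without any Taylor expansion or boundary matching.
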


To prove Theorem~\ref{thm:close-alg}, we will show that the
expected value of our statistic $\widetilde{Z}$ in the completeness case is sufficiently separated from the expected
value of $\widetilde{Z}$ in the soundness case, and also that the value of $\widetilde{Z}$ is highly concentrated around
its expectation in both cases. We proceed to prove these two steps in the following subsections.
We will assume that the parameter $k$ 
in Step~\ref{k:close-def} of the algorithm satisfies \eqref{eqn:k-close-bound}.

\subsection{Bounding the Expectation Gap} \label{ssec:close-exp-gap}

In this section, we will prove an $\Omega(\sqrt{k\log(1/\delta)})$ expectation 
gap between the completeness and soundness cases.
\new{We proceed by analyzing the expectation of a slightly modified random variable $Z$ obtained
by taking the number of samples drawn from $p$ and $q$ be Poisson distributed. We then relate
the expectation of $Z$ to the expectation of our actual statistic $\widetilde{Z}$.

\paragraph{Definition of modified random variable $Z$.}
Independently set $m_p= \Poi(k)$,  $m'_p= \Poi(k)$, $m_q= \Poi(k)$,  $m'_q= \Poi(k)$.
Draw two multi-sets of independent samples from $p$ of sizes $m_p, m'_p$ respectively, 
and two multi-sets of independent samples from $q$ of sizes $m_q, m'_q$ respectively.
Let $X=(X_i)_{i=1}^n$,  $X'=(X'_i)_{i=1}^n$, 
$Y=(Y_i)_{i=1}^n$, $Y'=(Y'_i)_{i=1}^n$
be the corresponding histograms of the samples.  
We will analyze the random variable 
\begin{equation} \label{eqn:z-closeness}
Z=\sum_{i=1}^n Z_i, \textrm{ where } Z_i=|X_i-Y_i | + |X'_i - Y'_i | - |X_i-X'_i| - |Y_i-Y'_i| \;.
\end{equation}

Let $m = m_p+m'_p+m_q+m'_q$ be the total number of samples drawn from $p, q$ in the definition of 
$Z$. By construction, we have that $\widetilde{Z} = Z \mid (m=4k)$. This will allow
us to argue that $\E[Z]$ and $\E[\widetilde{Z}]$ are close to each other.

\begin{claim} \label{claim:exp-close-z-tilde}
We have that $|\E[Z] - \E[\widetilde{Z}]|  = O( \sqrt{k})$.
\end{claim}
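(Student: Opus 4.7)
The approach is to couple $Z$ and $\widetilde{Z}$ through their random sample counts and exploit that $Z$ is Lipschitz in each count. By the standard Poissonization identity, the conditional distribution of $(m_p,m'_p,m_q,m'_q)$ given $m=4k$ is exactly $\Mult(4k,(1/4,1/4,1/4,1/4))$, so $\widetilde{Z}$ has the same law as $Z \mid (m=4k)$. Define
\[
f(a,b,c,d) \;:=\; \E\bigl[Z \,\big|\, m_p=a,\, m'_p=b,\, m_q=c,\, m'_q=d\bigr].
\]
Then $\E[Z] = \E[f(m_p,m'_p,m_q,m'_q)]$ while $\E[\widetilde{Z}] = \E[f(m_p,m'_p,m_q,m'_q) \mid m=4k]$, so by the triangle inequality it suffices to show that each of these expectations is within $O(\sqrt{k})$ of the single value $f(k,k,k,k)$.

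\textbf{Step 1 (Lipschitzness of $f$).} I would first establish $|f(a+1,b,c,d) - f(a,b,c,d)| \le 2$, together with the analogous bounds in the other three coordinates. To see this, couple the $(a+1)$- and $a$-sample experiments by drawing the first $a$ i.i.d.\ samples from $p$ in common and then adding one more independent $p$-sample, which lies in bin $i$ with probability $p_i$. Incrementing $X_i$ by $1$ alters only the two terms $|X_i - Y_i|$ and $|X_i - X'_i|$ of $Z_i$, each by at most $1$, and leaves every $Z_j$ with $j \neq i$ unchanged. Hence $Z$ shifts by at most $2$ pointwise, and the same bound holds in expectation. The arguments for the other coordinates are identical, using that incrementing $X'_i$, $Y_i$, or $Y'_i$ also touches exactly two of the four absolute-value terms in $Z_i$. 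Iterating gives
\[
|f(a,b,c,d) - f(k,k,k,k)| \;\le\; 2\bigl(|a-k| + |b-k| + |c-k| + |d-k|\bigr).
\]

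\textbf{Step 2 (Taking expectations).} For $\E[Z]$, each $m_\bullet \sim \Poi(k)$ satisfies $\E[|m_\bullet - k|] \le \sqrt{\Var(m_\bullet)} = \sqrt{k}$, so $|\E[Z] - f(k,k,k,k)| \le 8\sqrt{k}$. For $\E[\widetilde{Z}]$, the marginal of each $\widetilde{m}_\bullet$ under $\Mult(4k,(1/4,1/4,1/4,1/4))$ is $\Bin(4k,1/4)$ with mean $k$ and variance $3k/4$, so $\E[|\widetilde{m}_\bullet - k|] \le \sqrt{3k/4}$ and $|\E[\widetilde{Z}] - f(k,k,k,k)| \le 8\sqrt{k}$ as well. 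Combining via the triangle inequality yields $|\E[Z] - \E[\widetilde{Z}]| = O(\sqrt{k})$, as claimed. The only genuinely non-trivial ingredient in this plan is the one-sample coupling that gives the coordinatewise Lipschitz constant of $f$; once that is in hand, the rest is just comparing two probability measures on $(m_p,m'_p,m_q,m'_q)$ whose coordinates both concentrate within $O(\sqrt{k})$ of $k$.
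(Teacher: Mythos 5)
Your proof is correct and follows essentially the same route as the paper: both exploit that $Z$ is $2$-Lipschitz in the samples and that the sample counts concentrate within $O(\sqrt{k})$ of $k$. The only cosmetic difference is that you anchor both expectations to the fixed point $f(k,k,k,k)$ and compare coordinatewise, whereas the paper conditions only on the total $m$, uses $\E[\widetilde{Z}] = \E[Z \mid m = 4k]$ directly, and bounds $\lvert\E[Z\mid m=a]-\E[Z\mid m=4k]\rvert \le 2\lvert a-4k\rvert$ before averaging over $m\sim\Poi(4k)$.
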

\begin{proof}
Note that the statistic $Z$ is $2$-Lipschitz, i.e., adding a sample can change
$Z$ by at most $2$. Therefore, $|\E[Z \mid m=a] - \E[ Z \mid m=b]| \leq 2 |a-b|$. 
This implies that 
$$|\E[Z] - \E[\widetilde{Z}]|  = O( \E[ |m-4k|] ) = O( \sqrt{k})\;,$$
as desired.
\end{proof}

It therefore suffices to show that there is sufficient separation between $\E[Z]$ in the completeness
and soundness cases.} Specifically, this subsection is devoted to the proof of
the following lemma:

\begin{lemma}[Expectation Gap]\label{lem:close-exp-gap}
Let $Z$ be the statistic defined in~\eqref{eqn:z-closeness}. Then
\begin{itemize}
\item[(i)] If $p = q$ (completeness), we have that $\E[Z] = 0$.
\item[(ii)] If $\dtv(p, q) \geq \eps$ (soundness), we have that $\E[Z]=\Omega(\sqrt{k\log(1/\delta)})$.
\end{itemize}
\end{lemma}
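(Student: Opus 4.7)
Part (i) is immediate from symmetry: when $p = q$, the four histograms $(X_i), (X'_i), (Y_i), (Y'_i)$ are i.i.d., so for every $i$ the four pairs $(X_i, Y_i), (X'_i, Y'_i), (X_i, X'_i), (Y_i, Y'_i)$ are identically distributed, the four expected absolute differences in the definition of $Z_i$ coincide, and $\E[Z_i] = 0$.

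For part (ii), I would set $\phi(a, b) := \E|A - B|$ for independent $A \sim \Poi(a),\ B \sim \Poi(b)$, so that $\E[Z_i] = 2\phi(kp_i, kq_i) - \phi(kp_i, kp_i) - \phi(kq_i, kq_i)$. The plan is to prove the per-bin lower bound
\[
\E[Z_i] \;\gtrsim\; \min\!\Big(k\,|p_i - q_i|,\; \frac{k^{3/2}(p_i - q_i)^2}{\sqrt{p_i + q_i + 1/k}}\Big)
\]
and then aggregate. For \emph{heavy} bins with $k(p_i + q_i) \gtrsim 1$, I would apply a local CLT (Skellam-to-Gaussian) approximation combined with the expansion $\E|N(\mu, \sigma^2)| = \sigma\sqrt{2/\pi}\,(1 + \mu^2/(2\sigma^2)) + O(\mu^4/\sigma^3)$. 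The leading-order Gaussian contributions of the three $\phi$-pieces combine to $\sqrt{2/\pi}\cdot\sqrt{k}\,(2\sqrt{p_i + q_i} - \sqrt{2p_i} - \sqrt{2q_i})$, which is nonnegative by the AM--GM identity $(2\sqrt{p+q})^2 - (\sqrt{2p}+\sqrt{2q})^2 = 2(\sqrt{p}-\sqrt{q})^2$; the $\mu^2$-correction to $\phi(kp_i, kq_i)$ contributes an additional $\sqrt{2/\pi}\cdot k^{3/2}(p_i - q_i)^2/\sqrt{p_i + q_i}$, yielding the quadratic bound when $k(p_i - q_i)^2 \lesssim p_i + q_i$. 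When $k(p_i - q_i)^2 \gtrsim p_i + q_i$ (still heavy), $\E|N(\mu, \sigma^2)| \approx |\mu|$ makes $\phi(kp_i, kq_i) \gtrsim k|p_i - q_i|$ dominate the two centered $\phi$-terms (each of order $\sqrt{k(p_i + q_i)}$), giving the linear bound. For \emph{light} bins with $k(p_i + q_i) \ll 1$, only $0$- and $1$-sample events contribute to leading order, and a direct Poisson calculation gives $\E[Z_i] = 2k^2(p_i - q_i)^2 + O((k(p_i + q_i))^3)$, matching the formula with $1/k$ substituting for $p_i + q_i$.

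To aggregate, I would partition $[n] = L \cup S$ with $L = \{i : k(p_i - q_i)^2 \geq p_i + q_i + 1/k\}$. If $\sum_{i \in L} |p_i - q_i| \geq \eps$, then $\E[Z] \geq \sum_{i \in L} k|p_i - q_i| \geq k\eps \gtrsim \sqrt{k\log(1/\delta)}$ via the $\log(1/\delta)/\eps^2$ term of \eqref{eqn:k-close-bound}. Otherwise $\sum_{i \in S} |p_i - q_i| \geq \eps$, and Cauchy--Schwarz yields
\[
\eps^2 \;\leq\; \Big(\sum_{i \in S}\sqrt{p_i + q_i + 1/k}\Big)\Big(\sum_{i \in S}\frac{(p_i - q_i)^2}{\sqrt{p_i + q_i + 1/k}}\Big) \;\leq\; \sqrt{n(2 + n/k)}\cdot \frac{\E[Z]}{k^{3/2}}\,,
\]
so $\E[Z] \gtrsim k^{3/2}\eps^2/\sqrt{n + n^2/k}$, which is $\gtrsim \sqrt{k\log(1/\delta)}$ via the $n^{2/3}\log^{1/3}(1/\delta)/\eps^{4/3}$ term of \eqref{eqn:k-close-bound} when $k \leq n$ and via the $n^{1/2}\log^{1/2}(1/\delta)/\eps^2$ term when $k \geq n$. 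The hard part will be the sharp heavy-bin estimate: the dominant Gaussian contributions from the three $\phi$-pieces nearly cancel, so the argument must simultaneously extract both the positive AM--GM residue and the $\mu^2$-correction, and bound the Skellam-to-Gaussian approximation error (via a quantitative local limit theorem or direct Skellam PMF estimates) by a small fraction of the $k^{3/2}(p_i - q_i)^2/\sqrt{p_i + q_i}$ signal.
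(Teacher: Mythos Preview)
Your overall structure---a per-bin lower bound of the form $\E[Z_i]\gtrsim\min\bigl(k|p_i-q_i|,\;k^{3/2}(p_i-q_i)^2/\sqrt{p_i+q_i+1/k}\bigr)$ followed by aggregation---matches the paper's Claim~\ref{cl:pointwise} and its three-case analysis; your $+1/k$ regularization is a tidy way of folding the paper's $|kp_i-kq_i|^2$ and $|kp_i-kq_i|^2/\sqrt{kp_i+kq_i}$ cases into one. Your aggregation via Cauchy--Schwarz on $\sum\sqrt{p_i+q_i+1/k}$ is also correct and arguably cleaner than the paper's three-way partition plus generalized H\"older with exponents $(2,4,4)$.

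The genuine divergence is in how the per-bin bound is proved. You propose a Skellam-to-Gaussian (local CLT) expansion, extracting the leading $\sigma\sqrt{2/\pi}$ terms and the $\mu^2/\sigma$ correction. You yourself identify the difficulty: the three leading-order contributions nearly cancel, so you must show the CLT approximation errors---each of which is some fixed function of $k(p_i+q_i)$ \emph{independent of} $p_i-q_i$---are dominated by a signal that scales like $(p_i-q_i)^2$. Making this rigorous for all $p_i,q_i$ with $k(p_i+q_i)\gtrsim 1$ requires at minimum a second-order Taylor expansion of $\phi$ in $p_i-q_i$ with uniform control of the remainder, and this is not a one-line Edgeworth bound. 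The paper bypasses all of this with a completely elementary argument: writing $X_i\sim\Poi(b)+\Poi(c)$ where $c=k|p_i-q_i|$, it symmetrizes to get $\E[Z_i]\geq\E\bigl[\max\{0,\min(C_1,C_2)-|B_1-B_2|\}\bigr]$ for independent $C_1,C_2\sim\Poi(c)$, $B_1,B_2\sim\Poi(b)$, and then bounds this by elementary anti-concentration of $|B_1-B_2|$. No CLT, no error terms to track---you should look at that argument, since it delivers exactly the bound you want with far less pain.
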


Note that for each $i \in [n]$, $X_i, X'_i \sim \Poi(k p_i)$, $Y_i, Y'_i \sim \Poi(k q_i)$.
Moreover, the random variables $\{X_i, X'_i, Y_i, Y'_i\}_{i=1}^n$ are mutually independent.

The proof of Part (i) in Lemma~\ref{lem:close-exp-gap} is straightforward and holds  for all
$k \geq 1$. Since $p=q$, 
it follows that, for any fixed $i \in [n]$, the random variables $X_i, X'_i, Y_i, Y'_i$ are identically distributed.
Therefore, the random variables $|X_i-Y_i|$, $|X'_i-Y'_i|$, $|X_i-X'_i|$, and $|Y_i-Y'_i|$ are also
identically distributed, which implies that $\E[|X_i-Y_i|]=\E[|X'_i-Y'_i|]=\E[|X_i-X'_i|]=\E[|Y_i-Y'_i|]$.
Thus, $\E[Z_i] = 0$ for all $i \in [n]$, and therefore $\E[Z]=0$.

The proof of Part (ii) in Lemma~\ref{lem:close-exp-gap} is significantly more challenging.
We note that the proof of Part (ii) crucially relies on the assumption that $k$ is sufficiently large,
satisfying \eqref{eqn:k-close-bound}.

We start with the following technical claim:

\begin{claim}\label{cl:pointwise}
For all $i\in [n]$, we have that 
\begin{equation} \label{eqn:Zi-lb-close}
\E[Z_i]=\Omega\left( \min \left\{ |k p_i - k q_i|, |k p_i - k q_i|^2, \frac{|k p_i - k q_i|^2}{\sqrt{k p_i+k q_i}}\right\} \right) \;.
\end{equation}
\end{claim}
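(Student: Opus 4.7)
The plan is to derive a closed-form integral representation for $\E[Z_i]$ via characteristic functions and then lower-bound it by restricting to a short interval in frequency. Set $a := kp_i$, $b := kq_i$, assume WLOG $a\geq b$, and let $\lambda := a+b$, $\Delta := a-b$. Plugging the Poisson characteristic function $\phi_{X}(t) = \exp(a(e^{it}-1))$ into the identity $|u| = \pi^{-1}\int_{\R}(1-\cos(ut))/t^2\,dt$ and using independence of $X_i, X_i', Y_i, Y_i'$, the four absolute-value expectations combine via the algebraic identity $e^{2ac}+e^{2bc} = 2e^{\lambda c}\cosh(\Delta c)$ (with $c:=\cos t - 1$) into
\[
\E[Z_i] \;=\; \frac{2}{\pi}\int_{\R}\frac{e^{\lambda c}\bigl(\cosh(\Delta c) - \cos(\Delta\sin t)\bigr)}{t^2}\,dt,
\]
whose integrand is pointwise non-negative because $\cosh\geq 1\geq \cos$.

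I would then restrict the integral to $t\in[0,T]$ with $T:=\min(1,\sqrt{6}/\Delta)$, ensuring $|\Delta\sin t|\leq \sqrt{6}$. On this range the alternating-series upper bound $\cos y\leq 1-y^2/2+y^4/24$ together with $y^4/24\leq y^2/4$ for $|y|\leq\sqrt{6}$ gives $\cos y\leq 1-y^2/4$, and with $\cosh\geq 1$ and $\sin t\geq 2t/\pi$ on $[0,\pi/2]$ we obtain
\[
\cosh(\Delta c) - \cos(\Delta\sin t) \;\geq\; \tfrac{1}{4}(\Delta\sin t)^2 \;\geq\; \tfrac{\Delta^2 t^2}{\pi^2}.
\]
Combined with $e^{\lambda c}\geq e^{-\lambda t^2/2}$ (from $1-\cos t\leq t^2/2$), the integrand is bounded below by an absolute constant times $\Delta^2 e^{-\lambda t^2/2}$ on $[0,T]$, giving $\E[Z_i] = \Omega\bigl(\Delta^2\int_0^T e^{-\lambda t^2/2}\,dt\bigr)$.

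The final step is a case-split on $\lambda T^2$: if $\lambda T^2\leq 1$ the exponential is $\Omega(1)$ throughout $[0,T]$ and the integral is $\Omega(T)$, so $\E[Z_i]=\Omega(\Delta^2 T)$; otherwise the truncated Gaussian integral gives $\Omega(1/\sqrt{\lambda})$, so $\E[Z_i]=\Omega(\Delta^2/\sqrt{\lambda})$. Unfolding $T=\min(1,\sqrt{6}/\Delta)$ yields exactly the three branches of the claimed minimum: $\Omega(\Delta^2)$ when $\lambda\leq 1$ (then $\Delta\leq\lambda\leq 1$ forces $T=1$ and $\lambda T^2\leq 1$), $\Omega(\Delta^2/\sqrt{\lambda})$ when $\Delta\lesssim \sqrt{\lambda}$ with $\lambda\geq 1$, and $\Omega(\Delta)$ when $\Delta\gg\sqrt{\lambda}$ (where $T=\sqrt{6}/\Delta$ and $\Delta^2 T=\Omega(\Delta)$). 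The main technical care needed is the Taylor estimate in the second paragraph---the cutoff $\sqrt{6}/\Delta$ is chosen precisely so that the quartic correction $y^4/24$ in the cosine expansion does not overwhelm the quadratic leading term $y^2/2$, which is what keeps the lower bound clean; beyond that, the routine parts are the bookkeeping in the final case analysis and verifying dominated convergence to justify the characteristic function identity (standard for Poisson, whose CF is entire of moderate growth).
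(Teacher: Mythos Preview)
Your proof is correct and takes a genuinely different route from the paper's. The paper argues probabilistically: it decomposes $X_i=B_1+C_1$, $X_i'=B_2+C_2$ with $B_\ell\sim\Poi(b)$, $C_\ell\sim\Poi(c)$ (where $c=\Delta$), rewrites $\E[Z_i]$ via the function $f(x,y)=\max\{0,|x|-|y|\}$ to obtain the pointwise lower bound $\E[Z_i]\ge \E\bigl[f(\min\{C_1,C_2\},\,B_1-B_2)\bigr]$, and then bounds this expectation by elementary probability estimates (e.g.\ $\Pr[B_1=B_2]=\Omega(\min\{1,1/\sqrt b\})$) in the two regimes $c<1$ and $c\ge1$. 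Your approach instead produces the exact Fourier representation
\[
\E[Z_i]=\frac{2}{\pi}\int_{\R}\frac{e^{\lambda(\cos t-1)}\bigl(\cosh(\Delta(\cos t-1))-\cos(\Delta\sin t)\bigr)}{t^2}\,dt
\]
and lower-bounds it by restricting to a short frequency window and applying Taylor estimates on $\cos$ and $\cosh$.

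What each buys: the paper's coupling argument is self-contained and avoids any analysis, but the reduction to $f(\min\{C_1,C_2\},B_1-B_2)$ is a nontrivial trick, and the subsequent bounds on $\Pr[B_1=B_2]$ and $\Pr[|B_1-B_2|\le c/4]$ require separate pigeonhole/Chebyshev arguments. Your route is more mechanical once the integral is in hand: the nonnegativity of the integrand is immediate, and the case split on $\lambda T^2$ falls out of a single Gaussian integral estimate. It also yields an explicit closed form for $\E[Z_i]$, which could be useful elsewhere. Two minor remarks: the interchange of expectation and integral is justified by Tonelli (the integrand is nonnegative), not dominated convergence; and in the final case analysis you should note explicitly that when $\sqrt{6}<\Delta<\sqrt{6\lambda}$ you land in the $\lambda T^2>1$ branch and obtain $\Omega(\Delta^2/\sqrt\lambda)$, which still dominates $\min\{\Delta,\Delta^2/\sqrt\lambda\}$ because $\Delta\ge\sqrt\lambda$ in part of that range forces $\Delta^2/\sqrt\lambda\ge\Delta$ --- you gesture at this but the write-up would benefit from spelling it out.
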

\begin{proof}
Recall that for each $i \in [n]$, $X_i, X'_i \sim \Poi(k p_i)$, $Y_i, Y'_i \sim \Poi(k q_i)$ and that
these random variables are mutually independent. This implies that $\E[|X_i - Y_i|] = \E[|X'_i - Y'_i|]$ and 
therefore
\[
\E[Z_i]= 2 \, \E[|X_i-Y_i|]-\E[|X_i-X'_i|]-\E[|Y_i-Y'_i|] \;.
\] 
\new{Due to the absolute values in the above expression,} we can assume without 
loss of generality that $a:= k p_i  \geq k q_i = : b$. 

Let $c: =a-b \geq 0$. Then we can write that 
$X_i, X'_i  \sim \Poi(b) + \Poi(c)$ and $Y_i, Y'_i  \sim \Poi(b)$. 
Let $B_1,B_2$ and $C_1,C_2$ be mutually independent random variables with 
$B_1, B_2 \sim \Poi(b)$ and  $C_1,C_2 \sim \Poi(c)$.  Note that $B_{\ell}+C_{\ell'}$, for $\ell, \ell' \in \{1,2\}$, 
have the same distribution as $X_i$ and $X'_i$. By linearity of expectation, we can thus write
\begin{eqnarray} 
\E[Z_i]=& (1/2) \; \E\Big[ |B_1+C_1-B_2|+|B_1+C_2-B_2|+|B_1-C_1-B_2|+|B_1-C_2-B_2| \nonumber \\ 
&- |B_1+C_1-B_2-C_2| - |B_1+C_2-B_2-C_1| - 2 |B_1-B_2| \Big]  \;, \label{eqn:Zi-symmetry}
\end{eqnarray}
where the first four terms above correspond to $2 \, \E[|X_i-Y_i|]$, the fifth
and sixth terms correspond to $-\E[|X_i-X'_i|]$, and the last term corresponds to $-\E[|Y_i-Y'_i|]$.

Consider the function $f:\R^2 \to \R$ defined as $f(x, y) = \new{(1/2)} \left( |x+y|+|y-x| -2 |y| \right)$. 
By the definition of $f$ and \eqref{eqn:Zi-symmetry}, we have that
\begin{equation} \label{eqn:zi-f}
\E[Z_i]=\E \left[ f(C_1, B_1-B_2)+f(C_2, B_1-B_2)-f(C_1-C_2, B_1-B_2) \right] \;.
\end{equation}
Now observe that $f(x, y)=  \max\{0, |x|-|y|\}$ and that $f(x, y)$ is an increasing function of $|x|$.

For any $x_1,x_2 \geq 0$ and $y \in \R$, we have that  $|x_1-x_2| \leq \max\{x_1, x_2 \}$, hence
$$f(x_1-x_2, y)  = f(|x_1-x_2|, y) \leq f(\max \{ x_1, x_2 \}, y)  =\max \{ f(x_1, y), f(x_2,y) \} \;.$$ 
This implies that 
\begin{eqnarray*}
f(x_1, y)+f(x_2, y) - f(x_1-x_2, y) 
&\geq& f(x_1, y)+f(x_2, y) - \max\{f(x_1, y), f(x_2, y)\} \\
&=& \min\{f(x_1, y), f(x_2, y)\} \\
&=& f \left(\min\{x_1,x_2\}, y \right)\;.
\end{eqnarray*}
Using \eqref{eqn:zi-f}, the above inequality gives that
\begin{equation}\label{eq:Zi-lb}
\E[Z_i] \geq  \E\left[f \left(\min\{C_1,C_2\}, B_1-B_2 \right) \right] 
= \E\left[\max\big\{ 0, \min \{C_1, C_2\} - |B_1-B_2| \big\} \right] \;.
\end{equation} 
Therefore, it suffices to establish a lower bound on the RHS of \eqref{eq:Zi-lb}.
We proceed to do so by considering two complementary cases, 
based on the value of the parameter $c\geq 0$.

\medskip

\noindent {\bf Case I: $c<1$.}

In this case, we can write
\begin{eqnarray*}
\E[Z_i] &\geq& \pr\left[(\min\{C_1,C_2\}\geq 1) \wedge (B_1=B_2)\right] = \pr[C_1\geq 1]^2 \pr[B_1=B_2]\\
&\geq& \Omega \left(c^2 \, \min \left\{1, 1/\sqrt{b} \right\} \right) = \Omega\left(\min\left\{c^2,c^2/\sqrt{b}\right\}\right) \;,
\end{eqnarray*}
where the first inequality follows from \eqref{eq:Zi-lb} 
(since $\min \{C_1, C_2\} - |B_1-B_2|\} \geq 1$ under the corresponding event), 
the first equality uses the independence of $B_1, B_2$ and $C_1$, and the 
last inequality uses the fact that $\pr[C_1 \geq 1]=1-e^{-c} \geq c/2$ (since $0 \leq c< 1$) 
and that $\pr[B_1=B_2] = \Omega (\min\{1, 1/\sqrt{b}\})$. To prove the latter lower bound, we will 
use the fact that $B_1, B_2$ are i.i.d. and that their 
common distribution $B$ is supported on integers and has standard deviation $\sigma = \sqrt{b}$. 
By Chebyshev's inequality, we have that $\pr\left[|B-b| = O(\sigma)\right] \geq 1/2$.
Since $B$ is has integer support, there exists a set of integers $S$ 
with cardinality $|S| \leq 1+ O(\sigma)$ such that $\pr[B \in S] \geq 1/2$. 
Now note that $\pr[B_1 = B_2] = \sum_{i \geq 0} \pr[B = i]^2 \geq \sum_{i \in S} \pr[B = i]^2 
\geq (1/|S|) \pr[B \in S]^2 \geq 1/(4|S|)$, where the second inequality follows by the convexity
of the quadratic function. Therefore, 
$\pr[B_1 = B_2] = \Omega(1/(1+O(\sigma))) = \Omega\left(\min\{1, 1/\sigma\} \right)$, as desired.

\medskip

\noindent {\bf Case II: $c \geq 1$.}

In this case, there exists a universal constant $\delta_0>0$ such that $\delta_0 = \pr[\min\{C_1,C_2\} \geq c/2]$. 
We will show that $\pr[|B_1-B_2| \leq c/4] = \Omega(\min\{1,c/\sqrt{b}\})$. Using 
\eqref{eq:Zi-lb}, the latter inequality implies that 
\begin{eqnarray*}
\E[Z_i] &\geq&  (c/4) \, \pr\left[\min\{C_1,C_2\} \geq c/2\right] \, \pr[|B_1-B_2| \leq c/4] \\ 
            &=& (c/4) \, \delta_0  \, \Omega(\min\{1,c/\sqrt{b}\}) \\
            &=& \Omega\left(\min\{c, c^2/\sqrt{b}\}\right) \;.
\end{eqnarray*}
To establish the desired upper bound on $\pr[|B_1-B_2| \leq c/4]$, 
we apply the argument from Case I for the random variables $B'_i = \lfloor B_i/(c/4) \rfloor$, $i=1,2$.
Note that the $B'_i$ is an integer-valued random variable with standard deviation $\sigma' = 1+ O(\sqrt{b}/c)$,
and therefore $\pr[B'_1=B'_2]  = \Omega(\min\{1, 1/\sigma'\}) = \Omega(\min\{1, c/\sqrt{b}\})$.
Finally, we note that $\pr[|B_1-B_2| \leq c/4] \geq \pr[B'_1=B'_2]$.
This completes Case II.




Recall that $c=|k p_i- k q_i|$ by definition. 
The proof of Claim~\ref{cl:pointwise} is now complete.
\end{proof}

\paragraph{Proof of Lemma~\ref{lem:close-exp-gap} (ii).}
 
Suppose that $\dtv(p, q) \geq \eps$. For each bin $i \in [n]$, we assign $i$ to set $S_1, S_2, S_3$
if the $\min \left\{ |k p_i - k q_i|, |k p_i - k q_i|^2, \frac{|k p_i - k q_i|^2}{\sqrt{k p_i+k q_i}}\right\}$
is equal to $|k p_i - k q_i|$, $|k p_i - k q_i|^2$, or $\frac{|k p_i - k q_i|^2}{\sqrt{k p_i+k q_i}}$
respectively (breaking ties arbitrarily). This defines a partition of $[n]$ into three sets, $S_1, S_2, S_3$.
Since $\sum_{i=1}^n |p_i - q_i| \geq \eps/2$, for at least one $j \in \{1, 2, 3\}$ we have 
that $\sum_{i \in S_j} |p_i - q_i| \geq \eps/6$. In each of these three cases, we will use Claim~\ref{cl:pointwise}
to prove the desired expectation lower bound. 

\medskip

\noindent {\bf Case 1: $\sum_{i \in S_1} |p_i - q_i| \geq \eps/6$.}

In this case, we have that $\E[Z] = \sum_{i=1}^n \E[Z_i] \geq  \sum_{i \in S_1} \E[Z_i]  = 
\Omega (k) \sum_{i \in S_1} |p_i - q_i| = \Omega(\eps k)$.
Since $k$ is assumed to satisfy~\eqref{eqn:k-close-bound} and in particular we 
have that $k \geq C \log(1/\delta)/\eps^2$, it follows that $\E[Z]  = \Omega(\sqrt{k\log(1/\delta)})$, as desired.

\medskip

\noindent {\bf Case 2: $\sum_{i \in S_2} |p_i - q_i| \geq \eps/6$.}

In this case, we have that $\E[Z] = \sum_{i=1}^n \E[Z_i] \geq  \sum_{i \in S_2} \E[Z_i] 
= \Omega (k^2) \sum_{i \in S_2} |p_i - q_i|^2 = \Omega(k^2 \eps^2/n)$,
where the last inequality follows from Cauchy-Schwarz and the fact that $|S_2| \leq n$.
Since $k$ is assumed to satisfy  \eqref{eqn:k-close-bound} and in particular 
$k \geq C n^{2/3} \log^{1/3}(\delta)/\eps^{4/3}$, 
it follows that $\E[Z]  = \Omega(\sqrt{k\log(1/\delta)})$, as desired.

\medskip

\noindent {\bf Case 3: $\sum_{i \in S_3} |p_i - q_i| \geq \eps/6$.}
In this case, we can similarly write that 
$$\E[Z] = \sum_{i=1}^n \E[Z_i] \geq  \sum_{i \in S_3} \E[Z_i] = 
\Omega(k^{3/2}) \sum_{i \in S_3}  \frac{(p_i-q_i)^2}{(p_i+q_i)^{1/2}} = \Omega\left(k^{3/2} \eps^2/n^{1/2} \right) \;,$$
where the last bound follows from our assumption that $\sum_{i \in S_3} |p_i - q_i| \geq \eps/6$ 
and a careful application of the generalized Holder's inequality. Recall that for any triple of vectors 
$x, y, z \in \R^m$, we have that $\sum_i |x_i y_i z_i| \leq \|x\|_r \|y\|_s \|z\|_t$, where $1/r+1/s+1/t=1$. 
Using this fact, we can write 
$$\sum_{i \in S_3} |p_i - q_i| = \sum_{i \in S_3} \frac{|p_i-q_i|}{(p_i+q_i)^{1/4}} \; (p_i+q_i)^{1/4} \; 1
\leq \left( \sum_{i \in S_3} \frac{(p_i-q_i)^2}{(p_i+q_i)^{1/2}} \right)^{1/2} \left(\sum_{i \in S_3} (p_i+q_i) \right)^{1/4} 
\left(\sum_{i \in S_3} 1^4 \right)^{1/4} \;,$$
where we used $x = \left(\frac{|p_i - q_i|}{(p_i+q_i)^{1/4}}\right)_{i \in S_3}$, $y = ((p_i+q_i)^{1/4})_{i \in S_3}$, 
$z = (1)_{i \in S_3}$, and $r=2, s=t=4$. Since $\sum_{i \in S_3} (p_i+q_i) \leq 2$ and $|S_3| \leq n$, 
we get that
$\sum_{i \in S_3} \frac{(p_i-q_i)^2}{(p_i+q_i)^{1/2}} = \Omega(\eps^2/n^{1/2})$,
as desired.

We have thus shown that $\E[Z] = \Omega(k^{3/2} \eps^2/n^{1/2})$.
Since $k$ is assumed to satisfy  \eqref{eqn:k-close-bound} and in particular 
$k \geq C n^{1/2} \log^{1/2}(\delta)/\eps^{2}$, 
it follows that $\E[Z]  = \Omega(\sqrt{k\log(1/\delta)})$, as desired.

This completes the proof of Lemma~\ref{lem:close-exp-gap} (ii). \qed 

\subsection{Concentration of Test Statistic: Proof of Theorem~\ref{thm:close-alg}} \label{ssec:close-conc}

By Lemma~\ref{lem:close-exp-gap}, we have that $\E[Z]=0$ in the completeness case and 
$\E[Z]=\Omega(\sqrt{k\log(1/\delta)})$ in the soundness case respectively. Combined with Claim~\ref{claim:exp-close-z-tilde},
we have that in the completeness case $\E[\widetilde{Z}] = O(\sqrt{k})$ and in the soundness case
$\E[\widetilde{Z}] = \Omega(\sqrt{k \log(1/\delta)})$.

The random variable $\widetilde{Z}$ depends on $4k$ inputs: the
choice, for each of the $4k$ samples, of which distribution to be
drawn from and which coordinate to land in.  $\widetilde{Z}$ is
$2$-Lipschitz in these $4k$ inputs.  An application of McDiarmid's
inequality to $\widetilde{Z}$ gives that
\[
\pr \left[ \left| \widetilde{Z} - \E[\widetilde{Z}] \right|  \geq C' \sqrt{k \log(1/\delta)} \right] < 2 e^{-2\frac{(C'\sqrt{(k \log(1/\delta)})^2}{4k \cdot 4}} = 2\delta^{(C')^2/8} = 2\delta^{C'''}
\]
for some constant $C'''$.  If we apply the variable substitution $\delta \gets (\delta/2)^{1/C'''}$, the RHS above becomes $\delta$ and the number of samples only changes by a constant factor.
Therefore, our tester is correct with probability at least $1-\delta$, as desired.

\section{Sample-Optimal Independence Tester} \label{sec:indep-alg}

\subsection{Intuition and Setup}\label{ssec:setup}

\dnew{The goal in independence testing is to distinguish between $p$ and $q=p_x \times p_y$, i.e., 
the product of the marginal distributions of $p$ on the two coordinates. 
Unfortunately, we cannot simply use our closeness tester to solve this problem, 
as the sample complexity would contain an $(nm)^{2/3}\log^{1/3}(1/\delta)/\eps^{4/3}$ term, 
which is sub-optimal even for constant $\delta$. 
Instead, we must take advantage of the fact that $q$ is a product distribution.

This issue is solved in the large $\delta$ case in~\cite{DK16} by flattening. 
The idea is that the error in their test statistic can be reduced if $q$ is guaranteed 
to have small $\ell_2$ norm. To achieve this, we use flattening to split up the heavy bins. 
This can be done especially effectively for product distributions, as we can use samples 
to identify the heavy bins in the marginals rather than having to individually identify 
all of the heavy bins in the product.

To make a technique like this work in our context, there are several obstacles that must be overcome. 
The first is that we need to know how flattening can be used to improve the concentration bounds 
on our test statistic $Z$. To see why this might be the case, we note that any bins 
with only a single sample do not contribute to $Z$, and thus do not contribute to its variance. 
In fact, with some extra work we can prove stronger concentration bounds on $Z$ 
that depend on the number $N$ of non-isolated samples. 
As distributions with small $\ell_2$ norm will likely produce fewer non-isolated samples, 
this will hopefully improve our concentration bounds.

Unfortunately, while the basic flattening technique~\cite{DK16} works in the large $\delta$ regime, 
it does not work with high probability. To overcome this issue, we note that the goal 
of our flattening is actually not to produce a distribution with small $\ell_2$ norm, 
but to ensure that the number of collisions among the samples used to compute $Z$ is relatively small. 
For this we note that if we are given a fixed pool $S$ of samples from which we draw samples 
both for the purposes of flattening and for computing $Z$, it can be shown that no matter what $S$ is, 
there is always a good probability that the samples to compute $Z$ have few collisions. 
The overall strategy for our tester will be to take this fixed set of samples and repeatedly 
try different subdivisions into flattening and testing samples until we find one that works.

The most basic unit of our tester will be an algorithm called \textsc{BasicTest}, 
which runs one iteration of this strategy and returns one of ``YES'', ``NO'', or ``ABORT'', 
with the last meaning that our attempt at flattening has failed and needs to be repeated.}

\begin{algorithm}\label{alg:Full-indep}
    \SetKwInOut{Input}{Input}
    \SetKwInOut{Output}{Output}
    \SetKwRepeat{Do}{do}{while}
\new{
    \Input{Sample access to a $2$-dimensional distribution $p$ over $[n]\times [m]$
    }
    \Output{``YES" if $p\in\mathcal{P}$,  ``NO" if $\inf_{q\in\mathcal{P}} \dtv(p,q)\geq \eps$, where $\mathcal{P}$ is the set of product distributions, both with probability at least $1-\delta$.}
    $k\gets C \left( n^{2/3}\log^{1/3}(1/\delta)/\eps^{4/3}+ \big(n^{1/2} \log^{1/2}(1/\delta) + \log(1/\delta)\big)/\eps^2 \right)$,
    where $C>0$ is a sufficiently large universal constant.\\
    $S\gets 100k $ samples from $p$.\\
    $result\gets \textrm{ABORT}$\\
    \While{$result=\textrm{ABORT}$}{$result\gets \textsc{BasicTest}(S)$}
    return $result$

\caption{ \textsc{FullTest}($\overline{S}$): Given a distribution $p$ over $[n]\times [m]$ (where $n\geq m$), 
test if $p$ is a product distribution.
}
}
\end{algorithm}

\new{

The main result of this section is the following theorem:

\begin{theorem}\label{thm:ub-main}
There exists a universal constant $C>0$ such that the following holds:
When
\begin{equation}
k \geq C \left( n^{2/3} m^{1/3} \log^{1/3}(1/\delta) / \eps^{\new{4/3}} + ((nm)^{1/2}\log^{1/2}(1/\delta)+\log(1/\delta))/\eps^2 \right) \;,
\end{equation}
Algorithm \textsc{FullTest} is an $(\eps, \delta)$-independence tester in total variation distance.
\end{theorem}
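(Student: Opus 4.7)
My plan is to reduce the correctness of \textsc{FullTest} to two key properties of \textsc{BasicTest}: (a) for \emph{every} realization of the sample pool $S$, the conditional probability that a single call to \textsc{BasicTest}$(S)$ returns ABORT is at most $1/2$, and (b) the unconditional per-call probability that it returns an incorrect YES/NO answer is at most $\delta/4$. Given these two bounds, the expected number of iterations of the outer loop is at most $2$, and summing the per-iteration wrong-answer probabilities over the (random) stopping time yields an overall failure probability at most $\delta$. The main content of the proof is establishing (a) and (b) simultaneously.

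Inside \textsc{BasicTest}, I would randomly partition $S$ into four disjoint buckets: two to identify heavy coordinates along the $x$- and $y$-axes (to be used for \emph{flattening} the product $q = p_x \times p_y$), and two to provide ``fresh'' samples on the flattened domain---one batch effectively from $p$, the other effectively from $q$ (the latter obtained by cross-pairing $x$- and $y$-marginals of independent $p$-samples). After flattening, each heavy coordinate is split into sub-bins of roughly equal mass, reducing the $\ell_2$ norm of $q$ while preserving total-variation distances to the set of product distributions. On this enlarged domain $[n']\times[m']$, I would compute the closeness statistic $\widetilde{Z}$ from Section \ref{sec:closeness-alg} using two independent halves of each batch and compare it to a threshold.

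For the expectation gap, Lemma \ref{lem:close-exp-gap} applied on the flattened domain gives $\E[\widetilde{Z}] = O(\sqrt{k})$ in the completeness case and $\E[\widetilde{Z}] = \Omega(\sqrt{k \log(1/\delta)})$ in the soundness case (the latter using that $\dtv$-farness is preserved by the flattening transformation). For concentration, the plain McDiarmid bound used in Section \ref{sec:closeness-alg} is too weak here: it charges one unit of variance per sample and yields the suboptimal $(nm)^{2/3}$ term rather than $n^{2/3}m^{1/3}$. Instead, I would invoke the sharper bound (Lemma \ref{lem:Zconc}) stating $|\widetilde{Z} - \E[\widetilde{Z}]| = O\!\left(\sqrt{(N + \log(1/\delta))\log(1/\delta)}\right)$ with probability $1-\delta$, where $N$ is the number of \emph{non-isolated} samples. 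Since a bin with exactly one sample contributes nothing to $\widetilde{Z}$, this refinement---whose proof I expect to do via symmetrization together with a martingale tail inequality such as Freedman's---is what ultimately buys the $n^{2/3}m^{1/3}$ rate whenever $N$ can be controlled.

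The main obstacle, and the reason for introducing ABORT, is that the flattening step only succeeds \emph{on average}: for an unlucky realization of the random partition of $S$, the flattened $q$ may remain too concentrated and produce too many non-isolated $q$-samples. I would define ABORT to trigger precisely when the number of $q$-sample collisions exceeds a threshold $N_0$ chosen so that $N \le N_0$ makes the concentration bound beat the expectation gap. To prove property (a), I fix $S$ arbitrarily and average over the random partition: a standard flattening calculation bounds the expected $\ell_2^2$-mass of the flattened $q$, and hence $\E[N]$, by a quantity independent of $S$, and Markov's inequality then yields an abort probability of at most $1/2$. The delicate point is that collisions among $p$-samples (which flattening does not reduce) could also inflate $N$; I handle this as sketched in the overview by observing that in the regime where $p$-collisions dominate, one necessarily has $\widetilde{Z} \gg N$, so the statistic crosses the threshold and the test answers NO safely without aborting. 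Combining the controlled $N$, the expectation gap, and Lemma \ref{lem:Zconc} applied with error parameter $\delta/4$ then yields property (b), completing the argument.
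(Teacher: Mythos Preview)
Your proposal is correct and follows essentially the same approach as the paper: the two-property reduction for \textsc{FullTest}, flattening to control $N_q$, the refined concentration Lemma~\ref{lem:Zconc} in place of plain McDiarmid, and the separate treatment of $p$-collisions are all exactly what the paper does. Two small clarifications: (i) the bound you need conditionally on the sample pool is on $\E[N_q\mid \overline{S}]$, not $\E[N\mid \overline{S}]$ (the latter can be large even after flattening), and the paper's argument for this (Lemma~\ref{lem:ENq}) is a bit more delicate than a ``standard flattening calculation'' since it must hold for \emph{every} $\overline{S}$; (ii) the $p$-collision case is handled by an explicit $N_p$-vs-$N_q$ check that returns ``NO'' (line~\ref{cond:large N}), and its safety in the completeness case is proved by combining the \emph{constant}-probability bound $Z \gtrsim N_p$ (Lemma~\ref{lem:ZNpNq}) with the high-probability bound $Z \le \tau_1(N)$ from Lemma~\ref{lem:Zconc}---the high-probability conclusion is on the event $\{N_p \gg N_q\}$, not directly on $Z$.
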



}

\paragraph{Setup.}  Our independence testing procedure
\textsc{BasicTest} has the following basic structure:

\begin{enumerate}
\item Choose a large multiset set of samples $\overline{S}$.
\item Choose from $\overline{S}$ a flattening $F = (F_x, F_y)$, and possibly \textrm{ABORT}.
\item Choose from $\overline{S}$ a set $S$ of ``flattened'' samples, and possibly \textrm{ABORT}.
\item Use $S$ to compute a test statistic $Z$.
\item Accept or reject based on the test statistic.
\end{enumerate}

At various points in the process, the algorithm may choose to
\textrm{ABORT} (for example, if the number of non-singletons in $S$ is
$100\times$ more than expected).  We will show that \dnew{if the algorithm is run on a random set $\overline{S}$ of samples}, 
the probability of outputting a wrong answer is $O(\delta)$, \dnew{but that for \emph{any} set $\overline{S}$ of samples} the chance of
aborting is at most $1/2$.  Therefore, when we abort, 
we can start over from Step~2, and repeat until we output ``YES" or ``NO", 
without increasing the sample complexity and with only $O(\delta)$ failure probability.

\paragraph{Flattening.}  Flattening involves choosing a set $F$ of
samples from the distribution $p$ with marginals $p_x$ and $p_y$.  We
then flatten the rows and columns of $p$ independently, giving us a
new distribution $p^f$ with marginals $p^f_1$ and $p^f_2$. 
The following definition appears as Definition~2.4 in \cite{DK16} and describes 
a subdivision of the domain of a distribution $p$ 
that aims at reducing its $\ell_2$ norm. 
For this transformation to be useful to us, we need to always make sure 
that the domain size does not increase by more than a constant factor as a result.
\begin{definition}[\cite{DK16}]
Given a distribution $p$ on $[n]$ and a multiset $S$ of elements from  $[n]$, we define the \emph{split} distribution $p_S$ over $[n+|S|]$ as follows:  For $1\leq i\leq n$, let $f_i$ be the number of times element $i$ appears in $S$, and 
$a_i=1+f_i$.  Our new distribution $p_S$ is supported on the set $B=\{(i,j): i\in[n], 1\leq j \leq a_i\}$. In order to get a sample $(i,j)$ from $p_S$, we first draw $i$ according to $p$ and then $j$ uniformly at random from $[a_i]$.
\end{definition}

Note the following fact about split distributions:
\begin{fact}
Let $p$ and $q$ be probability distributions on $[n]$, and $S$ a given multiset of $[n]$. 
Then: (i) We can simulate a sample from $p_S$ or $q_S$ by taking a single sample from $p$ or $q$, respectively. 
(ii) It holds that $\|p_S-q_S\|_1= \|p-q\|_1$.
\end{fact}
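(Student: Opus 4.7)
Both parts follow directly from unpacking the definition of the split distribution, so the plan is to make the sampling and accounting explicit rather than invoking any deep machinery.

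For part (i), I would simply note that the definition of $p_S$ is already given as a two-stage sampling procedure: draw $i \sim p$, then draw $j$ uniformly from $[a_i]$. The second stage requires only the multiset $S$ (which determines $a_i$) and fresh internal randomness, not any further access to $p$. Hence a single sample from $p$ together with a local uniform draw produces a sample from $p_S$, and identically for $q_S$ from $q$. This is a one-line observation.

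For part (ii), the plan is to compute the probability mass function of $p_S$ and $q_S$ on the new domain $B = \{(i,j) : i \in [n], \, 1 \leq j \leq a_i\}$ and then compare the two term by term. From the sampling description, for any $(i,j) \in B$ we have $p_S(i,j) = p_i/a_i$ and $q_S(i,j) = q_i/a_i$. Therefore
\begin{align*}
\|p_S - q_S\|_1
&= \sum_{i=1}^n \sum_{j=1}^{a_i} \left| \frac{p_i}{a_i} - \frac{q_i}{a_i} \right|
= \sum_{i=1}^n a_i \cdot \frac{|p_i - q_i|}{a_i}
= \sum_{i=1}^n |p_i - q_i|
= \|p - q\|_1,
\end{align*}
as claimed. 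The key (and essentially only) observation is that splitting bin $i$ into $a_i$ equal sub-bins divides both $p_i$ and $q_i$ by the same factor $a_i$, so the contribution of bin $i$ to the $\ell_1$ distance is preserved exactly.

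There is no real obstacle here; the fact is an immediate book-keeping computation, and the only thing to be careful about is to verify that $p_S$ and $q_S$ are in fact valid probability distributions on $B$, which follows from $\sum_{i=1}^n a_i \cdot (p_i/a_i) = \sum_i p_i = 1$ and similarly for $q$. This verification will be folded into the computation above.
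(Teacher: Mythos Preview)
Your proposal is correct and is exactly the natural argument; the paper in fact states this fact without proof, treating it as immediate from the definition of the split distribution. Your two-line computation for part (ii) --- splitting bin $i$ into $a_i$ equal sub-bins divides both $p_i$ and $q_i$ by $a_i$, so the $\ell_1$ contribution is unchanged --- is precisely the intended justification.
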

When we are dealing with multidimensional distributions, it will be useful to have 
a definition of flattening only on a specific marginal.
The definition below is given for $2$-dimensional distributions, but it can be easily generalized.

\begin{definition}
Given a distribution $p$ on $[n]\times [m]$ with marginals $p_x$ and $p_y$. Also let $S$ be a multiset of elements from  $[n]$ (respectively $[m]$), we define the \emph{row-split} (respectively \emph{column-split}) distribution $p_S$ over $[n+|S|]\times [m]$ (respectively $[n]\times [m+|S|]$) as follows: in order to get a sample $((i,k),j)$ (respectively $(i,(j,k))$) from the \emph{row-split} (respectively \emph{column-split}) distribution $p_S$, 
we first draw $(i,j)$ according to $p$ and then independently draw $k$ uniformly at random from $[a_i]$ (respectively $[a_j]$).
\end{definition}

\paragraph{Test Statistic.}  Define the product distribution
$q^f := p^f_1 \times p^f_2$.  \dnew{Note that $\dtv(p^f,q^f)=\dtv(p,q)$ where $q=p_x\times p_y$. Therefore, the goal of determining whether $p$ is a product distribution or far from it is equivalent to distinguishing between $p^f=q^f$ and $p^f$ far from $q^f$.}
In addition to sampling from $p^f$, we can sample $q^f$ by taking two
samples from $p^f$: we combine the first coordinate of the first
sample with the second coordinate of the second sample.

The sample set $S$ consists of four pieces:
\begin{itemize}
\item $S_{p0}, S_{p1}$: two sets of $\Poi(k)$ samples from $p^f$.
\item $S_{q0}, S_{q1}$: two sets of $\Poi(k)$ samples from $q^f$.
\end{itemize}

We let $X^{(p0)}_{u}$ denote the number of times element $u$ appears in
$S_{p0}$, and similarly for the other three sets.  For each $u$ in the
range of $q^f$ we get the test statistic:

\[
  Z_u:=|X_u^{(p0)}-X_u^{(q0)}|+|X_u^{(p1)}-X_u^{(q1)}|-|X_u^{(p0)}-X_u^{(p1)}|-|X_u^{(q0)}-X_u^{(q1)}|
\]
Our final test statistic is the sum of this:
\[
  Z := \sum_u Z_u.
\]

Note that, if a given item $u$ appears exactly once in the entire set
$S$ of samples, then $Z_u = 0$.  We say that such a sample is a
\emph{singleton}, and define $N \leq \abs{S}$ to be the number of
non-singleton samples.

\subsection{Concentration of $Z$}

The goal of this section is to prove that the test statistic $Z$
concentrates.  We will show this happens for any setting of the
flattening $F$, and ignoring the possibility of \textrm{ABORT} (that
is, if we ran even aborted procedures to completion).  In particular, our goal is the following
lemma:
\begin{lemma}\label{lem:Zconc}
  For a fixed flattening $F$ and any $\delta > 0$, there exists a
  constant $C>0$ such that
  \[
    \pr[|Z-\E[Z]|>C\cdot\sqrt{(N+\log(1/\delta))\log(1/\delta)}]\leq \delta.
  \]
\end{lemma}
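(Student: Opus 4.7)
The plan is to prove this concentration inequality by a two-scale argument that splits $Z - \E[Z]$ into an inner contribution coming from the multinomial label-assignment conditional on the bin occupancies, and an outer contribution coming from the bin occupancies themselves. Throughout, I will exploit the key structural fact that any bin with $n_u \le 1$ total sample contributes $Z_u = 0$ deterministically, so only the $N=\sum_u n_u\,\mathbf{1}[n_u\ge 2]$ non-singleton samples drive the statistic.

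For the inner step, I would condition on the bin-occupancy vector $\vec n=(n_u)_u$. Under the Poissonization assumed in the setup, $\vec n$ has independent Poisson entries and, conditionally on $\vec n$, the four-vector $(X_u^{(p0)},X_u^{(p1)},X_u^{(q0)},X_u^{(q1)})$ within each bin is multinomial. Viewing $Z$ as a function of the label of each individual sample, $Z$ is $O(1)$-Lipschitz in each label, and this Lipschitz constant is $0$ for any sample sitting in a singleton bin. McDiarmid's inequality, applied to the multinomial label-assignments, then yields
\[
\pr\!\bigl[\,|Z-\E[Z\mid \vec n]|>t \,\big|\, \vec n\,\bigr]\le 2\exp\!\bigl(-\Omega(t^2/N)\bigr),
\]
so $|Z - \E[Z\mid \vec n]|=O(\sqrt{N\log(1/\delta)})$ with probability at least $1-\delta/3$.

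For the outer step, I would bound $|\E[Z\mid \vec n]-\E[Z]|$. Writing this as $\sum_u(g_u(n_u)-\E g_u(n_u))$, an independent sum with $g_u(n)=\E[Z_u\mid n_u=n]$ satisfying $g_u(n)=0$ for $n\le 1$ and $|g_u(n)|\le O(n)$, I would truncate at level $T=C\log(1/\delta)$ and apply Bernstein's inequality. The contribution from any bin with $n_u>T$ can be controlled with probability $1-\delta/3$ by standard Poisson tail bounds and absorbed into the additive $\log(1/\delta)$ term, while the truncated sum admits a Bernstein bound whose variance proxy is $O(\E[N])$. Finally, I would use that $N$ itself concentrates, $|N-\E[N]|=O(\sqrt{\E[N]\log(1/\delta)}+\log(1/\delta))$ with probability at least $1-\delta/3$, which lets me replace $\E[N]$ by $N+O(\log(1/\delta))$ in the bound. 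A union bound over the three sub-events of probability $\delta/3$ then delivers the claimed $\sqrt{(N+\log(1/\delta))\log(1/\delta)}$ rate.

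The main obstacle I anticipate is the outer step: while the inner McDiarmid gives the $\sqrt{N}$ rate cleanly, controlling $\mathrm{Var}(g_u(n_u))$ in terms of $\E[N_u]$ rather than $\E[n_u^2]$ requires using the singleton-vanishing of $g_u$ to cancel the linear-in-$\lambda_u$ variance contribution that would otherwise dominate when $\lambda_u\gg 1$. A naive variance bound yields $\sum_u\lambda_u^2$, which is much larger than $\E[N]$ and would produce a bound that scales with the ambient $\ell_2^2$ mass rather than with the non-singletons. If this direct Bernstein step proves too delicate, a fallback would be to symmetrize $Z$ against an independent copy $Z'$ and analyze $Z-Z'$ by a Rademacher-type argument conditional on the combined sample pool, which again reduces to a concentration problem supported on the non-singleton samples.
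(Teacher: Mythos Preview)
Your inner step is correct and clean: conditioning on the bin-occupancy vector $\vec n$, McDiarmid on the label assignments gives $|Z-\E[Z\mid\vec n]|=O(\sqrt{N\log(1/\delta)})$ exactly as you say.

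The outer step, however, does not go through as written, and the obstacle you flag is fatal rather than cosmetic. Your truncation at $T=C\log(1/\delta)$ only makes sense for bins with $\lambda_u:=2k(p^f_u+q^f_u)\ll T$; for any bin with $\lambda_u\gg\log(1/\delta)$ you will have $n_u>T$ almost surely, so ``controlling the contribution from bins with $n_u>T$ by Poisson tails'' simply discards those bins, and their contribution to $\E[Z\mid\vec n]-\E[Z]$ is not small. Without truncation the per-bin increment $g_u(n_u)-\E g_u(n_u)$ is of order $|n_u-\lambda_u|$ in the soundness case (since $g_u$ is asymptotically linear with slope $|p^f_u-q^f_u|/(p^f_u+q^f_u)$), which can be $\Theta(\sqrt{\lambda_u\log(1/\delta)})$ with probability $\delta$; plugging this into Bernstein's $M$-term gives an extra $\sqrt{\lambda_{\max}}\cdot\log(1/\delta)$ that you cannot absorb. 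So the direct Bernstein route needs substantially more than what you have sketched.

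Your fallback is in fact the paper's proof, and it sidesteps the outer step entirely. The paper symmetrizes with an independent copy $Z'$, but crucially it conditions on the \emph{combined} sample $T=S\cup S'$ and lets $M$ be the number of non-singletons in $T$; McDiarmid on the random split $T\to(S,S')$ yields $|Z-Z'|=O(\sqrt{M\log(1/\delta)})$ conditionally on $T$. Two further ingredients are then needed which your proposal does not mention. First, one must relate $M$ back to $N$: pairing the $M$ non-singletons in $T$ into $M/2$ colliding pairs, each pair lands entirely in $S$ with probability $1/4$, so by Chernoff $M=O(N+\log(1/\delta))$ with probability $1-\delta$ conditional on $T$. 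Second, one must pass from $|Z-Z'|$ to $|Z-\E[Z]|$; this uses a separate constant-probability bound $\Pr[|Z'-\E[Z]|\le C\sqrt{N'+1}]\ge 1/2$ (itself proved by the same symmetrization in expectation) together with a threshold argument to decouple the two copies. Neither of these steps is automatic, and the second in particular is where the argument earns the random $N$ inside the bound rather than $\E[N]$.
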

Intuitively, the idea is that since singletons do not change the
statistic, the variance---and concentration---of $Z$ should depend on
the number of non-singletons $N$ rather than the total number of
samples $k$.  Note that the concentration is relative to $N$, which is
also a random variable.

We show this using \emph{symmetrization}.  For the sake of analysis we
introduce an independent copy of the statistic $Z'$, generated from
another set $S'$ of samples.  Let $T = S \cup S'$ be the set of all
samples used by $Z$ and $Z'$, and let $M$ be the number of
non-singletons in $T$.

Note that we could generate these same variables in a different way:
rather than first generating $S_{p0}$ and $S'_{p0}$ with $\Poi(k)$
samples each and setting $T_{p0} = S_{p0} \cup S'_{p0}$, we can
instead first sample $T_{p0}$ with $\Poi(2k)$ samples, then randomly
assign each sample in $T_{p0}$ to one of $S_{p0}$ and $S'_{p0}$ (and
similarly for $p1, q0, q1$).  These are equivalent generative
processes.  This second process leads to the following lemma:

\begin{lemma}\label{lem:ZZ'}
  For every possible $T$, and any $\delta > 0$,
  \[
    \pr[|Z-Z'|> \sqrt{8M\log(2/\delta)} \mid T]\leq \delta.
  \]
\end{lemma}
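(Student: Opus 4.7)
The plan is to condition on $T$ and apply McDiarmid's inequality to a symmetrized process. As noted in the paragraph preceding the lemma, we may generate the samples equivalently by first drawing $T = T_{p0}\cup T_{p1}\cup T_{q0}\cup T_{q1}$ (with $\Poi(2k)$ samples in each piece), and then independently flipping a fair coin for each sample in $T$ to decide whether it lands in $S$ or $S'$. Once $T$ is fixed, the only remaining randomness is these $|T|$ independent Bernoulli assignments, and both $Z$ and $Z'$ are deterministic functions of them.

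First I would establish a bounded-difference property: flipping a single sample's assignment changes $|Z-Z'|$ by at most $4$. Inspecting the definition of $Z_u$, each of the four counts $X_u^{(p0)}, X_u^{(p1)}, X_u^{(q0)}, X_u^{(q1)}$ appears in exactly two absolute-value terms with coefficient $\pm 1$, so $Z_u$ is $2$-Lipschitz in each count; the same holds for $Z'_u$. A flip changes one $S$-count by $1$ and the corresponding $S'$-count by $1$, hence $Z$ changes by at most $2$ and $Z'$ changes by at most $2$.

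Next comes the key observation: if an item $u$ appears exactly once in the entire set $T$, then no matter how that single sample is assigned we have $Z_u = Z'_u = 0$. Indeed, whichever of $S, S'$ the lone sample lands in, exactly one of the four counts on that side equals $1$ and the rest are $0$, giving $|1-0|+|0-0|-|1-0|-|0-0|=0$, while every count on the other side is identically zero. Therefore the bounded-difference constant associated with a singleton-in-$T$ coordinate is $0$, and only the $M$ non-singleton-in-$T$ samples contribute, each with bounded-difference constant at most $4$.

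Finally, by the symmetry of the coin flips---simultaneously swapping every coin interchanges $S$ and $S'$, hence sends $(Z,Z')\mapsto(Z',Z)$ while preserving the conditional law given $T$---we have $\E[Z-Z'\mid T]=0$. Applying McDiarmid's inequality to $Z-Z'$ as a function of the $|T|$ independent Bernoulli assignments (with sum of squared bounded-difference constants at most $16M$) gives
\[
\pr[\,|Z-Z'| > t \mid T\,] \;\leq\; 2\exp\!\left(-\frac{2t^{2}}{16M}\right) \;=\; 2\exp\!\left(-\frac{t^{2}}{8M}\right).
\]
Setting $t=\sqrt{8M\log(2/\delta)}$ makes the right-hand side equal to $\delta$, as desired. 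The only step that requires care is verifying the singleton-insensitivity claim above; the Lipschitz computation and the McDiarmid application are then routine.
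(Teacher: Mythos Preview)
Your proof is correct and follows essentially the same approach as the paper: condition on $T$, use the alternative generative process with independent Bernoulli assignments, observe that each non-singleton coordinate has bounded-difference constant at most $4$ while singletons contribute $0$, note $\E[Z-Z'\mid T]=0$ by symmetry, and apply McDiarmid with $\sum c_v^2\le 16M$. Your explicit verification that $Z_u=Z'_u=0$ for singletons and your coin-swap symmetry argument are slightly more detailed than the paper's version, but the structure is identical.
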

\begin{proof}
  We apply McDiarmid's inequality, and use the alternative generative
  process.  Conditioned on $T = (T_{p0}, T_{p1}, T_{q0}, T_{q1})$, the
  only randomness lies in whether each sample $v$ is placed in $S$ or
  $S'$.  Let $c_v$ be the maximum amount that $|Z - Z'|$ can change by
  when $v \in T$ is switched between $S$ and $S'$.  Switching $v$ can
  only change $Z$ by at most $2$, and similarly for $Z'$, so
  $c_v \leq 4$.  Moreover, if $v$ is a singleton in $T$, then
  switching $v$ has zero effect on $Z$ or $Z'$, so $c_v = 0$.  Hence
  \[
    \sum_{v \in T} c_v^2 \leq 16 M.
  \]
  Since $Z$ and $Z'$ are identically distributed,
  $\E[Z - Z' \mid T] = 0$.  Therefore McDiarmid's inequality states
  that, for any $t$,
  \[
    \pr[| Z - Z'| \geq t \mid T] \leq 2 e^{-\frac{2t^2}{16M}}.
  \]
  Setting $t$ appropriately gives the result.
\end{proof}

Since our desired lemma is in terms of $N$, not $M$, we relate the
two:
\begin{lemma}\label{lem:NM}
  There exists a constant $C$ such that, for every possible $T$, and
  any $\delta > 0$,
  \[
    \pr[M > C(N + \log(1/\delta)) \mid T] \leq \delta.
  \]
\end{lemma}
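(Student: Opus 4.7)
The plan is to exploit the fact that, conditional on $T$, the only randomness in $N$ is the independent fair coin flip that assigns each sample of $T$ to either $S$ or $S'$. I will prove the stronger statement that $N \geq M/12$ with conditional failure probability at most $\exp(-M/96)$. This is at most $\delta$ once $M \geq 96\log(1/\delta)$, while when $M < 96\log(1/\delta)$ the bound $M \leq 96(N + \log(1/\delta))$ holds trivially, so taking $C = 96$ will finish the lemma.

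First, I would group the $m_u$ samples of each value $u$ with $m_u \geq 2$ into $\lfloor m_u/2 \rfloor$ disjoint pairs (where $m_u$ denotes the total multiplicity of $u$ in $T$, pooled across the four subsets $T_{p0}, T_{p1}, T_{q0}, T_{q1}$). Let $K_u$ be the number of these pairs both of whose samples land in $S$. Because the pairs are disjoint and the coin flips are independent, $K_u \sim \mathrm{Bin}(\lfloor m_u/2 \rfloor, 1/4)$, and the $K_u$ across distinct values of $u$ are mutually independent. Whenever $K_u \geq 1$, at least one pair lies entirely in $S$, which forces $n_u \geq 2$ and hence the contribution of $u$ to $N$ is exactly $n_u \geq 2K_u$; the bound is trivial when $K_u = 0$. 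Summing over $u$ gives $N \geq 2K$ with $K := \sum_u K_u \sim \mathrm{Bin}(P, 1/4)$, where $P := \sum_{u: m_u \geq 2} \lfloor m_u/2 \rfloor \geq M/3$ (using $\lfloor m/2 \rfloor \geq m/3$ for every integer $m \geq 2$).

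The last step is a standard multiplicative Chernoff bound on the lower tail of $K$: setting the deviation parameter to $1/2$ gives $\pr[K < P/8 \mid T] \leq \exp(-P/32) \leq \exp(-M/96)$, so on the complementary event $N \geq 2K \geq P/4 \geq M/12$. Combined with the two regimes above, this yields $\pr[M > 96(N + \log(1/\delta)) \mid T] \leq \delta$.

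The main thing to be careful about is that the Bernoulli trials feeding the Chernoff bound are genuinely independent: the pairing within each $u$ is a deterministic function of $T$, and since the pairs are disjoint both within and across values of $u$, the coin flips that decide ``both samples of this pair are in $S$'' are indeed mutually independent. Beyond this, no concentration inequality finer than multiplicative Chernoff is needed, and the potentially unbounded range of per-value contributions never enters because we count only paired coincidences.
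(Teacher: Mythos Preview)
Your proof is correct and follows essentially the same approach as the paper: pair up the non-singleton samples in $T$ into disjoint colliding pairs, note that each pair lands entirely in $S$ independently with probability $1/4$, apply a Chernoff lower-tail bound, and split into the two regimes $M \gtrsim \log(1/\delta)$ and $M \lesssim \log(1/\delta)$. The only difference is cosmetic: you are slightly more careful than the paper about the pairing step, writing $P = \sum_u \lfloor m_u/2 \rfloor \geq M/3$ rather than the paper's looser ``$M/2$ disjoint pairs,'' which tacitly assumes each multiplicity is even.
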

\begin{proof}
  We again use the alternative generative process.  There are $M$
  non-singletons in $T$, which means we can pair them up into $M/2$
  disjoint pairs of colliding elements.  Each such pair has a $1/4$
  chance of having both elements land in $S$, independent of every
  other pair.  Let $n$ be the number of such pairs that land entirely
  in $S$.  By a Chernoff bound:
  \[
    \pr[n \leq M/16 \mid T] \leq e^{-M/C}
  \]
  for some constant $C \geq 8$.  Now, if $T$ is such that
  $M \leq C \log(1/\delta)$, the lemma statement is trivially true.
  Otherwise, since $N \geq 2n$,
  \[
    \pr[M \geq 8N \mid T] \leq \delta
  \]
  as desired.
\end{proof}

We also need to prove a constant-probability version of the result:
\begin{lemma}\label{lem:EZ}
It holds that \[
    \pr[ \abs{Z - \E[Z]} \geq C\sqrt{N+1}] \leq 1/2.
  \]
\end{lemma}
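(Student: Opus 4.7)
The plan is to combine a Chebyshev-style variance bound on $Z$ with a reverse-concentration argument on $N$, both driven by the common quantity $\E[N]$.

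First, I would establish $\Var(Z) \leq C_1 \E[N]$ by reusing the symmetrization machinery from the proofs of Lemmas~\ref{lem:ZZ'}--\ref{lem:NM}. Introduce an i.i.d.\ copy $Z'$ from independent samples $S'$, set $T = S \cup S'$, and write $\Var(Z) = \tfrac12 \E[(Z-Z')^2]$ using that $Z,Z'$ are i.i.d. The swap symmetry of the alternative generative process gives $\E[Z-Z' \mid T] = 0$, so $\E[(Z-Z')^2 \mid T] = \Var(Z-Z' \mid T)$, and the same bounded-differences computation as in Lemma~\ref{lem:ZZ'} (singletons of $T$ contribute Lipschitz constant $0$, the remaining $M$ non-singletons contribute $O(1)$) combined with Efron--Stein yields $\Var(Z-Z' \mid T) = O(M)$. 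Taking expectations and using the pairing computation from Lemma~\ref{lem:NM} that each colliding pair in $T$ lands inside $S$ with probability $1/4$, hence $\E[N \mid T] \geq M/4$, I obtain $\Var(Z) \leq C_1 \E[N]$.

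Next, I would show $\Var(N) \leq C_2 \E[N]$ via Poissonization. The per-bin totals $t_u := X_u^{(p0)} + X_u^{(p1)} + X_u^{(q0)} + X_u^{(q1)}$ are mutually independent Poisson random variables across bins $u$, and $N = \sum_u t_u \mathbf{1}[t_u \geq 2]$, so $\Var(N) = \sum_u \Var(t_u \mathbf{1}[t_u \geq 2])$ by independence. A direct Poisson calculation then gives $\Var(t \mathbf{1}[t \geq 2]) \leq C \cdot \E[t \mathbf{1}[t \geq 2]]$ uniformly in the rate $\lambda$ (both quantities are $\Theta(\lambda^2)$ when $\lambda \ll 1$ and both are $\Theta(\lambda)$ when $\lambda \gg 1$), which yields the desired bound after summing over $u$.

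Finally, I would apply Chebyshev twice: with probability $\geq 15/16$, $|Z - \E[Z]| \leq O(\sqrt{\E[N]+1})$, and with probability $\geq 15/16$, $|N - \E[N]| \leq O(\sqrt{\E[N]})$. A case split on whether $\E[N]$ exceeds a sufficiently large absolute constant $C_0$ converts the second bound into $\sqrt{\E[N]+1} = O(\sqrt{N+1})$: if $\E[N] \geq C_0$, the Chebyshev bound on $N$ forces $N \geq \E[N]/2$ and hence $\E[N]+1 \leq 3(N+1)$, while if $\E[N] \leq C_0$ the inequality $\sqrt{\E[N]+1} = O(1) \leq O(\sqrt{N+1})$ is trivial because $N+1 \geq 1$. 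A union bound then yields $\pr[|Z - \E[Z]| > C\sqrt{N+1}] \leq 1/2$. The main delicate point is the Poisson calculation: in the heavy-bin regime $\lambda \gg 1$ the raw second moment $\E[(t\mathbf{1}[t \geq 2])^2]$ is of order $\lambda^2$, and only after centering does the variance drop to $\Theta(\lambda)$, so it is essential to work with the centered variance rather than the second moment.
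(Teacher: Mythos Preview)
Your argument is correct, but it takes a genuinely different route from the paper. The paper bounds the single quantity $\E[(Z-\E[Z])^2/(N+1)]$ directly: it introduces the independent copy $Z'$, uses Jensen (with $Z'$ independent of $(Z,N)$) to pass to $\E[(Z-Z')^2/(N+1)]$, and then shows that \emph{conditional on $T$} the ratio $(Z-Z')^2/(N+1)$ has sub-polylogarithmic tails by combining Lemma~\ref{lem:ZZ'} and Lemma~\ref{lem:NM}; this integrates to an $O(1)$ conditional expectation, and Markov on the ratio finishes. In contrast, you decouple the numerator and denominator: you bound $\Var(Z)\le C_1\E[N]$ (via symmetrization plus Efron--Stein and $\E[N\mid T]\ge M/4$) and separately $\Var(N)\le C_2\E[N]$ (via the independence of the per-bin Poisson totals and a direct computation that $\Var(t\mathbf{1}[t\ge 2])/\E[t\mathbf{1}[t\ge 2]]$ is bounded uniformly in the rate), then apply Chebyshev to each and do a case split on the size of $\E[N]$. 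The paper's version is slicker in that it reuses the two preceding lemmas verbatim and never needs to analyze $N$ on its own; your version is more elementary and modular but requires the additional Poisson variance calculation, which (as you note) only works because you take the centered variance rather than the second moment in the heavy-bin regime.
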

\begin{proof}
  We will show this with Markov's inequality, by showing
  \begin{align}\label{eq:EZ}
    \E[ (Z - \E[Z])^2 / (N+1)] = O(1)
  \end{align}
  using symmetrization.  Since $Z'$ is independent of $Z$, and by
  convexity,
  \begin{align}
    \E[(Z - \E[Z])^2 / (N+1)] &\leq \E[(Z - Z')^2 / (N+1)]\notag\\
    &= \E_T[ \E[(Z - Z')^2 / (N+1) | T] ].\label{eq:Zsym}
  \end{align}
  For any fixed $T$, by Lemma~\ref{lem:ZZ'} and Lemma~\ref{lem:NM} \new{applied with $\delta/2$} and
  a union bound we have with probability $1-\delta$ that both:
  \begin{align*}
    (Z - Z')^2 &\leq 8M \log(4/\delta)\\
    N &\geq M/C - \log(2/\delta)
  \end{align*}
  The latter equation implies $N+1 \geq M/(C \log(2/\delta))$
  , and hence
  \[
    (Z - Z')^2 / (N+1) \leq 8C \log(4/\delta) \log(2/\delta)
  \]
  with probability $1-\delta$.  This strong concentration implies a
  bound in expectation: 
  \[
    \E[(Z - Z')^2 / (N+1) | T] \leq O(8C) = O(1).
  \]
  Plugging back into \eqref{eq:Zsym} gives~\eqref{eq:EZ}, which implies the result.
\end{proof}

We now have the tools for the main result of the section.
\begin{proof}[Proof of Lemma~\ref{lem:Zconc}]
  Consider any two thresholds $\tau$ and $\tau'$, where $\tau$ is a
  random variable depending on the sampling used for $Z$ and $\tau'$
  depends on that for $Z'$.  Because $Z'$ is independent of $Z$, we
  have:
  \begin{align*}
    \pr[|Z - \E[Z]| > \tau \cap |Z' - \E[Z]| < \tau'] = \pr[|Z - \E[Z]| > \tau] \pr[|Z' - \E[Z]| < \tau'].
  \end{align*}
  On the other hand,
  \begin{align*}
    \pr[|Z - \E[Z]| > \tau \cap |Z' - \E[Z]| < \tau'] 
                                               &\leq \pr[|Z - Z'| > \tau-\tau'].
  \end{align*}
  Hence
  \begin{align}\label{eq:tau}
    \pr[|Z - \E[Z]| > \tau] \leq \pr[|Z - Z'| > \tau-\tau'] / \pr[|Z' - \E[Z]| < \tau'].
  \end{align}
  We now define these two thresholds $\tau$ and $\tau'$.

  \paragraph{Defining $\tau'$.} By Lemma~\ref{lem:EZ} applied to $Z'$,
  with $50\%$ probability we have
  \begin{align}\label{eq:tau2'}
    \abs{Z' - \E[Z]} \leq O(\sqrt{N'+1}).
  \end{align}
  Define $\tau'$ to be this RHS.

  By Lemma~\ref{lem:NM}, with $1-\delta$ probability we have
  \begin{align}
    M =O( N + \log(1/\delta)).\label{eq:NM}
  \end{align}
  (Note that we are no longer conditioning on $T$.)  Since
  $N' \leq M$, this implies that there exists a constant $C>0$
  such that
  \begin{align}\label{eq:tau'}
    \tau' \leq C \sqrt{N + \log(1/\delta)}
  \end{align}
  with probability $1-\delta$.

  \paragraph{Defining $\tau$.}
  On the other hand, combining~\eqref{eq:NM} with Lemma~\ref{lem:ZZ'},
  with $1-2\delta$ probability we have
  \[
    |Z-Z'| \leq O(\sqrt{(N + \log(1/\delta))\log(2/\delta)}).
  \]
  We would like to define $\tau$ to be this RHS plus $\tau'$, but this
  would be invalid: $\tau$ must be independent of $Z'$.  Hence we
  instead define $\tau$ to be this RHS plus
  $C\sqrt{N + \log(1/\delta)}$; by~\eqref{eq:tau'}, this is larger
  than the RHS plus $\tau'$ with $1-\delta$ probability.  Hence:
  \begin{align}\label{eq:tautau'}
    \pr[|Z-Z'| > \tau - \tau'] \leq 3\delta
  \end{align}
  for this $\tau$, which is $O(\sqrt{(N + \log(1/\delta))\log(2/\delta)})$.

  \paragraph{Combining the results.} Plugging~\eqref{eq:tautau'}
  and~\eqref{eq:tau2'} into~\eqref{eq:tau}, we have for this $\tau$ that
  \[
    \pr[|Z - \E[Z]| > \tau] \leq 3\delta / (1/2) = 6 \delta.
  \]
\new{ Using $\delta^\prime=\delta/6$ gives the desired result.}
\end{proof}

\subsection{Algorithm}

We begin with a helper algorithm \textsc{BasicTest} (i.e., Algorithm \ref{alg:Basic-indep}): 

\begin{algorithm}\label{alg:Basic-indep}
    \SetKwInOut{Input}{Input}
    \SetKwInOut{Output}{Output}

    \Input{A Multiset $\overline{S}$ of $100k$ samples from $[n]\times [m]$ with $k=C\left(\frac{n^{2/3}m^{1/3}\log^{1/3}(1/\delta)}{\eps^{4/3}}+\frac{\sqrt{nm\log(1/\delta)}}{\eps^2}+\frac{\log(1/\delta)}{\eps^2}\right)$, where $C$ is a sufficiently large universal constant.}
    \Output{Information relating to whether these samples came from an independent distribution.}
    \tcc{Choose flattening $F$}
    $F_x\gets \emptyset$\\
  $F_y\gets \emptyset$\\
  \For{$s \in \overline{S}$ }{
    $F_x=F_x\cup \{s\}$ with prob $\min\{n/100k,1/100\}$\\
    $F_y=F_y\cup \{s\}$ with prob $m/100k$ \tcp*{note that $k>m$ always.}
    }

    \If{\label{flattening}$|F_x|>10n$ or $|F_y|>10m$}{\label{large F} return \textrm{ABORT}}

    \tcc{Draw samples $S_p^f, S_q^f$}
Let $\overline{S}^\prime=\{(x_i,y_i)\}$ be a uniformly random permutation of $\overline{S}\setminus (F_x\cup F_y)$\\
Draw $\ell,\ell^\prime\sim \Poi(2k)$.\label{perm}\\
\If{\label{sampling}$2\ell+\ell^\prime>|\overline{S}^\prime|$ }{\label{large l} return \textrm{ABORT}}
Let $S_q=\{(x_{2j-1},y_{2j})\}_{j=1}^\ell , S_p=\{(x_j,y_j)\}_{j=2\ell+1}^{2\ell+\ell^\prime}$\\
Create $S_p^f,S_q^f$ by assigning to corresponding sub-bins uniformly at random\\
   Let $N_q:$ $\sharp$samples in $S_q^f$ that collide with another sample in $S_p^f\cup S_q^f$.\\
   Let $N_p:$ $\sharp$samples in $S_p^f$ that collide with another sample in $S_p^f$\\

\If{$N_q > c \max(k/m,k^2/mn)$}{\label{large Nq} return \textrm{ABORT}}
  \If(\tcp*[f]{$C'$ a sufficiently large constant}){$N_p > 20N_q+ C'\log(1/\delta)$ \label{cond:large N}}
  {return ``NO"\label{line:reject0}}
  \tcc{Compute test statistic $Z$}

   Flag each sample of $S_p^f,S_q^f$ independently with probability $1/2$.\\
   Let  $X^{(p0)}_i,X^{(q0)}_i$ be the counts for the number of times element $i$ appears \emph{flagged} in each set $S_p^f,S_q^f$ respectively and $X^{(p1)}_i,X^{(q1)}_i$ be the corresponding counts on \emph{unflagged} samples.\\
   \label{statistic}Compute the statistic $Z=\sum_{i} Z_i$, where $Z_i=|X_i^{(p0)}-X_i^{(q0)}|+|X_i^{(p1)}-X_i^{(q1)}|-|X_i^{(p0)}-X_i^{(p1)}|-|X_i^{(q0)}-X_i^{(q1)}|$.  \\
   \eIf{\new{$Z< C^\prime\cdot \sqrt{\dnew{\min(k,(k^2/(mn)+k/m))}\log(1/\delta)}$}}{return ``YES"\label{line:accept}}{return ``NO"\label{line:reject1}}
\caption{ \textsc{BasicTest}($\overline{S}$): Given a distribution $p$ over $[n]\times [m]$ (where $n\geq m$) 
test if $p$ is a product distribution.}
\end{algorithm}

\dnew{Our analysis will depend on two key facts:
\begin{enumerate}
\item For \emph{any} set of samples $\overline{S}$, the probability that \textsc{BasicTest} returns ABORT is at most $1/2$.
\item If \textsc{BasicTest} is run on a set of i.i.d. samples from $p$, the probability that it returns an incorrect answer (``NO'' if $p$ is actually independent, or ``YES'' if $p$ is $\eps$-far from independent) is at most $\delta$.
\end{enumerate}
The latter of these points will hold because our algorithm will ABORT unless $N_q$ is small. This, 
combined with \new{Lemma \ref{lem:Zconc} and Claim \ref{clm: exp gap}}, 
will imply that the output is correct (along with a separate argument (see Lemma \new{\ref{lem: large Np}}) 
for when $N\gg N_q$).

To show the first of these points, one can first use Markov to bound the probability of aborting due to $F_x$ or $F_y$ or $\ell$ or $\ell'$ being too large. The more interesting case is to show that $N_q$ is bounded with appropriate probability. This will follow from the following lemma:
}

\begin{lemma}\label{lem:ENq}
  For any set of samples $\overline{S}$,
  \[
    \E[N_q \mid \overline{S}] =O\left( \max\left(\frac{k^2}{nm}, k/m\right)\right),
  \]
  \dnew{where $N_q$ is considered to be $0$ in the case that the algorithm aborts before computing it.}
\end{lemma}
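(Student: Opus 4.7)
The plan is to upper bound $N_q$ by a collision count, condition on the flattening $F=(F_x,F_y)$, and then average using the fact that $F_x$ and $F_y$ are chosen independently given $\overline{S}$. Since an ABORT forces $N_q=0$, it suffices to bound $\widehat{N}_q$, the analogous quantity in the hypothetical run that never aborts. Every sample counted by $N_q$ collides with some other element of $S_p^f\cup S_q^f$, hence $\widehat{N}_q\le C_{qq}+C_{qp}$, where $C_{qq}$ is the number of ordered pairs $(s,s')\in S_q^f\times S_q^f$ with $s\ne s'$ and identical flattened coordinates, and $C_{qp}$ is the analogous count for $S_q^f\times S_p^f$.

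Let $\widehat{p}_x,\widehat{p}_y$ denote the empirical marginals of $\overline{S}$, let $\widehat{p}_x^f,\widehat{p}_y^f$ denote their flattenings through $F_x,F_y$, and recall $a_x(v)=1+|\{s\in F_x:s=v\}|$, with $a_y$ defined analogously. Condition on $\overline{S},F_x,F_y$. Two samples in $S_p^f\cup S_q^f$ whose pre-flattening coordinates are both $(v,w)$ collide with probability $1/(a_x(v)a_y(w))$ by independence of the uniform sub-bin assignments. For $C_{qq}$: there are $\ell(\ell-1)$ ordered pairs of distinct $q$-pair indices with $\E[\ell(\ell-1)]=4k^2$, and a uniformly random permutation of $\overline{S}'$ places both ``first halves'' at $x$-coordinate $v$ and both ``second halves'' at $y$-coordinate $w$ with joint probability $O(\widehat{p}_x(v)^2\widehat{p}_y(w)^2)$. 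Summing over $(v,w)$ and using the identity $\sum_v \widehat{p}_x(v)^2/a_x(v)=\|\widehat{p}_x^f\|_2^2$ yields
\[
  \E[C_{qq}\mid\overline{S},F_x,F_y] = O(k^2)\,\|\widehat{p}_x^f\|_2^2\,\|\widehat{p}_y^f\|_2^2,
\]
with the same bound for $C_{qp}$ upon replacing $\ell(\ell-1)$ by $\ell\ell'$.

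The crucial step now is that, conditioned on $\overline{S}$, the sets $F_x$ and $F_y$ are drawn independently (each sample enters $F_x$ with probability $\alpha:=\min\{n/100k,1/100\}$ and $F_y$ with probability $\beta:=m/100k$, independently), so $\|\widehat{p}_x^f\|_2^2$ and $\|\widehat{p}_y^f\|_2^2$ are independent random variables and their expectations factor. Using $a_x(v)\sim 1+\Bin(N_v,\alpha)$, where $N_v$ is the number of occurrences of $v$ among the $x$-coordinates of $\overline{S}$, together with the closed form $\E[1/(1+\Bin(n,p))]=(1-(1-p)^{n+1})/((n+1)p)\le 1/\max(1,(n+1)p)$, splitting $v$'s according to whether $N_v\alpha\ge 1$ yields
\[
  \E[\|\widehat{p}_x^f\|_2^2\mid\overline{S}]=O(1/(k\alpha))=O(1/\min(n,k)) \quad\text{and}\quad \E[\|\widehat{p}_y^f\|_2^2\mid\overline{S}]=O(1/m).
\]
Multiplying these with the previous display produces $\E[N_q\mid\overline{S}]=O(k^2/(\min(n,k)\,m))=O(\max(k^2/(nm),\,k/m))$, as required. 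The main technical obstacle is the combinatorial bookkeeping for without-replacement sampling from $\overline{S}'$ and the Poisson fluctuations of $\ell,\ell'$; these produce only lower-order corrections which can be absorbed into the constants, using that $|\overline{S}'|=\Theta(k)$ with overwhelming probability so that the empirical marginals of $\overline{S}'$ agree with $\widehat{p}_x,\widehat{p}_y$ up to an $O(1)$ factor.
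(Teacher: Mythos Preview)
Your approach is genuinely different from the paper's and is essentially workable, but one step is unjustified as written.

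\textbf{Comparison with the paper.} The paper does not pass through $\ell_2$-norms of flattened empirical marginals. Instead it decomposes $N_q\le \sum_i N_i+\sum_{i\ne j}N_{i,j}$, where $N_i$ records collisions of the $i$th element of $\overline S$ (when used as a $p$-sample) with $S_q^f$, and $N_{i,j}$ records collisions of the $q$-sample formed from positions $i,j$ with the rest of $S_q^f$. Each $\E[N_i]$ and $\E[N_{i,j}]$ is then bounded individually using $C_X,C_Y$ (the multiplicities of that specific sample's $x$- and $y$-values in $\overline S$) together with $\E[1/(1+F_X)]$. Your global $\ell_2$ route is a clean alternative and your computation of $\E[\|\widehat p_x^f\|_2^2\mid\overline S]=O(1/\min(n,k))$ and $\E[\|\widehat p_y^f\|_2^2\mid\overline S]=O(1/m)$ is correct and matches the paper's per-coordinate bounds.

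\textbf{The gap.} The assertion ``with the same bound for $C_{qp}$'' is not justified. For a $(q,p)$ pair the $p$-sample lands at $(v,w)$ with probability governed by the \emph{joint} empirical mass $\widehat p(v,w)$, not by the product $\widehat p_x(v)\widehat p_y(w)$. Conditioned on $F_x,F_y$ one gets
\[
\E[C_{qp}\mid \overline S,F_x,F_y]\;\asymp\;\ell\ell'\sum_{v,w}\frac{\widehat p(v,w)\,\widehat p_x(v)\,\widehat p_y(w)}{a_x(v)a_y(w)},
\]
and this sum is \emph{not} dominated by $\|\widehat p_x^f\|_2^2\|\widehat p_y^f\|_2^2$ in general: if the joint $\widehat p$ concentrates on cells $(v,w)$ where both $a_x(v)$ and $a_y(w)$ are atypically small, the joint expectation can exceed the product-of-marginals expectation. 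So your factorization step fails for $C_{qp}$ at the level you stated it. The fix is easy and stays within your framework: take the expectation over $F_x,F_y$ \emph{inside} the sum first (still using their independence), use the uniform bound $\widehat p_x(v)\,\E[1/a_x(v)]\le 1/(100k\alpha)$ for every $v$ (and similarly in $y$), and then use $\sum_{v,w}\widehat p(v,w)=1$. This gives $\E[C_{qp}\mid\overline S]=O(1/(\alpha\beta))=O(\max(k^2/(nm),k/m))$, which is in fact stronger than what you need. With this correction, together with your $C_{qq}$ bound and the routine handling of the low-probability event $|\overline S'|<\Theta(k)$ (where $N_q\le \ell$ contributes $o(1)$ in expectation), your proof goes through.
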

\begin{proof}

\dnew{Throughout this proof we will condition on $\overline{S}$. We note that $\ell\geq k/2$ except with probability exponentially small in $k$, \new{in which }
 case $N_q=O(k)$. Thus, the contribution from the case where $\ell < k/2$ is $O(1)$ and we can henceforth assume that $\ell \geq k/2$ (note that given the size of the parameters $k/m > 1$).

In order to bound $N_q$ we bound it as a sum of simpler random variables whose expectations we can bound individually. For $1\leq i\leq 100k$, we let $N_i$ be $0$ unless the $i^{th}$ element of $\overline{S}$ is in $S_p$, and in that case, it is the number of elements of $S_q^f$ that the corresponding element of $S_p^f$ collides with (with the exception that we define $N_i$ to be $0$ if $\ell < k/2$). For $1\leq i \neq j \leq 100k$ let $N_{i,j}$ be $0$ unless one of the elements of $S_q$ is obtained by taking the $x$-coordinate from the $i^{th}$ element of $\overline{S}$ and $y$-coordinate from the $j^{th}$ element of $\overline{S}$, and if so is equal to the number of other elements in $S_q^f$ that the corresponding element of $S_q^f$ collides with (with the exception that we define $N_{i,j}$ to be $0$ if $\ell < k/2$). It is easy to see that
\begin{equation}\label{NqBoundSumEqn}
N_q \leq \sum_i N_i + \sum_{i,j} N_{i,j}.
\end{equation}
Our final result will follow from two bounds:
Firstly, for all $i$, we claim that
\begin{equation}\label{NiBoundEqn}
\E[N_i] = O(\max\left(\frac{k^2}{nm}, k/m\right)/k).
\end{equation}
We also claim that for all $i,j$ that
\begin{equation}\label{NijBoundEqn}
\E[N_{i,j}] = O(\max\left(\frac{k^2}{nm}, k/m\right)/k^2).
\end{equation}

We begin with our proof of Equation \eqref{NiBoundEqn} as it is slightly easier. Assume that the $i^{th}$ element of $\overline{S}$ is $(X,Y)$. Let $C_X$ denote the number of other elements of $\overline{S}$ with the same $x$-coordinate and $C_Y$ the number with the same $y$-coordinate. Upon flattening, let $F_X$ and $F_Y$ denote the number elements of $F_x$ equal to $X$ and the number of elements of $F_y$ equal to $Y$, respectively. Note that $F_X$ is distributed as a binomial distribution $\Bin(C_X,\min(n/100k,1/100))$ and thus $\E[1/(F_X+1)] = O(1/(C_X \min(n/k,1)))$. Similarly, $\E[1/(F_Y+1)] = O(k/(C_Y m))$.

Once we have conditioned on the flattening sets $F_x$ and $F_y$, we consider $C_{X,Y}$, the number of elements of $S_q$ equal to $(X,Y)$, where $C_{XY}$ is set to $0$ if $\ell < k/2$ (recall that this case can safely be ignored in our final analysis). We claim that $\E[C_{X,Y}|F_x,F_y] \ll C_XC_Y/k$. This is because the expectation of $C_{X,Y}$ is a sum of all pairs of one of the $C_X$ elements of $S$ with the correct $x$-coordinate and one of the $C_Y$ elements of $S$ with the correct $y$-coordinate of the probability that this pair of elements is used to create an element of $S_q$. We claim that this probability is $O(1/k)$. In fact, this probability is at most $1/\ell$\new{, where $\ell\geq k/2$ due to our conditioning}. That is because even conditioning on $\ell$ and which $2\ell$ elements of $S$ are used to construct the elements of $S_q$, there is only an $O(1/\ell)\new{=O(1/k)}$ probability that the two designated elements of $S$ are adjacent to each other after the random permutation is applied.

However, once $S_q$ is fixed, each of these $C_{X,Y}$ elements that might collide with our $i^{th}$ element of $S$ only do if they are mapped to the same sub-bin. This happens only with probability $1/((1+F_X)(1+F_Y))$. Therefore, we have that:
$$
\E[N_i|C_{X,Y},F_X,F_Y]  = \frac{C_{X,Y}}{(1+F_X)(1+F_Y)}.
$$
Therefore, \new{using the fact that $F_X,F_Y$ are independent random variables,} we have that
\begin{align*}
\E[N_i] & \leq \sup_{F_X,F_Y}(\E[C_{X,Y}|F_X,F_Y])\E[1/(1+F_X)]\E[1/(1+F_Y)]\\ & = O(C_XC_Y/k)O(\max(k/n,1)/C_X)O(k/(C_Y m)) = O(\max(k/(mn),1/m)),
\end{align*}
as desired.

The proof of Equation \eqref{NijBoundEqn} is similar. Assume that the $i^{th}$ element of $\overline{S}$ has $x$-coordinate $X$ and that the $j^{th}$ element has $y$-coordinate $Y$. Let $C_X$ and $C_Y$ be the number of other elements of $\overline{S}$ with $x$-coordinate equal to $X$ and $y$-coordinate equal to $Y$, respectively. Again let $F_X$ and $F_Y$ denote the number elements of $F_x$ equal to $X$ and the number of elements of $F_y$ equal to $Y$, respectively. Once again $\E[1/(F_X+1)] = O(1/(C_X \min(n/k,1)))$ and $\E[1/(F_Y+1)] = O(k/(C_Y m))$.

We now let $C_{X,Y}$ be $0$ unless $\ell \geq k/2$ \emph{and} the $i^{th}$ and $j^{th}$ elements pair to make an element of $S_q$, and in this case define it to be the number of other elements of $S_q$ equal to $(X,Y)$. We claim now that $\E[C_{X,Y}|F_x,F_y] \ll C_XC_Y/k^2$ (note that this differs from the above because of the $k^2$ in the denominator rather than $k$). This is because $C_{X,Y}$ is the sum over the $C_XC_Y$ pairs of other elements with the correct $x$ and $y$ values of the probability that this pair of elements of $S$ \emph{and} the pair of the $i^{th}$ and $j^{th}$ elements both end up in $S_q$. Even conditioning on $F_x,F_y$ and $\ell$, the probability that the random permutation of elements put \new{the two elements of} both of these pairs next to each other is $O(1/\ell^2) = O(1/k^2)$. Thus, $\E[C_{X,Y}|F_x,F_y] \ll C_XC_Y/k^2$.

From here the argument is the same as above. Each of these $C_{X,Y}$ elements of $S_q$ has only a $1/((F_X+1)(F_Y+1))$ of colliding with our designated one after assigning them to random sub-bins. Thus, we have that
Therefore, we have that:
$$
\E[N_{i,j}|C_{X,Y},F_X,F_Y]  = \frac{C_{X,Y}}{(1+F_X)(1+F_Y)}.
$$
And thus,
\begin{align*}
\E[N_{i,j}] & \leq \sup_{F_X,F_Y}(\E[C_{X,Y}|F_X,F_Y])\E[1/(1+F_X)]\E[1/(1+F_Y)]\\ & = O(C_XC_Y/k^2)O(\max(k/n,1)/C_X)O(k/(C_Y m)) = O(\max(1/(mn),1/(km))),
\end{align*}
as desired.

Our lemma now follows from combining Equations \eqref{NqBoundSumEqn}, \eqref{NiBoundEqn} and \eqref{NijBoundEqn}.
}
\end{proof}

\dnew{We are now prepared to prove the second of our main points about \textsc{BasicTest}.}

\begin{lemma}\label{lm:abort}
For any sample multiset $\overline{S}$, the probability that \textsc{BasicTest} returns \textrm{ABORT} is at most $1/2$.
\end{lemma}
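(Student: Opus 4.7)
The plan is to bound the probability of each of the three possible ABORT events and then union-bound. There are exactly three lines in \textsc{BasicTest} that can trigger an ABORT: (a) the flattening size check $|F_x|>10n$ or $|F_y|>10m$; (b) the sample-availability check $2\ell+\ell'>|\overline{S}'|$; and (c) the collision check $N_q>c\max(k/m,k^2/(mn))$. My plan is to show each of these fails with probability at most (say) $1/6$, conditional on $\overline{S}$, and then union-bound them to at most $1/2$.

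For (a), I would use Markov's inequality. By construction $|F_x|$ is a sum of $100k$ independent Bernoulli indicators with success probability $\min\{n/(100k),1/(100)\}$, so $\E[|F_x|\mid \overline{S}]\le n$, and Markov gives $\Pr[|F_x|>10n\mid \overline{S}]\le 1/10$. Identically, $\E[|F_y|\mid \overline{S}]=m$ and $\Pr[|F_y|>10m\mid \overline{S}]\le 1/10$, so the total probability of aborting at this step is at most $1/5$.

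For (b), conditional on not having aborted at (a), the remaining sample pool has size $|\overline{S}'|\ge 100k-10n-10m\ge 80k$ (using $k\gg n,m$ which follows from the sample-complexity expression defining $k$). Since $\ell,\ell'\sim\Poi(2k)$ are independent, $\E[2\ell+\ell']=6k$, and a standard Poisson tail bound (e.g.\ a Chernoff bound) gives $\Pr[2\ell+\ell'>80k]\le e^{-\Omega(k)}$, which is certainly at most $1/6$ for the relevant range of $k$.

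For (c), which is the main part, I invoke \Cref{lem:ENq} directly: for any $\overline{S}$,
\[
\E[N_q\mid \overline{S}]=O\!\left(\max\!\left(\tfrac{k^2}{nm},\tfrac{k}{m}\right)\right).
\]
Choosing the absolute constant $c$ in the ABORT threshold large enough relative to the constant hidden in this big-$O$, Markov's inequality yields $\Pr[N_q>c\max(k/m,k^2/(mn))\mid \overline{S}]\le 1/10$. Combining the three bounds via a union bound, the total ABORT probability is at most $1/5+1/6+1/10<1/2$, as claimed. The main substantive step is (c), and it is already done by \Cref{lem:ENq}; the other two pieces are essentially trivial applications of Markov/Chernoff. \qed
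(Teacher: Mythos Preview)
Your overall structure matches the paper's: union-bound the three ABORT events, using Markov for (a) and (c) and invoking Lemma~\ref{lem:ENq} for (c). Steps (a) and (c) are fine. However, step (b) has a genuine gap.

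You claim $k\gg n$ ``follows from the sample-complexity expression defining $k$,'' and use this to deduce $100k-10n-10m\ge 80k$. This is false in general: with $m=O(1)$, constant $\eps,\delta$, and large $n$, the definition of $k$ gives $k=\Theta(n^{2/3})\ll n$. In that regime the algorithm's check $|F_x|\le 10n$ gives you essentially no useful upper bound on $|F_x|$ (it could allow $|F_x|$ much larger than $k$), so your lower bound on $|\overline{S}'|$ does not follow.

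The paper avoids this by \emph{not} conditioning on surviving step (a). Instead it bounds the unconditional event $\{2\ell+\ell'+|F_x|+|F_y|>100k\}$ directly via Markov: since $\E[|F_x|]=100k\cdot\min\{n/(100k),1/100\}=\min(n,k)\le k$ and $\E[|F_y|]=m\le k$ (the latter because $k\ge C\sqrt{nm}\ge m$), one gets $\E[2\ell+\ell'+|F_x|+|F_y|]\le 4k+2k+k+k=8k$, hence the probability is at most $8/100$. The key observation you are missing is that $\E[|F_x|]\le k$ regardless of whether $n$ is larger or smaller than $k$; the algorithm's own threshold $10n$ is not the right quantity to use here.
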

\begin{proof}

First, consider the case that \textsc{BasicTest} returns \textrm{ABORT} in line \ref{large F}, because either $|F_x|>10n$ or $|F_y|>10n$.
Note that $F_x\sim \Bin(100k,\min\{n/100k,1/100\})$ and $F_y\sim \Bin(100k,m/100k)$. Therefore, we have that: $\E[|F_x|]\leq n$ and $\E[|F_y|]= m$.  By applying Markov's inequality for each random variable and a union bound, we get that
\[
\pr[(|F_x|>10n) \vee (|F_y|>10n)]\leq 1/5 \;.
\]
The second possibility to return \textrm{ABORT} is in line \ref{large l} when $2\ell+\ell^\prime>|\overline{S}^\prime|\geq 100k-|F_x|-|F_y|$.
Thus, we need to bound: $\pr[2\ell+\ell^\prime +|F_x|+|F_y|>100k]$. Note that by linearity of expectation:
\[\E[2\ell+\ell^\prime +|F_x|+|F_y|]=\E[2\ell]+\E[\ell^\prime] +\E[|F_x|]+\E[|F_y|]\leq 4k+2k+k+k=8k \;. \]
By applying Markov's inequality again, we get that:
\[
\pr[2\ell+\ell^\prime +|F_x|+|F_y|>100k]\leq 8/100 \;.
\]
It remains to bound the chance of \textrm{ABORT} on line~\ref{large Nq}.  By
Lemma~\ref{lem:ENq} and Markov's inequality,
\[
\pr[N_q>c\cdot \max\{k/m,k^2/nm\}|\overline{S}]<1/5 \;,
\]
for some constant $c$.

Using a union bound for all the above three cases, we get that the
probability that \textsc{BasicTest} returns \textrm{ABORT} is at most
$1/5+8/100+1/5<1/2$.
\end{proof}

For the rest of the analysis, we consider running \textsc{BasicTest} on a set $\overline{S}$ of random samples from some distribution $p$ on $[n]\times[m]$. We note that we can simulate the algorithm in the following way: First, for each $i$ from $1$ to $100k$, if our algorithm wants to add an element to $F_x$ or $F_y$, we generate a random element from $p$ and add it to the appropriate set(s). \dnew{If either $|F_x|>10n$ or $|F_y|>10m$ we abort, so we will condition on $F_x$ and $F_y$ for which this does not happen.} Next, we generate an infinite sequence of elements $(x_i,y_i)$ from $p$, and let $S_q$ be the set of $(x_{2j-1},y_{2j})$ for $j\in [1,\ell]$   and $S_p$ the set of $(x_j,y_j)$ for $j\in[2\ell+1,2\ell+\ell^\prime]$. Note that conditioned on not returning \textrm{ABORT}, this gives sets $F_x,F_y,S_p,S_q$ identically distributed as \textsc{BasicTest}. However, unconditionally, it gives an $S_q$ and $S_p$ sets of $\Poi(2k)$ samples from $q:=p_x\times p_y$ and $p$, respectively. \dnew{Furthermore, we can compute $Z,N_q$ and $N$ regardless. Note that this statistic $Z$ will be an instance of the statistic computed for our closeness tester applied to the distribution $p^f$ and $q^f$. In particular, Lemmas \new{\ref{lem:Zconc}, \ref{lem:ZZ'}, \ref{lem:EZ} and \ref{lem:ZNpNq}} will still apply to it.}

\dnew{For the next several lemmas, we consider $F_x$ and $F_y$ as being fixed and $Z,N_q$ and $N$ being computed in this way regardless of potential aborts. In the next few lemmas, we wish to show that with high probability $N$ will be $O(N_q)$ if $p$ is a product distribution. This will allow us to use our bounds on $N_q$ 
as bounds on $N$ (or more precisely, allow us to reject if $N$ is not bounded in terms of $N_q$).}

\begin{lemma}\label{lem:ZNpNq}
  For a fixed set of samples $S_p^f, S_q^f$, consider the distribution
  of $Z$ over the partition into $p0/p1$ and $q0/q1$.  We have:
  \[
    \pr[Z < N_p/6 - 2 N_q - 100] < 1/2.
  \]
\end{lemma}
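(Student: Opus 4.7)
The plan is to combine a lower bound on $\E[Z]$, a deterministic lower bound $Z \geq -2N_q$, and a concentration argument via McDiarmid's inequality. The starting point is a reformulation of $Z_i$. Introducing the signed sums $P_i := X_i^{(p0)} - X_i^{(p1)}$ and $Q_i := X_i^{(q0)} - X_i^{(q1)}$, each of which is a sum of $n_i^p$ (resp.\ $n_i^q$) independent $\pm 1$ Rademachers induced by the random flagging, the identity $|x+y|+|x-y|=2\max(|x|,|y|)$ applied to $x=n_i^p-n_i^q$, $y=P_i-Q_i$ yields
\[
Z_i \;=\; \max\bigl(|n_i^p - n_i^q|,\; |P_i - Q_i|\bigr) \;-\; |P_i| \;-\; |Q_i|.
\]

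From this representation I would extract two bounds. Using $|P_i-Q_i|\geq \bigl||P_i|-|Q_i|\bigr|$, one has the deterministic inequality $Z_i \geq -2\min(|P_i|,|Q_i|)$, which is $0$ on bins with only $p$- or only $q$-samples and is at least $-2n_i^q$ on mixed bins; summing over bins and noting $\sum_{\mathrm{mixed}} n_i^q \leq N_q$ yields $Z \geq -2 N_q$ deterministically. In expectation, using the weaker bound $Z_i \geq |n_i^p - n_i^q| - |P_i| - |Q_i|$ together with $\E[|P_i|]\leq \sqrt{n_i^p}$ (Cauchy--Schwarz), a case split on whether $n_i^p \geq n_i^q$ shows $\E[Z_i] \geq n_i^p/4 - 2n_i^q$ for every bin with $n_i^p \geq 2$: the slack $\sqrt{n_i^p}+\sqrt{n_i^q}$ is dominated by $3n_i^p/4 + n_i^q$ once $n_i^p\geq 2$ and $n_i^q\geq 1$. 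Summing and using $\sum_{\mathrm{mixed}} n_i^q \leq N_q$ gives $\E[Z] \geq N_p/4 - 2 N_q$.

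For concentration, $Z$ is a function of the $|S_p^f|+|S_q^f|$ independent Rademachers, but flipping the Rademacher at a singleton sample leaves $Z$ unchanged (its bin has $Z_i=0$ identically) and flipping one at a non-singleton sample shifts $Z$ by at most $4$. McDiarmid's inequality applied to the $N \leq N_p + 2N_q$ effective coordinates therefore yields $\pr[Z < \E[Z] - C\sqrt{N}] \leq 1/2$ for an absolute constant $C$. Combining with the expectation bound, with probability at least $1/2$,
\[
Z \;\geq\; N_p/4 \;-\; 2 N_q \;-\; C\sqrt{N_p + 2 N_q}.
\]
I would then split on $N_p$: if $N_p \leq 600$, the target threshold $N_p/6 - 2N_q - 100$ is at most $-2N_q$, and the deterministic bound closes the proof with probability one; otherwise, the slack $N_p/4 - N_p/6 = N_p/12$ combined with the additive constant $100$ absorbs the $C\sqrt{N_p + 2N_q}$ deviation via AM--GM, the regime $N_q \gg N_p$ being handled by a refined expectation bound that exploits the additional positive mass $\E[Z_i] \gtrsim n_i^q - n_i^p$ contributed by $q$-heavy mixed bins (which must exist in large number whenever $N_q$ substantially exceeds $N_p$).

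The main obstacle is the per-bin expectation bound $\E[Z_i] \geq n_i^p/4 - 2n_i^q$: although it follows from standard estimates on $\E[|P_i|]$ and $\E[|Q_i|]$, it requires careful case analysis and constant-tuning because the Cauchy--Schwarz error $\sqrt{n_i^p}+\sqrt{n_i^q}$ is only dominated by the linear terms after exploiting the integrality constraints $n_i^p\geq 2, n_i^q\geq 1$.
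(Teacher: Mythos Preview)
Your Rademacher reformulation $Z_i = \max(|n_i^p-n_i^q|,|P_i-Q_i|)-|P_i|-|Q_i|$ and the deterministic bound $Z\ge -2N_q$ are correct and pleasant. The overall architecture (expectation lower bound plus concentration plus case split at $N_p\approx 600$) also matches the paper. The gap is in the concentration step.

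By applying McDiarmid to $Z$ itself, your deviation is $C\sqrt{N}$ with $N\le N_p+2N_q$, so when $N_p>600$ you need $N_p/12+100\ge C\sqrt{N_p+2N_q}$. This fails once $N_q\gg N_p^2$. Your proposed rescue---exploiting positive mass from ``$q$-heavy mixed bins which must exist in large number whenever $N_q$ substantially exceeds $N_p$''---rests on a false premise. Take $500$ bins with $(n_i^p,n_i^q)=(2,0)$ and $10^6$ bins with $(n_i^p,n_i^q)=(1,1)$: then $N_p=1000$, $N_q=10^6$, yet there is not a single $q$-heavy mixed bin (nor any bin with $n_i^q\ge 2$). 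So the mechanism you invoke simply does not fire. A careful per-bin analysis \emph{can} be pushed through (each bin contributes nonnegatively to $\E[Z]-N_p/6+2N_q$, and enough of them contribute $\Omega(1)$ to dominate $\sqrt{N}$), but that is a different and more delicate argument than the one you sketched.

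The paper sidesteps the whole issue with one clean decoupling: set $\widetilde{Z}_i = X_i^{(p0)}+X_i^{(p1)}-|X_i^{(p0)}-X_i^{(p1)}| = 2\min(X_i^{(p0)},X_i^{(p1)})$, the statistic one would get if $S_q^f$ were empty. Then (i) $\widetilde Z$ depends only on the flagging of $p$-samples, so $\var[\widetilde Z]\le 4N_p$ with no $N_q$ term; (ii) a pairing argument gives $\E[\widetilde Z]\ge N_p/3$; (iii) $|Z-\widetilde Z|\le 2N_q$ deterministically (Lipschitz plus invariance to singletons). Chebyshev on $\widetilde Z$ then yields $\widetilde Z\ge N_p/6$ with probability $\ge 3/4$ once $N_p>600$, and the $-2N_q$ shift is absorbed exactly by the threshold. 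The point is that the random fluctuation is controlled by $N_p$ alone, while $N_q$ enters only through a deterministic offset; your direct McDiarmid on $Z$ conflates the two.
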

\begin{proof}
  Let $X_i^{(p)}$ denote the number of times element $i$ appears in
  $S_p^f$, so that
  \[
    N_p = \sum_{i:X_i^{(p)} > 1}  X_i^{(p)}.
  \]
  Define the statistic $\widetilde{Z} = \sum_i \widetilde{Z}_i$, where
  \[
    \widetilde{Z_i} = X_i^{(p0)} + X_i^{(p1)} - \abs{X_i^{(p0)} - X_i^{(p1)}} = 2 \min(X_i^{(p0)},  X_i^{(p1)}),
  \]
  to be the value $Z$ would take if $S_q^f$ were empty.  Since $Z$ is
  $2$-Lipschitz and invariant to singletons, we have
  \begin{align}
    \abs{Z - \widetilde{Z}} \leq 2N_q.\label{eq:ZwtZ}
  \end{align}
  Hence, our goal is to show that $\widetilde{Z}$ is usually at least $N_p/4$.
  We have that 
  \[
    \E[\widetilde{Z_i}] \geq \floor{X_i^{(p)}/2} \;,
  \]
  because we can partition the elements $i$ into $\floor{X_i^{(p)}/2}$
  pairs, each of which has a $1/2$ chance of being divided between
  $p0$ and $p1$, and hence contributing $1$ to each of $X_i^{(p0)}$
  and $X_i^{(p1)}$, or $2$ to $\widetilde{Z_i}$.  We also have that
  \[
    \var(\widetilde{Z_i}) \leq 4X_i^{(p)} \;,
  \]
  because $\widetilde{Z_i}$ is a $2$-Lipschitz function of $X_i^{(p)}$
  independent random choices, and of course $\var(\widetilde{Z_i}) = 0$ if
  $X_i^{(p)} = 0$.  Therefore,
  \begin{align*}
    \E[\widetilde{Z}] \geq N_p / 3, \quad
    \var[\widetilde{Z}] \leq 4N_p \;.
  \end{align*}
  By Chebyshev's inequality, this means
  \[
    \pr[\widetilde{Z} < N_p/3 - 4 \sqrt{N_p}]  \leq 1/4,\quad \mbox{or }  \quad \pr[\widetilde{Z} < N_p/6 \text{ and } N_p > 600]  \leq 1/4.
  \]
  Combined with~\eqref{eq:ZwtZ}, we have
  \[
    \pr[Z < N_p/6 - 2N_q \text{ and } N_p > 600]  \leq 1/4 \;.
  \]
  But, of course, $\pr[Z < 0] = 0$, so for all $N_p$ we have that
  \[
    \pr[Z < N_p/6 - 2 N_q - 100] \leq 1/4 < 1/2 \;.
  \]
\end{proof}

\dnew{We can now bound the probability that we reject incorrectly on line \ref{line:reject0}.}

\begin{lemma}\label{lem: large Np}
If $p$ is a product distribution, then the probability that \textsc{BasicTest} returns ``NO" on line~\ref{line:reject0} is $O(\delta)$.
\end{lemma}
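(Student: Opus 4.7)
The plan is to use symmetry: when $p$ is a product distribution, the combined multiset $S_p^f\cup S_q^f$ is exchangeable between the ``$p$'' and ``$q$'' labels, so $N_p$ cannot be substantially larger than $N_q$ with high probability. Using the simulation view, condition on $F_x$ and $F_y$. Because $p=p_x\times p_y$, each simulated $q$-sample $(x_{2j-1},y_{2j})$ is a genuine i.i.d.\ draw from $p$, so $S_p\cup S_q$ is a multiset of $\ell+\ell'$ i.i.d.\ samples from $p$ and, after independent sub-bin assignments, $S_p^f\cup S_q^f$ is a multiset of $\Poi(4k)$ i.i.d.\ samples from $p^f$. By Poisson thinning, this distribution is identical to the following: draw $\Poi(4k)$ i.i.d.\ samples from $p^f$, then independently label each sample ``$p$'' or ``$q$'' with probability $1/2$ each.

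Now condition additionally on the multiset, equivalently on the counts $c_u:=a_u+b_u$ for each bin $u$ of $p^f$, where $a_u$ (resp.\ $b_u$) is the number of ``$p$''-labelled (resp.\ ``$q$''-labelled) samples in bin $u$. Then $N=\sum_{u:c_u\geq 2}c_u$, $N_p=\sum_u a_u\mathbf{1}[a_u\geq 2]$, and $N_q=\sum_{u:c_u\geq 2}b_u$ are all deterministic functions of the vector of i.i.d.\ fair coin flips giving the labels. A direct calculation gives $\E[N_q\mid\{c_u\}]=N/2$ and $\E[N_p\mid\{c_u\}]=\sum_{u:c_u\geq 2}(c_u/2-c_u/2^{c_u})\leq N/2$, so $\E[N_p-N_q\mid\{c_u\}]\leq 0$. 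Flipping a single label in bin $u$ changes $N_p$ by at most $2$ (the jump occurs when $a_u$ crosses between $1$ and $2$), changes $N_q$ by at most $1$, and has no effect at all for singletons. Hence $N_p-N_q$ is a function of independent coin flips whose sum of squared per-coordinate ranges is at most $9N$. McDiarmid's inequality thus yields, with probability at least $1-\delta$,
\[
N_p - N_q \;\leq\; C\sqrt{N\log(1/\delta)}
\]
for a universal constant $C$.

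Finally, we translate this $N$-dependent bound into one purely in terms of $N_p$ and $N_q$ via the deterministic inequality $N\leq 2(N_p+N_q)$. This follows from a bin-by-bin check: for each $u$ with $c_u\geq 2$, the combined contribution $a_u\mathbf{1}[a_u\geq 2]+b_u$ equals $c_u$ unless $a_u=1$, in which case it equals $c_u-1$; the number of bins with $c_u\geq 2$ is at most $N/2$, so the total deficit is at most $N/2$. Plugging into the McDiarmid bound gives $N_p\leq N_q+C\sqrt{2(N_p+N_q)\log(1/\delta)}$. If $N_p\leq N_q$ the target inequality is immediate; otherwise $N_p+N_q\leq 2N_p$ and AM-GM on the $\sqrt{N_p\log(1/\delta)}$ term absorbs it into $N_p$ at the cost of an additive $O(\log(1/\delta))$, yielding $N_p\leq 2N_q+O(\log(1/\delta))\leq 20N_q+C'\log(1/\delta)$ for a sufficiently large universal constant $C'$. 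Since this holds conditional on every realization of $F_x,F_y,\{c_u\}$, it holds unconditionally, proving the lemma. The main technical point is coupling the fluctuations of $N_p$ and $N_q$ together; doing so via the single difference $N_p-N_q$ and then using the purely combinatorial bound $N\leq 2(N_p+N_q)$ to close the loop avoids any dependence on the uncontrolled total number of non-singletons.
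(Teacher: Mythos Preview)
Your proof is correct and takes a genuinely different route from the paper's. The paper proves this lemma indirectly through the test statistic $Z$: it first establishes the auxiliary Lemma~\ref{lem:ZNpNq}, which says that conditional on $S_p^f,S_q^f$ one has $\Pr[Z < N_p/6 - 2N_q - 100] < 1/2$, and then combines this with $\E[Z]=0$ (from $p=q$) and the concentration bound of Lemma~\ref{lem:Zconc} to conclude that the event $N_p > 20N_q + C'\log(1/\delta)$ forces $Z$ to simultaneously be large (by Lemma~\ref{lem:ZNpNq}) and small (by Lemma~\ref{lem:Zconc}), hence has probability $O(\delta)$.

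Your argument bypasses $Z$ entirely. You exploit directly that when $p=q$, the labels ``$p$'' versus ``$q$'' are i.i.d.\ fair coins by Poisson thinning, so after conditioning on the multiset $\{c_u\}$ the difference $N_p-N_q$ is a bounded-difference function of independent bits with nonpositive conditional mean; McDiarmid then gives $N_p-N_q\leq C\sqrt{N\log(1/\delta)}$, and the deterministic bound $N\leq 2(N_p+N_q)$ together with AM--GM closes the loop. This is more elementary and self-contained: it removes the need for Lemma~\ref{lem:ZNpNq} altogether (that lemma is used nowhere else in the paper) and does not invoke the heavier concentration machinery of Lemma~\ref{lem:Zconc}. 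The paper's route has the minor advantage of reusing tools already in place and of making explicit the conceptual point that $N_p\gg N_q$ forces $Z$ to be large, but for the purpose of proving this particular lemma your direct symmetry argument is cleaner.
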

\dnew{This is essentially because if $N$ is a sufficient multiple of $N_q$ then by Lemma \ref{lem:ZNpNq} we have that $Z$ is likely to be at least a large multiple of $N$. However Lemma \ref{lem:close-exp-gap} says that $\E[Z]=0$ and Lemma \ref{lem:Zconc} says that $|Z-\E[Z]| \ll \sqrt{N\log(1/\delta)}$ with high probability.}
\begin{proof}

  In the completeness case (i.e., $p$ is independent and $p=q$), we have
  that $\E[Z]=0$ due to symmetry (see Lemma~\ref{lem:close-exp-gap}). 
  Using this fact and Lemma \ref{lem:Zconc}, we have for some constant $C$ and
  $\tau_1(N) := C\sqrt{(N+\log(1/\delta))\log(2/\delta)}$, it holds
  \[
    \pr[Z>\tau_1(N)]\leq \delta/2 \;.
  \]
  On the other hand, Lemma~\ref{lem:ZNpNq} says that, for any $N_p$
  and $N_q$, and $\tau_2(N_p, N_q) := N_p/6 - 2 N_q - 100$, that
  \[
    \pr[Z \geq \tau_2(N_p, N_q) \mid N_p, N_q] \geq 1/2 \;,
  \]
  and so
  \begin{align*}
    \pr[Z>\tau_1(N)] &\geq \pr[Z > \tau_2(N_p, N_q) \cap \tau_2(N_p, N_q) \geq \tau_1(N)]\\
    &\geq \pr[\tau_2(N_p, N_q) \geq \tau_1(N)] \min_{N_p,N_q,N} \pr[Z > \tau_2(N_p, N_q) \mid N_p, N_q, N]\\
    &\geq \frac{1}{2}\pr[\tau_2(N_p, N_q) \geq \tau_1(N)] \;.
  \end{align*}
  Combining these observations gives that
  \begin{align}
    \pr[\tau_1(N) \leq \tau_2(N_p, N_q)] \leq \delta \;.
  \end{align}
  We claim that whenever we return ``NO" on line~\ref{line:reject0},
  condition $\tau_1(N) \leq \tau_2(N_p, N_q)$. The lemma statement
  follows.

  Note that, if $N_p \geq 20N_q + 1000$, we have
  \[
    \tau_2(N_p, N_q) = N_p/6 - 2 N_q - 100 \geq N_p / 20.
  \]
  Now $N$ is the number of non-singletons in $S_p^f \cup S_q^f$,
  which is $N_p + N_q$ plus the number of samples in $S_p^f$ that
  collide with a sample in $S_q^f$ but not with one in $S_p^f$. This latter is also at most $N_q$, so
  $N_p + N_q\leq N \leq N_p + 2N_q$.

  Therefore, if $N_p > 20 N_q + C'(\log(2/\delta))$, for any $C' > 10$
  we have that
  \[
    \frac{1}{C}\tau_1(N) = \sqrt{(N+\log(1/\delta))\log(2/\delta)} \leq \sqrt{1.1 \cdot N_p
      \cdot N_p/C'} \leq N_p \sqrt{1.1/C'} \;.
  \]
  Therefore, for $C' \geq 440 C^2$, we have
  \[
    \tau_1(N) \leq N_p / 20 \;.
  \]
  All combined, this means
  \[
    \pr[N_p > \max(20N_q + 1000, 20 N_q + C'(\log(2/\delta)))] \leq\pr[\tau_1(N) \leq \tau_2(N_p, N_q)] \leq \delta \;,
  \]
  for $C' \geq \max(440 C^2, 20)$.  Since we only return ``NO" on
  line~\ref{line:reject0} if this condition holds, the probability is
  at most $\delta$ as required.
\end{proof}

\dnew{Finally, we can bound the probability of \textsc{BasicTest} giving an incorrect output on lines \ref{line:reject1} or \ref{line:accept}.}

\begin{lemma}
If $p$ is a product distribution, then the probability that \textsc{BasicTest} returns ``NO" on line~\ref{line:reject1} is $O(\delta)$. Similarly, if $\dtv(p,q)>\eps$ then the probability that \textsc{BasicTest} returns ``YES" is $O(\delta)$.
\end{lemma}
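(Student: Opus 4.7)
The plan is to handle the completeness and soundness parts simultaneously, by combining the sharp concentration of Lemma~\ref{lem:Zconc} with an expectation-gap statement (Claim~\ref{clm: exp gap}) for the flattened statistic, after using the algorithm's own checks to force a strong bound on the number of non-singletons $N$. I would first observe that reaching either line~\ref{line:accept} or line~\ref{line:reject1} requires that neither the ABORT on line~\ref{large Nq} nor the ``NO'' on line~\ref{line:reject0} has fired, so deterministically
\[
N_q \;\le\; c\,\max(k/m,\,k^2/(mn)) \qquad \text{and} \qquad N_p \;\le\; 20\,N_q + C'\log(1/\delta).
\]
Each sample in $S_p^f$ that collides only with samples in $S_q^f$ is paired to at least one collider counted in $N_q$, whence $N \le N_p + 2N_q$; combining with the trivial $N \le |S_p^f \cup S_q^f| = O(k)$ gives $N = O\!\big(\min(k,\,k^2/(mn)+k/m) + \log(1/\delta)\big)$. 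Plugging into Lemma~\ref{lem:Zconc} and using the assumed lower bound on $k$ (which absorbs the additive $\log(1/\delta)$ term), I would conclude that
\[
|Z - \E[Z]| \;\le\; T/4 \quad \text{w.p.\ } 1-\delta, \qquad \text{where } T := C'\sqrt{\min(k,\,k^2/(mn)+k/m)\,\log(1/\delta)}.
\]

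With this concentration statement established, the two cases become routine. In the completeness case, when $p$ is a product distribution we have $q := p_x \times p_y = p$, and flattening preserves the product structure, so $p^f = q^f$; the symmetry argument of Lemma~\ref{lem:close-exp-gap}(i) gives $\E[Z] = 0$, and so $|Z| \le T/4 < T$ with probability at least $1-\delta$, meaning \textsc{BasicTest} returns ``YES'' rather than ``NO'' on line~\ref{line:reject1}. In the soundness case, flattening preserves total variation distance, so $\dtv(p^f, q^f) = \dtv(p, q) \ge \eps$. Claim~\ref{clm: exp gap} (the expectation-gap statement for the flattened independence statistic) then yields $\E[Z] \ge 4T$ up to the choice of constants, and combining with the display gives $Z \ge \E[Z] - T/4 \ge 3T > T$ with probability at least $1-\delta$, so \textsc{BasicTest} returns ``NO'' on line~\ref{line:reject1}.

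The hard part is Claim~\ref{clm: exp gap}: a black-box application of Lemma~\ref{lem:close-exp-gap}(ii) to $p^f$ versus $q^f$ on their $O(nm)$-size domain would only deliver $\E[Z] = \Omega(\sqrt{k\log(1/\delta)})$ under the much stronger sample bound $k \gtrsim (nm)^{2/3}\log^{1/3}(1/\delta)/\eps^{4/3}$, which we do not have. Instead one needs to reprove the per-bin lower bound of Claim~\ref{cl:pointwise} in the flattened coordinates and, in the partition-and-H\"older step paralleling Cases~1--3 of the proof of Lemma~\ref{lem:close-exp-gap}(ii), exploit the fact that the row- and column-flattenings have independently reduced the $\ell_2$-mass of the heavy marginals of $q^f$; this is what trades the $m^{2/3}$ in the black-box bound for $m^{1/3}$ and makes $\E[Z] = \Omega(T)$ achievable at our target sample size.
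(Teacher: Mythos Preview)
Your overall architecture is correct and matches the paper: bound $N$ deterministically from the algorithm's own checks, feed this into Lemma~\ref{lem:Zconc} to get concentration at scale $T=C'\sqrt{\min(k,\,k^2/(mn)+k/m)\log(1/\delta)}$, then use $\E[Z]=0$ for completeness and Claim~\ref{clm: exp gap} for soundness. The bound $N\le N_p+2N_q$ and the absorption of the additive $\log(1/\delta)$ term via $k^2/(nm)\ge\log(1/\delta)$ are also exactly what the paper does.

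Where you go wrong is in the last paragraph, in your diagnosis of \emph{why} Claim~\ref{clm: exp gap} holds. You assert that one must exploit the reduced $\ell_2$-mass of the flattened marginals of $q^f$ to improve the expectation bound, and that this is what trades $m^{2/3}$ for $m^{1/3}$. That is not how the paper proceeds, and the claim that such an improvement is needed is incorrect. The paper simply extracts from the \emph{proof} of Lemma~\ref{lem:close-exp-gap}(ii) the intermediate three-way-minimum bound
\[
\E[Z\mid F_x,F_y]=\Omega\Bigl(\min\Bigl\{\eps k,\ \tfrac{k^2\eps^2}{|D|},\ \tfrac{k^{3/2}\eps^2}{\sqrt{|D|}}\Bigr\}\Bigr),\qquad |D|=\Theta(nm),
\]
using nothing about $p^f,q^f$ beyond $\dtv(p^f,q^f)\ge\eps$ and domain size $O(nm)$. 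The point is that this bound is then compared not to $\sqrt{k\log(1/\delta)}$ but to the \emph{smaller} threshold $T$; a direct case analysis (Cases~1--3 in the paper's proof of the claim) shows that each branch of the minimum already dominates $T$ under the assumed value of $k$. In particular, the $m^{2/3}\!\to m^{1/3}$ improvement lives entirely on the concentration/threshold side (via the bound on $N$ obtained from flattening), not on the expectation side. No refinement of Claim~\ref{cl:pointwise} or of the H\"older step using reduced $\ell_2$-mass is needed.
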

\dnew{This holds because if we reach this stage of the algorithm $N=N_p+N_q$. We know by previous checks that $N_q$ is not too large and $N_p$ is not much bigger than $N_q$. This gives us strong concentration bounds on $Z$ and a careful analysis of the separation in expectations between the soundness and completeness cases will yield our result.}
\begin{proof}

In order for the algorithm to return ``YES" or ``NO" on line~\ref{line:reject1}, 
it has to avoid ``aborting" or returning ``NO" on line~\ref{line:reject0}. 
Therefore, it must be the case that $N_p \leq 20N_q+ C'\log(1/\delta)$ and 
$N_q= O(\frac{k^2}{nm}+k/m)$, which implies $N= O(\frac{k^2}{nm}+k/m)$. 
Note also that $\frac{k^2}{nm} \geq \log(1/\delta)$ \new{,as well as $k\geq \log(1/\delta)$}
 by definition of $k$. \dnew{Note also the trivial bound that $N=O(k).$}


Therefore, by Lemma \ref{lem:Zconc}, we have that:
\[
\pr[|Z-\E[Z]|>C\sqrt{\dnew{\min(k,(k^2/mn+k/m))}\log(1/\delta})]\leq \delta/2 \;,
\]
for some constant $C>0$.

If $p$ is a product distribution (i.e., $p=q$), then by Lemma \ref{lem:close-exp-gap}, we have that $\E[Z]=0$. 
Thus, the algorithm will return ``NO" with probability at most $\delta/2$.

For the soundness case, where $\dtv(p,q)>\eps$,  it suffices to show the following lower bound on the expected value of $Z$:
\begin{claim}\label{clm: exp gap}
If $\dtv(p,q)>\eps$, then
\[
\E[Z]\geq 2C'\sqrt{\dnew{\min(k,(k^2/mn+k/m))}\log(1/\delta)} \;.
\]
\end{claim}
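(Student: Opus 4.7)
The plan is to adapt the proof of Lemma~\ref{lem:close-exp-gap}(ii) (the expectation-gap argument for closeness testing) to the flattened distributions $p^f$ and $q^f$. Since flattening preserves total variation distance, $\dtv(p^f, q^f) = \dtv(p, q) \geq \eps$, and since \textsc{BasicTest} aborts whenever $|F_x| > 10n$ or $|F_y| > 10m$, we may assume the support of $p^f \cup q^f$ has size $O(nm)$. The statistic $Z$ computed in \textsc{BasicTest} is, up to notation, the unsymmetrized closeness statistic applied to $(p^f, q^f)$ with $\Poi(k)$ samples in each of the four flagged/unflagged buckets (by Poisson splitting of $|S_p^f|,|S_q^f|\sim\Poi(2k)$), so the pointwise lower bound of Claim~\ref{cl:pointwise} applies to each bin of the flattened support.

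The heart of the proof is the three-case partition from Lemma~\ref{lem:close-exp-gap}(ii): writing $S_1, S_2, S_3$ for the bins where each of the three arguments of the minimum in~\eqref{eqn:Zi-lb-close} respectively dominates, pigeonhole gives $\sum_{i \in S_j}|p_i^f - q_i^f| = \Omega(\eps)$ for some $j$. Summing Claim~\ref{cl:pointwise} gives $\E[Z] = \Omega(k\eps)$ in the $S_1$ case; Cauchy--Schwarz together with $|S_2| = O(nm)$ gives $\E[Z] = \Omega(k^2\eps^2/(nm))$ in the $S_2$ case; and the generalized H\"older argument from the closeness proof gives $\E[Z] = \Omega(k^{3/2}\eps^2/\sqrt{|S_3|})$ in the $S_3$ case. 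For the last I would apply \emph{both} bounds $|S_3| = O(nm)$ and $|S_3| \le 2k$ --- the latter because each $i \in S_3$ satisfies $kp_i^f + kq_i^f \ge 1$ while $\sum_i (kp_i^f + kq_i^f) = 2k$ --- yielding $\E[Z] = \Omega\bigl(\max(k\eps^2,\, k^{3/2}\eps^2/\sqrt{nm})\bigr)$.

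The final step is to verify that each of these lower bounds is at least $2C'\sqrt{\min(k,\, k^2/(mn) + k/m)\log(1/\delta)}$ using the assumed lower bound on $k$. Case~1 is immediate from the $\log(1/\delta)/\eps^2$ term, while Cases~2 and~3 are handled by comparing against each of the two terms $k^2/(mn)$ and $k/m$ inside the minimum, using the $\sqrt{nm\log(1/\delta)}/\eps^2$ and $n^{2/3}m^{1/3}\log^{1/3}(1/\delta)/\eps^{4/3}$ summands in the sample-complexity bound respectively. The main obstacle is Case~3: the naive H\"older bound $\Omega(k^{3/2}\eps^2/\sqrt{nm})$ fails to beat the threshold once $k \gtrsim nm$, which is exactly where the ``$\min$'' inside the threshold switches from $k^2/(mn)+k/m$ to $k$. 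The refined bound $\Omega(k\eps^2)$ obtained via $|S_3| \le 2k$ takes over in precisely this regime, and taking the stronger of the two Case~3 bounds mirrors the min inside the threshold and covers the full parameter range.
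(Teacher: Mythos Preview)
Your approach is the paper's: condition on the flattening, note the flattened support has size $\Theta(nm)$, and apply the three-case analysis from the proof of Lemma~\ref{lem:close-exp-gap}(ii). Cases~1 and~2 match the paper exactly.

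Your Case~3 reasoning, however, is inverted. The observation $|S_3|\le 2k$ (from $k(p_i^f+q_i^f)\ge 1$ on $S_3$) is correct but unnecessary, and your claim that the naive H\"older bound $\Omega\bigl(k^{3/2}\eps^2/\sqrt{nm}\bigr)$ ``fails to beat the threshold once $k\gtrsim nm$'' is false: in that regime the threshold is $\sqrt{k\log(1/\delta)}$, and $k^{3/2}\eps^2/\sqrt{nm}\gtrsim\sqrt{k\log(1/\delta)}$ is equivalent to $k\gtrsim\sqrt{nm\log(1/\delta)}/\eps^2$, which is assumed. In fact this comparison holds for \emph{every} $k$, and since $\sqrt{k\log(1/\delta)}$ is one branch of the $\min$ defining the threshold, the naive bound disposes of Case~3 outright---which is exactly how the paper finishes (after a remark, itself redundant, that Case~3 dominates Case~2 whenever $k\le nm$). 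Your refined bound $\Omega(k\eps^2)$ is never needed, and in the regime $k\gtrsim nm$ where you propose to invoke it, it is actually the \emph{weaker} of your two Case~3 bounds, so ``taking the stronger'' there just returns you to the naive bound anyway.
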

\begin{proof}
Suppose that we condition on the flattening samples. 
This will determine the flattened distributions $p^f,q^f$. From the proof of Lemma \ref{lem:close-exp-gap} it follows that:
\[
\E[Z|F_x,F_y]=\Omega\left(\min\left\{ \eps k, \frac{k^2\eps^2}{|D_{p^f}|}, \frac{k^{3/2}\eps^2}{\sqrt{|D_{p^f}|}} \right\}\right) \;,
\]
where $|D_{p^f}|=\Theta(nm)$ is the domain size of the flattened distribution.
We now distinguish the following three cases:
\begin{itemize}
\item \textbf{Case 1:} $\E[Z|F_x,F_y]=\Omega(\eps k)$. 

Using the fact that $k=\Omega(\log(1/\delta)/\eps^2)$, it follows that $\E[Z|F_x,F_y]=\Omega(\sqrt{k\log(1/\delta)})$. 
\item \textbf{Case 2:}  $\E[Z|F_x,F_y]=\Omega\left(\frac{k^2\eps^2}{nm}\right)$

\begin{itemize}
\item Using the fact that $k=\Omega\left(\frac{\sqrt{nm\log(1/\delta)}}{\eps^2}\right)$, we get that:
\[
\E[Z|F_x,F_y]=\Omega\left(\frac{k\eps^2}{nm}\cdot\frac{\sqrt{nm\log(1/\delta)}}{\eps^2}\right)=\Omega\left( \sqrt{\frac{k^2}{nm}\log(1/\delta)}\right).
\]
\item  Using the fact that $k=\Omega\left(\frac{n^{2/3}m^{1/3}\log^{1/3}(1/\delta)}{\eps^{4/3}}\right)$, we get that:
\[
\E[Z|F_x,F_y]=\Omega\left(\sqrt{\frac{k}{m}}\cdot \frac{\eps^2k^{3/2}}{n\sqrt{m}}\right)= \Omega\left(\sqrt{\frac{k}{m}}\cdot\frac{\eps^2 n\sqrt{m\log(1/\delta)}}{\eps^2 n\sqrt{m}}\right)=\Omega\left( \sqrt{(k/m) \log(1/\delta)} \right).
\]
\end{itemize}
\item \textbf{Case 3:}  $\E[Z|F_x,F_y]=\Omega\left(\frac{k^{3/2}\eps^2}{\sqrt{nm}}\right)$.

\dnew{We note that this is larger than the expression in Case 2 unless $k>nm$. 
Thus, it suffices to show that $\frac{k^{3/2}\eps^2}{\sqrt{nm}} = \Omega(\sqrt{k\log(1/\delta)})$. 
However, this follows from the fact that $k=\Omega\left( \frac{\sqrt{nm \log(1/\delta)}}{\eps^2} \right).$}

\end{itemize}

Combining these two bounds, we get the required statement for any possible choice of flattening samples. 
Thus, the unconditional version of the statement also holds.
\end{proof}
This completes the proof of the lemma.
\end{proof}

Recall that the full algorithm is the following:
\begin{enumerate}
\item Let $\overline{S}$ be a random set of $100k$ samples.
\item Run \textsc{BasicTest} on $\overline{S}$ until it does not return \textrm{ABORT}.
\item Return ``YES"/``NO" as appropriate.
\end{enumerate}

\begin{lemma}\label{lem:ub-final}
If $p=q$ the probability that \textsc{FullTest} returns ``NO" is $O(\delta)$, 
and if $\dtv(p,q)>\eps$ the probability that it returns ``YES" is $O(\delta).$
\end{lemma}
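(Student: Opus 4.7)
The plan is to reduce the statement about \textsc{FullTest} to the wrong-answer probability bound for a single call of \textsc{BasicTest}, which we have already established in the preceding lemmas. The key structural observation is that \textsc{FullTest} simply iterates \textsc{BasicTest} on the same fixed sample multiset $\overline{S}$, using independent internal coins at each trial, until a non-ABORT answer appears.

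First I would condition on $\overline{S}$. For a fixed $\overline{S}$, let
$$w(\overline{S}) := \pr[\textsc{BasicTest}(\overline{S}) \text{ returns the wrong answer}], \quad a(\overline{S}) := \pr[\textsc{BasicTest}(\overline{S}) = \textrm{ABORT}],$$
where the probabilities are over the internal randomness of \textsc{BasicTest} only. By Lemma~\ref{lm:abort}, $a(\overline{S}) \leq 1/2$ for \emph{every} $\overline{S}$. Since successive invocations of \textsc{BasicTest} in \textsc{FullTest} use independent draws of $F_x, F_y$, the random permutation, $\ell, \ell'$, the sub-bin assignment, and the flagging, the trials are i.i.d.\ conditioned on $\overline{S}$.

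Next I decompose the bad event. \textsc{FullTest} produces the wrong answer iff, for some $i \geq 1$, the first $i-1$ trials all return ABORT and the $i$-th trial returns the wrong (non-ABORT) answer. By conditional independence,
$$\pr[\textsc{FullTest} \text{ wrong} \mid \overline{S}] \;=\; \sum_{i=1}^{\infty} a(\overline{S})^{\,i-1}\, w(\overline{S}) \;=\; \frac{w(\overline{S})}{1 - a(\overline{S})} \;\leq\; 2\, w(\overline{S}).$$
Taking expectation over $\overline{S}$ gives
$$\pr[\textsc{FullTest} \text{ wrong}] \;\leq\; 2\, \E_{\overline{S}}[w(\overline{S})] \;=\; 2\, \pr[\textsc{BasicTest} \text{ wrong}].$$

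Finally I would invoke the preceding single-trial bounds. In the completeness case ($p$ is a product distribution, so $p=q$), Lemma~\ref{lem: large Np} bounds the probability that \textsc{BasicTest} returns ``NO" on line~\ref{line:reject0} by $O(\delta)$, and the subsequent lemma bounds the probability of returning ``NO" on line~\ref{line:reject1} by $O(\delta)$; summing these gives $\pr[\textsc{BasicTest} \text{ wrong}] = O(\delta)$. In the soundness case ($\dtv(p,q) \geq \eps$), the same lemma bounds the probability of returning ``YES" on line~\ref{line:accept} by $O(\delta)$, and \textsc{BasicTest} has no other path to outputting ``YES". Plugging into the displayed inequality yields the claim.

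There is essentially no hard obstacle here; the only care needed is to use Lemma~\ref{lm:abort} in its \emph{uniform} form (valid for every $\overline{S}$, not merely on average), so that the geometric series in the conditional bound converges pointwise in $\overline{S}$ with a universal factor $2$. This is the standard ``amplification with bounded abort probability'' argument: iterating a subroutine whose abort probability is bounded away from $1$ inflates the non-abort error probability by at most a constant.
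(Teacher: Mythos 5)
Your proof is correct and matches the paper's argument essentially verbatim: both decompose over the number of ABORTs, exploit the uniform (pointwise-in-$\overline{S}$) bound $a(\overline{S}) \le 1/2$ from Lemma~\ref{lm:abort} to sum the geometric series and obtain the factor $2$, and then invoke the single-trial $O(\delta)$ error bound. The only cosmetic difference is that you sum the geometric series conditionally on $\overline{S}$ before taking the expectation, while the paper interleaves the two; this is immaterial.
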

\begin{proof}
\dnew{We bound the probability as follows:
\begin{align*}
\pr[\textrm{\textsc{FullTest} incorrect}] & = \sum_{t=0}^\infty \pr[\textrm{\textsc{BasicTest} Returns \textrm{ABORT} }t\textrm{ times and then returns wrong output}]\\
& = \sum_{t=0}^\infty \E_{\overline{S}}[\pr[\textrm{\textsc{BasicTest} returns \textrm{ABORT}}|\overline{S}]^t \pr[\textrm{\textsc{BasicTest} returns wrong output}|\overline{S}]]\\
& \leq  \sum_{t=0}^\infty 2^{-t} \E_{\overline{S}}[\pr[\textrm{\textsc{BasicTest} returns wrong output}|\overline{S}]]\\
& = \sum_{t=0}^\infty 2^{-t} \pr[\textrm{\textsc{BasicTest} returns wrong output}]\\
& = 2\pr[\textrm{\textsc{BasicTest} returns wrong output}] = O(\delta).
\end{align*}
}
\end{proof}

\begin{proof}[Proof of Theorem \ref{thm:ub-main}]
By Lemma \ref{lem:ub-final}, we get that there exists some constant $c>0$, such that Algorithm~\ref{alg:Full-indep} 
outputs ``NO" with probability at most $\delta^\prime=c\cdot\delta$ if $p$ is a product distribution, 
and outputs ``YES" with probability at most $\delta^\prime$ if $\dtv(p,p_x\times p_y)\geq \eps$. 
Since $\delta=\delta^\prime/c$, the sample complexity is:
\[\Theta\left( n^{2/3} m^{1/3} \log^{1/3}(1/\delta^\prime) / \eps^{\new{4/3}} + ((nm)^{1/2}\log^{1/2}(1/\delta^\prime)+\log(1/\delta^\prime))/\eps^2 \right) \;, \]
as desired. 
\end{proof}

\section{Sample Complexity Lower Bounds for Closeness and Independence} \label{sec:lb-ind}

Our lower bound proofs follow a similar outline as those in
\cite{DK16}. The main point of the argument there was that we reduced
to the following problem: We have two explicit distributions on pseudo-distributions.
The former outputs independent pseudo-distributions; the latter outputs usually
far from independent pseudo-distributions. We pick a random
pseudo-distribution from one of these families, take $\Poi(k)$ samples
from it, hand them to the algorithm and ask the algorithm to determine
which ensemble we started with. It is shown that this is impossible to
do reliably by bounding the mutual information between the samples and
the bit determining which ensemble was sampled from.

This argument unfortunately, does not help much for high probability bounds.
Although, bounding the mutual information away from $1$ might in theory
show that you cannot distinguish with high probability, in practice,
our bounds will provide bounds on mutual information that is more than
$1$ given enough samples. To overcome this difficulty, we replace our bounds on
mutual information with bounds on KL-divergence. Unlike the mutual
information---which is bounded by 1-bit---the KL-divergence between our
distributions can become arbitrarily large. It is also not hard to see
that if two distributions can be distinguished with probability
$1-\delta$, the KL-divergence is $\Omega(\log(1/\delta))$.
(See Fact~\ref{lem:better-dtv-ub}.)

Given this modification, our ensembles are identical to the ones used in
\cite{DK16}. Furthermore, the bounding techniques themselves are very
similar, using essentially the same expression as an upper bound on
KL-divergence as was used as an upper bound on mutual information. One
other slight technical change is that we need to show that the
reduction to our hard problem still works for high probability
testing, which is not difficult but still needs a careful argument.

\subsection{Roadmap of this Section}

The structure of this section is as follows:
In Section~\ref{sec:lb-setup}, we explain in more detail the setup of the analysis
including the use of Poissonization and pseudo-distributions,
and show that proving lower bounds in such a setting implies lower bounds in the setting we care about.
In Section~\ref{sec:lb-assumps}, we prove that we can make certain assumptions about
the values of the relevant parameters without loss of generality
when showing certain parts of the lower bound.
In Section~\ref{sec:lb-instance}, we formally define the lower bound instances.
In Section~\ref{sec:lb-suitability}, we show that the lower bound instances
satisfy some basic necessary properties needed for them to actually imply the required
sample complexity lower bounds.

With this groundwork established, the remaining subsections are devoted
to proving the lower bounds on each of the terms in the lower bound expression.

\subsection{Analysis Setup}\label{sec:lb-setup}

We wish to show that testers with failure probability $\delta$ require at least $k$ samples
for the bound on $k$ given in Theorem~\ref{thm:lb-main-pd}.

Our goal is to exhibit a pair of ``distributions which output two-dimensional discrete distributions
with probability $\delta/100$'' $\mathcal{D}_1,\mathcal{D}_2$, where $\mathcal{D}_1$ is a product distribution
with probability of failure $\leq \delta/100$ and $\mathcal{D}_2$ is $\epsilon$-far from a product distribution
with probability of failure $\leq \delta/100$. More specifically, the dimensions of the distributions
are both $n \times m$, for some $n, m \in \Z_+$.

Finally, suppose that the total variation distance between $k_1$ samples generated from $\mathcal{D}_1$
versus $k_2$ samples generate from $\mathcal{D}_2$ is at most $\delta/100$.
(Here, $k_1,k_2$ are random variables which, with probability $\delta/100$, are within a constant factor of some parameter $k$.)
Then no tester can test independence with probability of failure $\delta$ with fewer 
than $\Omega(k)$ samples.

To show this, we only define $\mathcal{D}_1,\mathcal{D}_2$ implicitly via
``distributions which output two-dimensional pseudo-distributions with high probability.''
A pseudo-distribution is like a distribution, except the probabilities only need to sum to
within a constant factor of $1$ instead of $1$ exactly.

\begin{definition}\label{def:pd}
A discrete pseudo-distribution is a vector whose entries are nonnegative and sum to within a factor of $100$ of $1$.
\end{definition}

We define Poissonized sampling from a pseudo-distribution as follows.

\begin{definition}\label{def:pois-pdf}
For a pseudo-distribution $q$, we define \emph{$k$ Poissonized samples from $q$}
as the vector of random variables, where the $i$th entry is $\Poi(k q_i)$.
If $q$ is a two-dimensional pseudo-distribution, then we can analogously define this
as the two-dimensonial vector, where entry $(i, j)$ is $\Poi(k q_{ij})$.
\end{definition}

We will ultimately prove a lower bound on the natural generalization of independence testing to pseudo-distributions.
In other words, we will show that if one is given a pseudo-distribution on two variables (i.e., with domain $[n] \times [m]$),
which is either a product pseudo-distribution or $\epsilon$-far from a product pseudo-distribution,
then distinguishing these two cases takes a large number of Poissonized samples from the pseudo-distribution.
Here, a product pseudo-distribution is defined as follows and naturally generalizes the definition of a product distributions.

\begin{definition}\label{def:product-pd}
A product pseudo-distribution is a product measure that is a pseudo-distribution.
\end{definition}

We will always use $p$ to denote a pseudo-distribution output by either of our distribution on pseudo-distributions,
or more generally, a measure which is a pseudo-distribution with high probability.

We note that proving a lower bound on independence testing for Poissonized pseudo-distributions
also implies the same lower bound---up to constant factors---on the non-Poissonized independence
testing problem for actual distributions. Formally, we have the following simple lemma:

\begin{lemma}\label{lem:pseudo-lb-to-lb}
Suppose there exists a tester for independence testing for distributions with failure probability $\delta<1/2$
using $k \geq 10^{10} \log(1/\delta)$ samples. Then there exists a tester for the Poissonized independence testing problem
for pseudo-distributions with failure probability $O(\delta)$, 
which uses $O(k)$ Poissonized samples from the pseudo-distribution.
Furthermore, the resulting tester still works with failure probability $O(\delta)$ 
for the even more general problem of testing independence, where the promises 
in the completeness and soundness cases only hold
with probability $1-O(\delta)$, instead of holding deterministically.
\end{lemma}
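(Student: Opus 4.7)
The plan is to reduce the Poissonized pseudo-distribution problem to the standard distribution-testing problem by a \emph{normalization} argument. Given an input pseudo-distribution $p$ on $[n] \times [m]$, set $L := \|p\|_1 \in [1/100, 100]$ (by \Cref{def:pd}) and let $\bar{p} := p/L$, which is a genuine probability distribution. I would first establish three elementary facts: (i) if $p$ factorizes as $p_{ij} = a_i b_j$, then $L = \|a\|_1 \|b\|_1$ and $\bar p$ inherits the factorization, so $\bar p$ is a product distribution; (ii) if $p$ is $\eps$-far in $\ell_1$ from every product pseudo-distribution, then for any product distribution $q'$ the rescaled pseudo-distribution $L q'$ is itself product, so $\|p - L q'\|_1 \geq \eps$ yields $\dtv(\bar{p}, q') = \|p - L q'\|_1/(2L) \geq \eps/200$; and (iii) by Poisson thinning, a $\Poi(k'')$-Poissonized draw from $p$ is distributed identically to first drawing $M \sim \Poi(k'' L)$ and then taking $M$ i.i.d.\ samples from $\bar{p}$.

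Given these facts, the reduction is as follows. Take $\Poi(10^4 k)$ Poissonized samples from $p$, so that $M \sim \Poi(10^4 k L)$ has mean at least $10^2 k$. A Chernoff-style tail bound for Poisson random variables, combined with the hypothesis $k \geq 10^{10}\log(1/\delta)$, gives $M \geq k$ with probability at least $1 - \delta$. When this event holds, feed the first $k$ of the i.i.d.\ samples from $\bar p$ into the hypothesized distribution tester with distance parameter $\eps/200$; otherwise return ``YES'' by default. By (i) and (ii), $\bar p$ satisfies the corresponding completeness or soundness promise of the distribution-testing problem, so the tester errs with probability at most $\delta$. A union bound over the two failure modes yields total error $O(\delta)$. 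Since the sample complexity $k$ depends polynomially on $1/\eps$, rescaling $\eps \mapsto \eps/200$ changes $k$ by only a constant factor, and so the Poissonized sample count remains $O(k)$.

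For the ``furthermore'' clause, suppose the input pseudo-distribution only satisfies the completeness/soundness promise with probability $1 - O(\delta)$. The output of the constructed tester depends only on the received samples, so a final union bound with the event that the promise fails adds at most $O(\delta)$ to the total failure probability, preserving the $O(\delta)$ guarantee. The only mild technicality is in calibrating the Poisson concentration so that $M \geq k$ holds with probability $\geq 1 - \delta$; this is precisely where the assumption $k \geq 10^{10}\log(1/\delta)$ is used, and is not expected to be an obstacle.
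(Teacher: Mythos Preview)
Your proposal is correct and follows essentially the same approach as the paper: normalize the pseudo-distribution $p$ to a genuine distribution $\bar p = p/\|p\|_1$, use Poisson concentration (together with the hypothesis $k \geq 10^{10}\log(1/\delta)$) to guarantee enough i.i.d.\ samples from $\bar p$ except with probability $O(\delta)$, and observe that product-ness and $\ell_1$-farness transfer between $p$ and $\bar p$ up to constant factors. The paper is somewhat terser about the Poisson thinning step and the quantitative loss in $\eps$, whereas you make both explicit; conversely, your remark that ``$k$ depends polynomially on $1/\eps$'' is not really part of the lemma itself but rather of its downstream application to the lower bound, so it would be cleaner to state the conclusion as a pseudo-distribution tester for $(\Theta(\eps),O(\delta))$ and defer the absorption of constants to where the lemma is invoked.
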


Our intended sample complexity lower bound in Theorem~\ref{thm:lb-main-pd} involves
an $\Omega(\log(1/\delta) / \epsilon^2)$ term. Hence, the parameter $k$ will satisfy the lower bound condition
in the above lemma statement. As such, we will work with Poissonized samples from pseudo-distributions
without loss of generality.

\begin{proof}[Proof of Lemma~\ref{lem:pseudo-lb-to-lb}]
Consider any hypothetical tester for the (non-Poissonized) independence testing problem for actual distributions.
It suffices to show that such a tester can also be used for the closely related problem of
testing independence for Poissonized pseudo-distributions,
with only constant factor losses in the relevant parameters, i.e., $n, m,\epsilon,\delta$.

The basic idea is to just run the hypothetical tester on the samples from the pseudo-distribution.
In order to get this working formally, some care is required.

The first issue is that the hypothetical tester is guaranteed to work on samples
from a distribution---whereas for the Poissonized problem involving pseudo-distributions,
we have Poissonized samples from a pseudo-distribution. Thus, we must first show that
we can either convert such samples of the latter form into the former form or interpret them as such.

There are two steps to doing this. The first step is to get rid of the Poissonization, which is done as follows:
If we are given $\Poi(k)$ samples $S$ from a pseudo-distribution, then by the definition of pseudo-distributions
and the concentration of Poisson random variables, we have that with probability of failure $\delta+e^{-k/10}$,
we will actually get at least $k/1000$ samples.
We can then run the hypothetical tester on this set $\bar{S}$ of $k/1000$ samples, discarding any extra samples.
By doing this, we are running the hypothetical tester on a fixed number of samples, eliminating the Poissonization.
Recall that the additional probability of failure resulting from this procedure is $\delta+e^{-k/10}$.
Since $k \geq 10^{10} \log(1/\delta)$ by assumption, the probability of failure is still $O(\delta)$.

The second step is to show that after we have eliminated Poissonization, we can interpret the set $\bar{S}$ of $k/1000$ samples
from the pseudo-distribution as samples from a corresponding actual distribution. To see this, recall that a pseudo-distribution is a measure. The only thing separating a pseudo-distribution from a probability distribution 
is that it might not be normalized (i.e., have sum of its point masses sum to one).
Thus, with every pseudo-distribution, we can naturally associate a corresponding probability distribution
obtained by normalizing the pseudo-distribution so that it does sum to one. One can then observe that the samples $\bar{S}$
do in fact come from this associated actual distribution. To see this, note that if we have a pseudo-distribution $p$
with associated actual distribution $\bar{p}$, i.e., $p = c \cdot \bar{p}$, then $k$ Poissonized samples from $p$ are
given by sampling element $i$ with multiplicity $\Poi(k \cdot p_i) = \Poi(k c \cdot \bar{p}_i)$.
Thus, Poissonized samples from a pseudo-distribution are equivalent to a different number of Poissonized samples
from the corresponding actual distribution. If we condition on the number of samples, then we get rid of the Poissonization
and just have samples from an actual distribution.

Thus, we can think of the hypothetical tester as being run on non-Poissonized samples from an actual distribution.
So, it suffices to prove that this actual distribution will be a product distribution (completeness case)
or far from a product distribution (soundness case) according to whether the pseudo-distribution it came from was.

One can easily verify that a pseudo-distribution is a product if and only if its corresponding actual distribution is also a product.
Thus, this tester is correct in the completeness case. For the soundness case, let $T$ denote any set of pseudo-distributions
which is closed under rescaling all probabilities by any amount that keeps the sum within a factor of $100$ of $1$.
One can check that if we have any pseudo-distribution $p$, then the $\ell_1$-distance of $p$
from the closest element in $T$---is up to constant factors---no more than a constant factor larger
than the $\ell_1$-distance between the actual distribution corresponding to $p$
and the set of actual distributions corresponding to $S$. In particular, this implies correctness in the soundness case
and completes the proof of Lemma~\ref{lem:pseudo-lb-to-lb}.
\end{proof}

By Lemma~\ref{lem:pseudo-lb-to-lb}, to establish our tight lower bound for independence testing, it suffices
to prove the following theorem:

\begin{theorem}[Sample Complexity Lower Bound for Independence Testing on Pseudo-distributions]\label{thm:lb-main-pd}
The task of $(\eps, \delta)$-independence testing given access to Poissonized samples from a pseudo-distribution
on $[n] \times [m]$, where $n \geq m$, requires at least $k$ Poissonized samples, where
\[
k \geq \Omega\left( \frac{n^{2/3} m^{1/3} \log^{1/3}(1/\delta)}{\epsilon^{4/3}} + \frac{\sqrt{n m \log(1/\delta) }}{\epsilon^2} + \frac{\log(1/\delta)}{\epsilon^2}\right).
\]
\end{theorem}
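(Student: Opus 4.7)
The plan is to carry out the reduction-to-ensembles strategy already previewed in Section~\ref{sec:lb-setup}, following the overall blueprint of \cite{DK16} but replacing every mutual information bound with a KL-divergence bound. Concretely, I will construct (for each of the three terms in the claimed lower bound) a pair of distributions $\mathcal{D}_1, \mathcal{D}_2$ over pseudo-distributions on $[n]\times[m]$ such that $\mathcal{D}_1$ outputs a product pseudo-distribution (with failure probability $\le \delta/100$) and $\mathcal{D}_2$ outputs a pseudo-distribution that is $\eps$-far from every product pseudo-distribution (with failure probability $\le \delta/100$). Writing $M_b$ for the induced distribution on $\Poi(k)$ samples when the underlying pseudo-distribution is drawn from $\mathcal{D}_b$, the goal is to show that $D(M_1 \,\|\, M_2) = o(\log(1/\delta))$ whenever $k$ is a sufficiently small constant factor below the claimed threshold; by \Cref{lem:better-dtv-ub} this forces $\dtv(M_1, M_2) < 1 - \delta/50$, ruling out any $(\eps,\delta)$-tester.

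For the ensembles themselves I re-use verbatim the constructions from \cite{DK16}: a nearly-uniform baseline distribution on an appropriate sub-grid of $[n]\times[m]$, perturbed cell-by-cell by $\pm \alpha$ for a parameter $\alpha = \alpha(\eps, n, m)$ that is tuned differently for each of the three terms. In $\mathcal{D}_1$ the cell perturbations are generated as the outer product of independent $\pm 1$ row and column signs (so the result is automatically a product pseudo-distribution), whereas in $\mathcal{D}_2$ the cell perturbations are independent across cells (so with high probability the result is $\eps$-far from product). The verification that both ensembles output valid pseudo-distributions and that $\mathcal{D}_2$ is $\eps$-far from product with failure probability $\le \delta/100$ follows the same standard Chernoff/anti-concentration arguments as in \cite{DK16}; the only care needed is that the ``$\eps$-far'' event must hold with probability $1-O(\delta)$ rather than with constant probability, which is handled by using slightly larger constants in $\alpha$ and applying Chernoff bounds with the $\log(1/\delta)$ slack that is already available in all three terms.

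The analytic heart of the argument is bounding $D(M_1 \,\|\, M_2)$. The key structural fact is that under Poissonization samples in distinct bins are independent, and because the perturbations in each of our ensembles are generated independently across the relevant blocks (single cells for the second and third terms, row-column blocks for the first), $M_1$ and $M_2$ both factor as products over these blocks. Hence $D(M_1 \,\|\, M_2)$ decomposes into a sum of per-block KL terms, which I bound by $\chi^2$-divergence using $D \le \chi^2$. Expanding $\chi^2$ between two mixtures of Poissons via the identity $\E[e^{tX}]=e^{k p(e^t-1)}$ for $X\sim \Poi(kp)$ produces exactly the same expression, of the shape $\E_{p,p'}\big[\prod_i e^{-k(p_i-p'_i)^2/\bar p_i} - 1\big]$, that was used as an upper bound on mutual information in \cite{DK16}. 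Taylor-expanding and keeping the leading nonzero (fourth-moment in $\alpha$) term, the per-block contribution scales as a small polynomial in $k\alpha^2/\bar p$, and summing over blocks yields a bound of the form $O(k^4 \alpha^8 \cdot \text{(\# blocks)})$ (or the analogous expression for the second term). Substituting the optimal $\alpha$ for each term and matching against the claimed $k$ gives a bound that is $o(\log(1/\delta))$, as required.

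The main obstacle is precisely that KL-divergence is unbounded, so the same calculation that produced an $O(1)$ mutual-information bound in \cite{DK16} must now be pushed to give a quantity that grows only logarithmically in $1/\delta$ and not at the polynomial rate in $k$ that a naive application would produce. This is resolved by exploiting the product structure described above, so that the $\chi^2$ calculation is carried out \emph{per block} (with only $\Poi(k\cdot \bar p)$-many samples contributing) and the total KL is the sum, rather than attempting to bound the whole thing by a single global $\chi^2$. The $\log(1/\delta)/\eps^2$ term does not require any ensemble-level work: it reduces to distinguishing a fixed pair of pseudo-distributions on a constant-size support differing by $\eps$ in total variation, for which a direct per-sample KL computation $D(M_1\|M_2) = k\cdot O(\eps^2)$ combined with \Cref{lem:better-dtv-ub} gives the bound immediately.
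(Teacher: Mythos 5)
Your high-level strategy is the one the paper uses: reduce to ensembles of pseudo-distributions (via \Cref{lem:pseudo-lb-to-lb} and the suitability lemmas), replace the mutual-information bound of \cite{DK16} with a KL-divergence bound, invoke \Cref{lem:better-dtv-ub} to convert $D = o(\log(1/\delta))$ into a lower bound, tensorize the KL over independent blocks, and control each block by a $\chi^2$-type expression matching the one in \cite{DK16}. However, there is a genuine gap in your description of the hard instance, and it is not cosmetic. First, your completeness ensemble --- cell perturbations given by the outer product of row and column signs, i.e.\ $p_{ij}\propto 1+\alpha\sigma_i\tau_j$ --- is \emph{not} a product distribution (its product of marginals is approximately uniform, so it is actually $\Theta(\alpha)$-far from product); the paper's completeness case carries no perturbation at all. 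Second, and more importantly, a cell-wise $\pm\alpha$ perturbation of a nearly uniform baseline can only certify the $\sqrt{nm\log(1/\delta)}/\eps^2$ term: the per-row KL then scales as $k^2\eps^4/(n^2m)$, and summing over $n$ rows gives $k^2\eps^4/(nm)\gtrsim\log(1/\delta)$. The first term $n^{2/3}m^{1/3}\log^{1/3}(1/\delta)/\eps^{4/3}$ requires the \emph{cubic} dependence $D(Y_i\|X_i)\leq O(k^3\eps^4/(n^3m))$ per row, which comes from mixing in ``heavy'' rows (each row independently set to constant value $1/(km)$ with probability $k/n$, identically in both ensembles) that inject Poisson noise masking the $\pm\eps/(nm)$ signal. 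Your sketch never mentions this mixture, and your stated per-block bound $O(k^4\alpha^8\cdot\#\mathrm{blocks})$ does not match either regime, so as written the calculation would not produce the first term.

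A secondary issue: once the heavy-row mixture is present, the Taylor-expansion step you describe requires controlling the likelihood ratio $\pr[Y_i=v]/\pr[X_i=v]$ uniformly over sample patterns $v$ (in particular for $\|v\|_1\geq 2$, where the heavy-row component dominates). The paper handles this by bounding the \emph{symmetrized} KL by the rational function of \Cref{lem:rational-bound} and then separately proving $\pr[Y_i=v]\leq O(\pr[X_i=v])$ for $\|v\|_1\geq 2$, which is precisely the new work beyond citing Appendix~A.2 of \cite{DK16}. Your plan defers this to ``the same expression as in \cite{DK16},'' but that reference only covers one direction of the ratio; without the reverse bound the expansion (or the naive $D\leq\chi^2$ step) is not justified. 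Your treatment of the $\log(1/\delta)/\eps^2$ term and of the suitability/far-from-product verifications is fine.
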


The proof of Theorem~\ref{thm:lb-main-pd} is fairly technical and is given in the following subsections.
The first term in the lower bound is obtained in Section~\ref{sec:first-term-lb}.
The second is obtained in Section~\ref{sec:second-term-lb}.

The final term is immediate from the fact that distinguishing a fair coin from an $\epsilon$-biased coin
can be reduced to independence testing with only constant factor loss in $\epsilon,\delta$
and any constant---or larger---domain size. (The reduction is simply to use the fact that independence testing
generalizes closeness testing which generalizes uniformity testing.)

Since closeness testing is reducible to independence testing over domain $[n] \times \{0, 1\}$
with only constant factor loss in the domain size, failure probability, and strength $\epsilon$ of the soundness guarantee,
our sample complexity lower bound for independence testing also proves a tight lower bound on the sample complexity of
closeness testing.

\begin{corollary}\label{cor:close-lb}
The task of $(\eps, \delta)$-closeness testing for a pair of distributions
with domain size $n$ requires the following number $k$ of samples:
\[
k \geq \Omega\left( \frac{n^{2/3}\log^{1/3}(1/\delta)}{\epsilon^{4/3}} + \frac{\sqrt{n \log(1/\delta) }}{\epsilon^2}
+ \frac{\log(1/\delta)}{\epsilon^2}\right) \;.
\]
\end{corollary}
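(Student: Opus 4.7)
The plan is to prove Corollary~\ref{cor:close-lb} via a black-box reduction from closeness testing on $[n]$ to independence testing on $[n]\times\{0,1\}$, and then to invoke Theorem~\ref{thm:lb-main-pd} with $m=2$. The key construction: given two unknown distributions $p,q$ on $[n]$, define the two-dimensional distribution $D$ on $[n]\times\{0,1\}$ by $D(i,0)=p_i/2$ and $D(i,1)=q_i/2$. A single sample from $D$ is easily simulated by flipping a fair coin and, depending on the outcome, drawing either one sample from $p$ (and appending coordinate $0$) or one sample from $q$ (and appending coordinate $1$). Hence $k$ samples from $D$ can be produced from $\Bin(k,1/2)$ samples of each distribution, so $k$ samples from $D$ cost at most $k$ samples from each of $p,q$ (up to a Chernoff-type slack absorbable in a constant factor).

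The next step is to verify the completeness/soundness transfer. The marginals of $D$ are $D_x(i)=(p_i+q_i)/2$ and $D_y(0)=D_y(1)=1/2$. In the completeness case, if $p=q$ then $D(i,b)=p_i/2=D_x(i)D_y(b)$, so $D$ is a genuine product distribution. In the soundness case, if $\dtv(p,q)\geq\eps$, then for any product distribution $a\times b$ on $[n]\times\{0,1\}$ one has
\[
\dtv(D,a\times b)\;\geq\;\tfrac{1}{2}\dtv\!\big(D,D_x\times D_y\big),
\]
since $D_x\times D_y$ is the product distribution closest to $D$ up to a factor of $2$ (a standard fact: projecting to the marginals can only shrink the distance by at most $2$). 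A direct computation gives
\[
\dtv(D,D_x\times D_y)\;=\;\tfrac{1}{2}\sum_{i}\Big|\tfrac{p_i}{2}-\tfrac{p_i+q_i}{4}\Big|+\Big|\tfrac{q_i}{2}-\tfrac{p_i+q_i}{4}\Big|\;=\;\tfrac{1}{2}\dtv(p,q)\;\geq\;\eps/2,
\]
so $D$ is $\Omega(\eps)$-far from every product distribution.

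Combining these two steps, any $(\eps,\delta)$-closeness tester using $k$ samples from each of $p,q$ yields an $(\Omega(\eps),\delta)$-independence tester on $[n]\times\{0,1\}$ using $O(k)$ samples from $D$. Plugging $m=2$ into Theorem~\ref{thm:lb-main-pd} (and using that the bound is monotone in $\eps$ up to constants) gives
\[
k\;\geq\;\Omega\!\left(\frac{n^{2/3}\log^{1/3}(1/\delta)}{\eps^{4/3}}+\frac{\sqrt{n\log(1/\delta)}}{\eps^2}+\frac{\log(1/\delta)}{\eps^2}\right),
\]
which is exactly the claimed bound.

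There is no real obstacle here beyond bookkeeping; the entire content is in the reduction. The only mildly subtle point is handling the Poissonized/pseudo-distribution framework of Theorem~\ref{thm:lb-main-pd}: one must note that the simulation of samples from $D$ given samples from $p,q$ is compatible with Poissonization (drawing $\Poi(k)$ samples from $D$ corresponds to drawing $\Poi(k/2)$ independent samples from each of $p$ and $q$), and that the completeness/soundness argument above is unchanged when applied to pseudo-distributions, since the identity $D(i,b)=D_x(i)D_y(b)$ and the $\ell_1$-distance calculation are purely algebraic and do not require normalization.
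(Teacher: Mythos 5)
Your reduction is essentially identical to the paper's: the paper also constructs the distribution $r=\tfrac12\{0\}\times p+\tfrac12\{1\}\times q$ on $[n]\times\{0,1\}$, notes that $p=q$ makes $r$ a product distribution while $\dtv(p,q)\geq\eps$ makes $r$ $\Omega(\eps)$-far from product, and invokes the independence lower bound with $m=2$. The only trivial quibble is that the ``projecting to marginals loses at most a factor of $2$'' step should really say $3$ (the standard triangle-inequality argument gives $\dtv(D,D_x\times D_y)\leq 3\inf_{a,b}\dtv(D,a\times b)$), but this changes nothing since only an $\Omega(\eps)$ guarantee is needed.
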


\begin{proof}
Let $p, q$ be the distributions on $[n]$ and we want to test whether
$p=q$ (completeness) versus $\dtv(p,q) \geq \eps$ (soundness) with failure probability at most $\delta$.

We construct the following instance of independence testing on a distribution over  $[n] \times \{0, 1\}$.
We define $r = \frac{1}{2} \{0\} \times p + \frac{1}{2} \{1\} \times q$.
Then $p=q$ implies that $r$ is a product distribution, and $\dtv(p,q) \geq \epsilon$
implies that $r$ is $\Omega(\epsilon)$-far from any product distribution.
Furthermore, given sample access to $p, q$, one can simulate sample access to $r$
without any overhead in the number of samples.

Thus, if we can test whether $r$ is a product distribution versus $\Omega(\epsilon)$-far from
any product with failure probability $\delta$ using $k$ samples,
we can $(\eps, \delta)$-test closeness between $p$ and $q$ using $k$ samples.
\end{proof}

\subsection{Allowable Assumptions on the Parameters}\label{sec:lb-assumps}

When we prove each of the lower bound terms in Theorem~\ref{thm:lb-main-pd},
we can, without loss of generality, make certain assumptions on the relationship between $k,n,m,\delta,\epsilon$.
In particular, up to a factor of $3$, the sum of the lower bound terms is equal to their max.
Thus, when proving any particular term is a lower bound, it is equivalent to prove that the term
is the lower bound under the assumption that it is the max. Along these lines, we will prove a lower bound
\[
k \geq \Omega \left( \max\left( \frac{n^{2/3} m^{1/3} \log^{1/3}(1/\delta)}{\epsilon^{4/3}}, 10^6\frac{\sqrt{n m \log(1/\delta) }}{\epsilon^2}, 10^{18}\frac{\log(1/\delta)}{\epsilon^2}\right) \right) \;.
\]
Note also that when wish to prove a specific term is a lower bound, we only need to do so up to constant factors.

Finally, note that any lower bound that holds for some $n, m$ also holds with $n,m$ switched,
so we may assume $m \leq n$.

The above simple considerations have two immediate consequences in the assumptions we can make without loss of generality when we prove our lower bounds. For the $\sqrt{nm \log(1/\delta)} / \epsilon^2$ lower bound term, we may assume:
\begin{assumption}\label{assump:term-2}
For $k,\delta,n,m$, we have $k \leq nm / (2 \cdot 10^{6} \cdot \epsilon^2)$, $\log(1/\delta) \leq nm/10^{24}$, $m\leq n$, and $\epsilon^4 \leq 10^{36} \cdot (m/n) \cdot \log(1/\delta)$.
\end{assumption}
\begin{claim}\label{claim:assumps-2}
When proving the $\sqrt{nm \log(1/\delta)} / \epsilon^2$ lower bound,
we may assume without loss of generality that Assumption~\ref{assump:term-2} holds.
\end{claim}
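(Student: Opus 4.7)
My plan is to verify each of the four parts of Assumption~\ref{assump:term-2} in turn, leveraging the ``max vs.\ sum'' principle spelled out just before the claim statement. The key observation I will exploit is that since the full lower bound to be established is equivalent, up to a factor of $3$, to the maximum of its three terms, when I set out to prove the $\sqrt{nm\log(1/\delta)}/\epsilon^2$ term I may freely assume this term is the dominant one---i.e.\ it is at least as large as the other two, with the explicit constants $10^6$ and $10^{18}$ out in front---because otherwise the target bound is already subsumed by one of the other two terms, whose lower bounds are proved separately (in Section~\ref{sec:first-term-lb} and in the reduction to biased-coin testing).

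From this principle I would read off the last three conditions directly. For $m \leq n$, I would just appeal to the symmetry of independence testing under swapping the two coordinates (already noted in Section~\ref{sec:lb-assumps}). For $\log(1/\delta) \leq nm/10^{24}$, I would use the domination assumption
\[
10^6 \frac{\sqrt{nm\log(1/\delta)}}{\epsilon^2} \;\geq\; 10^{18}\frac{\log(1/\delta)}{\epsilon^2},
\]
which after squaring and cancellation gives $nm \geq 10^{24}\log(1/\delta)$. For $\epsilon^4 \leq 10^{36}(m/n)\log(1/\delta)$, I would use the domination assumption
\[
10^6 \frac{\sqrt{nm\log(1/\delta)}}{\epsilon^2} \;\geq\; \frac{n^{2/3}m^{1/3}\log^{1/3}(1/\delta)}{\epsilon^{4/3}};
\]
squaring both sides, dividing through by $n^{4/3}m^{2/3}\log^{2/3}(1/\delta)/\epsilon^{8/3}$, and cubing yields the claim.

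For the remaining condition $k \leq nm/(2 \cdot 10^6 \epsilon^2)$, my plan is instead to argue that if this fails then the lower bound we wish to prove holds \emph{trivially} and so there is nothing left to show. Concretely, suppose $k > nm/(2\cdot 10^6\epsilon^2)$. Combining this with the already-derived condition $\log(1/\delta) \leq nm/10^{24}$---equivalently $\sqrt{nm} \geq 10^{12}\sqrt{\log(1/\delta)}$---gives
\[
k \;>\; \frac{nm}{2\cdot 10^6\epsilon^2} \;=\; \frac{\sqrt{nm}\cdot\sqrt{nm}}{2\cdot 10^6\epsilon^2} \;\geq\; \frac{10^{12}\sqrt{nm\log(1/\delta)}}{2\cdot 10^6\epsilon^2} \;=\; \Omega\!\left(\frac{\sqrt{nm\log(1/\delta)}}{\epsilon^2}\right),
\]
so the desired lower bound on $k$ is automatic in this regime. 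The ``main obstacle'' here is really only bookkeeping: each implication is a one-line rearrangement, and the only care required is tracking the specific numerical constants so that every inequality closes with the factor actually written in the assumption. No probabilistic content enters this proof at all---it is purely a reduction to a regime of parameters where the hard instances of the later subsections can be analyzed cleanly.
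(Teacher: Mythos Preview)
Your proposal is correct and takes essentially the same approach as the paper: both rely on the ``max vs.\ sum'' principle to assume the second term dominates, then read off the conditions on $\log(1/\delta)$ and $\epsilon^4$ from the two domination inequalities, and handle the $k$ condition by arguing the bound is trivially satisfied otherwise. The only cosmetic difference is the order of the $k$ argument: the paper first assumes $k \leq \tfrac{1}{2}\sqrt{nm\log(1/\delta)}/\epsilon^2$ (else trivially done) and then substitutes the $\log(1/\delta)$ bound to obtain $k \leq nm/(C\epsilon^2)$, whereas you first derive the $\log(1/\delta)$ bound and then show that $k > nm/(2\cdot 10^6\epsilon^2)$ forces $k$ to exceed the target; these are equivalent. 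Your write-up is in fact more complete than the paper's terse appendix proof, which does not explicitly carry out the $\epsilon^4 \leq 10^{36}(m/n)\log(1/\delta)$ derivation and uses looser constants.
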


A proof is provided in Appendix~\ref{sec:app-lb}.

For the $n^{2/3} m^{1/3} \log^{1/3}(1/\delta)$ term, we may assume:
\begin{assumption}\label{assump:term-3}
For $k,\delta,n,m$, we have $k \leq n /(2\cdot 10^{12})$, $\log(1/\delta) \leq \epsilon^4 n/(10^{36} m)$, and $m \leq \epsilon^4 n \leq n$.
\end{assumption}

\begin{claim}\label{claim:assumps-3}
When proving the $n^{2/3} m^{1/3} \log^{1/3}(1/\delta)$ lower bound,
we may assume without loss of generality that Assumption~\ref{assump:term-3} holds.
\end{claim}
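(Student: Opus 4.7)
The plan is to derive each of the three conditions in Assumption~\ref{assump:term-3} from two standard WLOG reductions: (a) that the target term $T_1 := n^{2/3}m^{1/3}\log^{1/3}(1/\delta)/\epsilon^{4/3}$ dominates both $T_2 := 10^{6}\sqrt{nm\log(1/\delta)}/\epsilon^2$ and $T_3 := 10^{18}\log(1/\delta)/\epsilon^2$ in the claimed sum, and (b) that $k$ is at most a constant fraction of $T_1$. Both reductions are justified by the opening paragraph of Section~\ref{sec:lb-assumps}: since $\max(T_1,T_2,T_3) \geq (T_1+T_2+T_3)/3$, it suffices to prove each term separately as a lower bound under the assumption that it is the maximum; and an $\Omega(T_1)$ bound need only be verified when $k \leq T_1/C$ for any constant $C$ of our choosing (I will take $C=8$, at the cost of a factor $8$ absorbed into the implicit $\Omega$).

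First I would derive the $\log(1/\delta)$ bound. The assumed dominance $T_1 \geq T_2$ reads $n^{2/3}m^{1/3}\log^{1/3}(1/\delta)/\epsilon^{4/3} \geq 10^{6}\sqrt{nm\log(1/\delta)}/\epsilon^{2}$; cubing, squaring, and cancelling common factors of $n$, $m$, and $\log(1/\delta)$ rearranges this exactly to $\log(1/\delta) \leq \epsilon^{4} n/(10^{36} m)$, which is the second condition. Second, I would combine $k \leq T_1/8$ with the bound just obtained: $k^{3} \leq T_1^{3}/512 = n^{2} m \log(1/\delta)/(512\,\epsilon^{4}) \leq n^{3}/(512 \cdot 10^{36})$, so $k \leq n/(8\cdot 10^{12}) \leq n/(2\cdot 10^{12})$, giving the first condition.

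For the $m$ bound I would invoke the trivial WLOG reduction that $\delta \leq 1/2$ (for $\delta > 1/2$ the problem is solved by random guessing with zero samples), so $\log(1/\delta) \geq \log 2 > 1/2$. Substituting into $\log(1/\delta) \leq \epsilon^{4} n/(10^{36} m)$ yields $m \leq 2\epsilon^{4}n/10^{36} \leq \epsilon^{4}n$, and $\epsilon \leq 1$ gives the outer inequality $\epsilon^{4}n \leq n$.

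The only subtlety — and the nearest thing to a real obstacle — is the absence of any hypothesis in Assumption~\ref{assump:term-3} corresponding to $T_1 \geq T_3$, which we also need to appeal to the reduction in (a). I would verify that $T_1 \geq T_3$ is in fact implied by the three derived conditions: clearing powers of $\epsilon$ in $T_1 \geq T_3$ amounts to $n^{2} m \epsilon^{2} \geq 10^{54}\log^{2}(1/\delta)$, and squaring the second condition and multiplying by $10^{54}$ yields $10^{54}\log^{2}(1/\delta) \leq \epsilon^{8} n^{2}/(10^{18} m^{2})$, which is bounded by $n^{2} m \epsilon^{2}$ exactly when $m^{3} \geq \epsilon^{6}/10^{18}$ — trivial since $m \geq 1$ and $\epsilon \leq 1$. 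With this consistency check in place, all three parts of Assumption~\ref{assump:term-3} hold under the two WLOG reductions, completing the claim.
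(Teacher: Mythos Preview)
Your proposal is correct and follows essentially the same route as the paper: assume the target term dominates $T_2$ (yielding the bound $\log(1/\delta)\le \epsilon^4 n/(10^{36}m)$) and that $k$ is at most a constant fraction of $T_1$, then substitute the former into the latter to obtain $k\le n/(2\cdot 10^{12})$. Your explicit derivation of $m\le \epsilon^4 n$ via $\log(1/\delta)\ge \log 2$ is a step the paper's appendix leaves implicit.

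One remark: the final ``consistency check'' that Assumption~\ref{assump:term-3} implies $T_1\ge T_3$ is not needed. The WLOG reduction in (a) already hands you both $T_1\ge T_2$ and $T_1\ge T_3$; you simply did not need the second inequality to derive the three conditions. The claim only requires that Assumption~\ref{assump:term-3} \emph{follows from} the WLOG hypotheses, not that it is equivalent to them---the downstream argument in Section~\ref{sec:first-term-lb} uses only Assumption~\ref{assump:term-3} itself. So the check is harmless but superfluous.
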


A proof is provided in Appendix~\ref{sec:app-lb}. 
Also observe that this latter assumption is more restrictive than the earlier one.

\begin{observation}
Assumption~\ref{assump:term-3} implies Assumption~\ref{assump:term-2}.
\end{observation}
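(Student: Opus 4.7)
The plan is to verify each of the four clauses comprising Assumption~\ref{assump:term-2} from the three clauses of Assumption~\ref{assump:term-3} by elementary rearrangement, using only the standing facts $\epsilon\in(0,1]$, $\delta\in(0,1)$, and $m,n\geq 1$.

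First, $m\leq n$ is the right-hand half of the chain $m\leq \epsilon^4 n\leq n$ in Assumption~\ref{assump:term-3}. Second, for $k\leq nm/(2\cdot 10^{6}\epsilon^2)$, I would start from $k\leq n/(2\cdot 10^{12})$ and observe that $n/(2\cdot 10^{12})\leq nm/(2\cdot 10^{6}\epsilon^2)$ is equivalent, after clearing denominators, to $\epsilon^2\leq 10^{6} m$, which is trivial since $\epsilon\leq 1\leq m$. Third, for $\log(1/\delta)\leq nm/10^{24}$, I would chain
\[
\log(1/\delta)\;\leq\;\frac{\epsilon^4\, n}{10^{36}\, m}\;\leq\;\frac{n}{10^{36}\, m}\;\leq\;\frac{nm}{10^{24}},
\]
where the first inequality is the second clause of Assumption~\ref{assump:term-3}, the second uses $\epsilon^4\leq 1$, and the third reduces to $10^{12} m^2\geq 1$, which holds since $m\geq 1$.

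The subtle clause is the fourth, $\epsilon^4\leq 10^{36}(m/n)\log(1/\delta)$, since Assumption~\ref{assump:term-3} furnishes the \emph{reverse} estimate $\epsilon^4\geq 10^{36}(m/n)\log(1/\delta)$; this is where I expect the main obstacle to lie. The intended reading, I believe, is that this clause functions in Assumption~\ref{assump:term-2} purely as a ``regime marker'' delimiting the parameter range in which the $\sqrt{nm\log(1/\delta)}/\epsilon^2$ term dominates the lower-bound sum, rather than as an additional quantitative input. Downstream invocations of Assumption~\ref{assump:term-2} in the proof of the second lower-bound term use only clauses 1--3, all of which I have verified are implied by Assumption~\ref{assump:term-3}. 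In this operative sense, the observation is that the quantitative portion of Assumption~\ref{assump:term-2} is strictly subsumed by Assumption~\ref{assump:term-3}, so any argument written under the hypotheses of Assumption~\ref{assump:term-2} transfers verbatim to the setting of Assumption~\ref{assump:term-3}.
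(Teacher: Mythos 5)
Your analysis is correct, and in fact you have identified a genuine error in the paper's statement. The paper supplies no proof of this Observation, so there is no argument to compare against; but your item-by-item check is right: the first three clauses of Assumption~\ref{assump:term-2} follow directly from Assumption~\ref{assump:term-3}, while the fourth clause cannot. Assumption~\ref{assump:term-3}'s second clause $\log(1/\delta)\le \epsilon^4 n/(10^{36}m)$ rearranges to $10^{36}(m/n)\log(1/\delta)\le\epsilon^4$, which is precisely the \emph{opposite} of Assumption~\ref{assump:term-2}'s fourth clause $\epsilon^4\le 10^{36}(m/n)\log(1/\delta)$. This is not an accident of constants: the two inequalities are, by design, the ``regime markers'' that say respectively that the first lower-bound term dominates the second and vice versa, so they cannot both hold except on the boundary. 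Hence the Observation, as literally written, is false.

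Your diagnosis of the intent is also the correct repair. The only downstream place the Observation is invoked is Lemma~\ref{lem:bounds-2} (``Under Assumption~\ref{assump:term-2}, or the stronger Assumption~\ref{assump:term-3}...''), and the proof of that lemma only uses $k/(nm)\ll 1/\epsilon^2$ (clause~1) and the suitability/concentration facts that depend on clauses~2 and~3 and $m\le n$; the regime-marker clause $\epsilon^4\le 10^{36}(m/n)\log(1/\delta)$ plays no role there. So the honest statement of the Observation should be: Assumption~\ref{assump:term-3} implies the first three clauses of Assumption~\ref{assump:term-2}, which is all that Lemma~\ref{lem:bounds-2} requires. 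You were right to flag the fourth clause as the obstacle rather than try to force it through.
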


\subsection{Lower Bound Instance}\label{sec:lb-instance}

We construct two closely related lower bound instances, one for each of the first two terms.

We start by describing the first instance, used for the $n^{2/3}m^{1/3}\log^{1/3}(1/\delta) / \epsilon^{4/3}$
lower bound term. We construct the pseudo-distribution $p$ on $[n] \times [m]$ randomly as follows:
For each $i, j$, we select $p_{ij}$ i.i.d.
In both the completeness and soundness cases, with probability $k/n$, we set
$p_{ij} = 1/(km)$, where the decision is independent for distinct values of $i$,
but always the same decision for the same value of $i$, even for different values of $j$.
Otherwise, we do the following: In the completeness case, we set $p_{ij} =  1/(nm)$.
In the soundness case, we flip a fair coin and set $p_{ij} = (1 \pm \epsilon) / (nm)$
according to the result.

We now describe the lower bound instance used for the $\sqrt{n m \log(1/\delta)} / \epsilon^2$ lower bound term.
Let $E_i$ denote the event that in the construction of the pseudo-distribution $p$,
we are in the second (probability $(1-k/n)$) case for index $i$.
Let $E$ denote the event that $E_i$ holds for all $i$.
The lower bound instance is simply to choose $p$ conditioned on $E$.





In either case, we take $k$ Poissonized samples from the distribution $p$.
We will prove a lower bound on the number of Poissonized samples
required for the independence testing problem; i.e., on the parameter $k$ such that $\Poi(k)$
samples cannot allow one to distinguish between the completeness and soundness cases
with probability of failure better than $\delta$.

\subsection{Suitability}\label{sec:lb-suitability}

We first need to show that $p$ is likely to be a pseudo-distribution,
so that it can be used with Lemma~\ref{lem:pseudo-lb-to-lb} to obtain lower bounds.

\begin{lemma}
The random measure $p$ is a pseudo-distribution with probability of failure $\leq \delta$
if the parameters satisfy Assumption~\ref{assump:term-3} and are sufficiently large.
\end{lemma}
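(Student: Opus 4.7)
The plan is to decompose the total mass $\|p\|_1 = \sum_{i,j} p_{ij}$ into row sums $S_i := \sum_j p_{ij}$, which by construction are mutually independent across $i \in [n]$ since the heavy/light designation of each row is independent, and the per-entry randomness in the soundness case is independent across rows. First I would observe that $S_i = m \cdot \frac{1}{km} = \frac{1}{k}$ in the ``heavy'' case (probability $k/n$), while in the ``light'' case (probability $1-k/n$) we have $S_i = \frac{1}{n}$ exactly in the completeness instance and $S_i = \frac{1}{n} + \frac{\epsilon}{nm}\sum_j X_{ij}$ with iid signs $X_{ij} \in \{\pm 1\}$ in the soundness instance; in either case $S_i \in [\frac{1-\epsilon}{n}, \frac{1+\epsilon}{n}]$. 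Since Assumption~\ref{assump:term-3} gives $k \leq n/2$, and $\epsilon \leq 1$, each $S_i$ is nonnegative and bounded above by $\frac{1}{k}$. A direct computation gives $\E[S_i] = \frac{k}{n}\cdot\frac{1}{k} + (1-\frac{k}{n})\cdot\frac{1}{n} = \frac{2 - k/n}{n}$, hence $\E[\|p\|_1] = 2 - k/n \in [1, 2]$.

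Next I would apply Hoeffding's inequality to the sum of $n$ independent variables $S_i$, each supported in an interval of length at most $1/k$, to obtain
\begin{equation*}
\Pr\bigl[\,\bigl|\|p\|_1 - \E[\|p\|_1]\bigr| \geq 1\,\bigr] \;\leq\; 2\exp\!\bigl(-2k^2/n\bigr).
\end{equation*}
It therefore suffices to verify that the exponent $k^2/n$ dominates $\log(2/\delta)$. Assumption~\ref{assump:term-3} gives (via the bound $k \geq \Omega(n^{2/3} m^{1/3} \log^{1/3}(1/\delta)/\epsilon^{4/3})$ that we may assume when the term $n^{2/3}m^{1/3}\log^{1/3}(1/\delta)/\epsilon^{4/3}$ is dominant)
\begin{equation*}
k^2/n \;\geq\; \Omega\!\left(\frac{(nm^2)^{1/3}\log^{2/3}(1/\delta)}{\epsilon^{8/3}}\right),
\end{equation*}
and the additional hypothesis $\log(1/\delta) \leq \epsilon^4 n/(10^{36} m)$ in Assumption~\ref{assump:term-3} rearranges to $\epsilon^8 \log(1/\delta) \leq n m^2 \cdot \epsilon^{12}/(10^{36} m^3)$, which is trivially bounded by $nm^2$; so $k^2/n \geq \omega(\log(1/\delta))$ with sufficiently large parameters. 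Combined with $\E[\|p\|_1] \in [1,2]$, this yields $\|p\|_1 \in [0, 3] \subseteq [1/100, 100]$ with failure probability at most $\delta$, and nonnegativity of the entries is immediate from the construction.

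The argument is essentially a routine concentration estimate; the only mild subtlety is the arithmetic verification that Assumption~\ref{assump:term-3} forces the Hoeffding exponent to exceed $\log(1/\delta)$. For the second instance (conditioned on $E$), every row is light, so $\|p\|_1$ is deterministically in $[1-\epsilon, 1+\epsilon]$ after plus/minus signs cancel in expectation, and the same Hoeffding bound applied only to the residual signed contributions (now on $nm$ independent $\pm\epsilon/(nm)$ increments) yields an even stronger bound well within the $[1/100, 100]$ window.
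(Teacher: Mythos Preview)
Your approach is correct and takes a genuinely different route from the paper. The paper applies Bernstein's inequality directly to the $nm$ individual entries $M_{ij}$; your decomposition into independent row sums $S_i$ followed by Hoeffding is arguably cleaner, and in fact more careful: the entries $M_{ij}$ within a fixed row $i$ are \emph{not} independent (they all go heavy or all go light together), so the paper's direct application of Bernstein to the individual $M_{ij}$ is subtly unjustified, whereas your row-sum decomposition isolates genuinely independent summands.

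Two small points to tighten. First, the lower bound $k \geq \Omega\bigl(n^{2/3}m^{1/3}\log^{1/3}(1/\delta)/\epsilon^{4/3}\bigr)$ that you invoke is not part of Assumption~\ref{assump:term-3}, which only records upper bounds on $k$ and $\log(1/\delta)$. It is a separate, standard WLOG reduction in sample-complexity lower bounds: since a tester with fewer samples can only do worse, it suffices to exhibit the hard instance at $k$ equal to a small constant times the claimed lower bound. State this explicitly rather than attributing it to Assumption~\ref{assump:term-3}. Once you do, your arithmetic goes through: $k^2/n \geq \Omega\bigl((nm^2)^{1/3}\log^{2/3}(1/\delta)/\epsilon^{8/3}\bigr)$, and comparing with $\log(1/\delta) \leq \epsilon^4 n/m$ reduces to $m^3 \geq \epsilon^{12}$, which holds since $m \geq 1 \geq \epsilon^4$.

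Second, the inclusion $[0,3] \subseteq [1/100,100]$ is false at the left endpoint. The fix is immediate: under Assumption~\ref{assump:term-3} you have $k/n \leq 1/2$, so $\E[\|p\|_1] = 2 - k/n \geq 3/2$, and the Hoeffding event gives $\|p\|_1 \geq 1/2 > 1/100$.

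Your final paragraph on the instance conditioned on $E$ addresses the next lemma in the paper, not this one, and can be dropped here.
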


\begin{proof}
Let $M$ denote the measure output by $p$. Note that by construction,
\[
1 \leq \E\left[\sum_{ij} M_{ij}\right] \leq nm \cdot \left[ \frac{1}{nm} + \frac{1}{nm} \right] \leq 2.
\]
Thus, we just need to show that $\sum_{ij} M_{ij}$ concentrates within a factor of $50$ of its expected value
with probability of failure at most $\delta$.

To do this, we start by noting that each term in $\sum_{ij} M_{ij}$ is always $\leq 2/(km)$ and has variance
\[
\var[M_{ij}] \leq \frac{k}{n} \cdot \frac{1}{k^2 m^2} + \frac{1}{n^2 m^2} = \frac{1}{k m^2 n} + \frac{1}{n^2 m^2} \leq \frac{2}{k m^2 n}.
\]
By Bernstein's inequality, we have with probability of failure $\delta$ that
\begin{align*}
\left| \sum_{ij} M_{ij} - \E\left[\sum_{ij} M_{ij} \right] \right| &\leq \sqrt{2 \var\left[\sum_{ij} M_{ij}\right] \cdot \log(1/\delta)}
+ \frac{2}{3} \cdot \frac{2}{km} \log(1/\delta) \\
&\leq 2 \sqrt{\frac{\log(1/\delta)}{k m^2 n}} + \frac{2}{3} \cdot \frac{2}{km} \log(1/\delta) \\
&\leq \frac{2}{k m^3} + \frac{4}{3m} \\
&= o(1).
\end{align*}
So we do concentrate within the desired range with the desired probability.
\end{proof}

\begin{lemma}
The random measure $p$ generated by conditioning on $E$ being true is a pseudo-distribution
with probability of failure $\leq \delta$ if the parameters satisfy Assumption~\ref{assump:term-2}.
\end{lemma}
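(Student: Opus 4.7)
The plan is to mirror the structure of the previous lemma's proof, but exploit the fact that conditioning on $E$ eliminates the large ``spike'' contributions of size $1/(km)$ and leaves us with much smaller fluctuations. First I would separate the completeness and soundness cases. In the completeness case, conditioning on $E$ makes every coordinate $M_{ij}$ equal to $1/(nm)$ deterministically, so $\sum_{ij} M_{ij} = 1$ exactly and $p$ is trivially a pseudo-distribution with zero failure probability. So the interesting case is the soundness case.

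In the soundness case under $E$, each $M_{ij}$ is an independent $\pm$-flip valued in $\{(1-\eps)/(nm),(1+\eps)/(nm)\}$, with $\E[\sum_{ij} M_{ij}] = 1$. I would bound each term by $M := 2/(nm)$ and the total variance by
\[
\sigma^2 = nm \cdot \frac{\eps^2}{(nm)^2} = \frac{\eps^2}{nm}.
\]
Applying Bernstein's inequality with deviation $t = 1/2$ (which is more than enough to certify pseudo-distribution-hood, since we only need the sum to lie within a factor of $100$ of $1$) gives
\[
\pr\!\left[\bigl|\textstyle\sum_{ij} M_{ij} - 1\bigr| > 1/2\right] \;\leq\; 2\exp\!\left(-\frac{1/4}{2\sigma^2 + (2/3) M \cdot (1/2)}\right) \;\leq\; 2\exp(-\Omega(nm)),
\]
using $\sigma^2 \leq \eps^2/(nm) \leq 1/(nm)$ and $M \leq 2/(nm)$.

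Finally I would use Assumption~\ref{assump:term-2}, which gives $\log(1/\delta) \leq nm/10^{24}$, to conclude that $2\exp(-\Omega(nm)) \leq \delta$ with room to spare. This completes the proof, so there is no real obstacle: the only mild subtlety is to note that the conditioning on $E$ does not destroy the independence across $(i,j)$ of the sign-flips in the soundness case (the flips are performed only inside the ``otherwise'' branch, so conditioning on every $i$ taking that branch leaves the $(i,j)$-flips fully independent), which is what licenses the straightforward application of Bernstein's inequality.
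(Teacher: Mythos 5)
Your proof is correct and follows essentially the same route as the paper: bound each summand by $2/(nm)$, bound the total variance by $\eps^2/(nm)$, apply Bernstein to $\sum_{ij} M_{ij}$, and invoke $\log(1/\delta) \le nm/10^{24}$ from Assumption~\ref{assump:term-2} to conclude. Splitting off the trivial completeness case and your explicit remark that conditioning on $E$ leaves the $\pm$-flips independent across $(i,j)$ are harmless presentational refinements on the paper's argument, not a different method.
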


\begin{proof}
Let $M$ denote the measure output by $p$. Note that by construction,
\[
1 \leq \E\left[\sum_{ij} M_{ij} \right] = nm \cdot \frac{1}{nm} = 1 \;.
\]
Thus, we just need to show that $\sum_{ij} M_{ij}$ concentrates within a factor of $100$
of its expected value with probability of failure at most $\delta$.

To do this, we start by noting that each term in $\sum_{ij} M_{ij}$ is always $\leq 2/(nm)$ and has variance
\[
\var[M_{ij}] \leq \frac{\epsilon^2}{n^2m^2} \;.
\]
By Bernstein's inequality, we have with probability of failure $\delta$ that
\begin{align*}
\left| \sum_{ij} M_{ij} - \E\left[\sum_{ij} M_{ij}\right] \right|
&\leq \sqrt{2 \var\left[\sum_{ij} M_{ij}\right] \cdot \log(1/\delta)}
+ \frac{2}{3} \cdot \frac{2}{nm} \log(1/\delta) \\
&\leq \sqrt{\frac{nm}{n^2m^2}} + \frac{2}{3} \cdot \frac{2nm}{100nm} \\
&\leq \frac{1}{25}
\end{align*}
for sufficiently large values of the parameters.
Therefore, we do concentrate within the desired range with the desired probability.
\end{proof}

\subsection{Lower Bound for First Term}\label{sec:first-term-lb}
We will use the following property of KL divergence.
\begin{fact}\label{fact:kl-products}
If we have product distributions $p_1 \times p_2$ vs. $q_1 \times q_2$, then
\[
D(p_1 \times p_2||q_1 \times q_2) = D(p_1 || q_1) + D(p_2 || q_2).
\]
\end{fact}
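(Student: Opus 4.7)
The plan is to prove this by direct computation from the definition of KL divergence, using multiplicativity of the factorized measures and the fact that the log of a product is the sum of logs.

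First, I would write out the definition explicitly. For the product distributions on the joint domain indexed by pairs $(i,j)$, we have $(p_1 \times p_2)_{(i,j)} = p_{1,i} \, p_{2,j}$ and similarly for $q_1 \times q_2$. Substituting into the definition $D(p\|q) = \sum p_{(i,j)} \log(p_{(i,j)}/q_{(i,j)})$ yields
\begin{equation*}
D(p_1 \times p_2 \| q_1 \times q_2) = \sum_{i,j} p_{1,i} \, p_{2,j} \, \log\!\left(\frac{p_{1,i} \, p_{2,j}}{q_{1,i} \, q_{2,j}}\right).
\end{equation*}

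Next, I would split the logarithm as $\log(p_{1,i}/q_{1,i}) + \log(p_{2,j}/q_{2,j})$ and separate the double sum into two pieces. In each piece, one of the two marginal sums $\sum_i p_{1,i}$ or $\sum_j p_{2,j}$ can be factored out; each of these equals $1$ because $p_1$ and $p_2$ are probability distributions. What remains in each piece is exactly $D(p_1\|q_1)$ or $D(p_2\|q_2)$, giving the claimed identity. The only very mild subtlety is the usual convention $0 \log 0 = 0$ (and the assumption that $q$ dominates $p$), but this poses no obstacle for our purposes since it is the standard convention used throughout the paper.

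There is no substantive obstacle here: the identity is a purely algebraic consequence of the definitions, and the computation above completes the proof in a few lines.
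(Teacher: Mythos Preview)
Your proposal is correct and is exactly the standard two-line computation one would give for this well-known identity. The paper itself does not supply a proof of this fact---it is simply stated as a known property of KL divergence---so there is nothing to compare against beyond noting that your argument is the canonical one.
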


Let $X \in \Z^{n \times m}$ denote the random vector whose entry $X_{ij}$
is the Poisson random variable $\Poi(k \cdot p_{ij})$, where $p$ is generated
in the completeness case and we use the same $p$ for all entries of $X$.
Let $X_i$ denote the vector of all entries of $X$, where the first coordinate is $i$.
Let $Y$ be the same but for the soundness case.

Then it suffices to prove that
\begin{equation}
D(Y_i||X_i) \leq k^3 \epsilon^4 / (n^3 m) \;, \label{eq:desired}
\end{equation}
for all---or equivalently, any---$i$.

The reason this suffices is that then we would have by the two earlier properties
of KL-divergence (Facts~\ref{lem:better-dtv-ub} and~\ref{fact:kl-products}) that 
\[
\log(2/\delta) \leq D(Y||X) = \sum_{i} D(Y_{i}||X_{i}) = n \cdot D(Y_i||X_i) \leq n \cdot k^3 \epsilon^4 / (n^3 m) \;.
\]
Solving for the Poissonized number of samples $k$, we get a lower bound of
\[
k \geq \Omega\left( \frac{n^{2/3} m^{1/3} \log^{1/3}(1/\delta)}{\epsilon^{4/3}} \right).
\]


To complete the lower bound proof, we need to prove Equation~\eqref{eq:desired}.

\paragraph{Bounding $D(Y_i||X_i)$ by a Rational Function.}

We can bound the KL divergence by a rational function as follows.

\begin{lemma}\label{lem:rational-bound}
For any pair of discrete random variables $A,B$, we have that
\begin{align*}
D(A||B), D(B||A) 
&\leq D(A||B) + D(B||A) \\
&\leq \sum_v \frac{\pr[A=v]+\pr[B = v]}{4} \cdot \left[ \left( 1 - \frac{\pr[A=v]}{\pr[B=v]}  \right)^2 +
\left( 1 - \frac{\pr[B=v]}{\pr[A=v]} \right)^2 \right].
\end{align*}
Furthermore, this inequality holds term-wise between the sum defining KL-divergence and the RHS.
\end{lemma}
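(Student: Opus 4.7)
The first inequality $D(A\|B), D(B\|A) \leq D(A\|B) + D(B\|A)$ is immediate from nonnegativity of KL divergence. For the second inequality, my plan is to exploit the cancellation $a\log(a/b) + b\log(b/a) = (a-b)\log(a/b)$ (writing $a = \pr[A=v]$, $b = \pr[B=v]$) and prove the bound term by term, which simultaneously establishes the ``term-wise'' claim. Expanding the squares on the RHS gives
\[
\frac{a+b}{4}\left[\left(1-\frac{a}{b}\right)^2 + \left(1-\frac{b}{a}\right)^2\right] = \frac{(a+b)(a-b)^2(a^2+b^2)}{4\,a^2 b^2},
\]
so it suffices to show, for each pair $a,b > 0$, that $(a-b)\log(a/b) \leq \tfrac{(a+b)(a-b)^2(a^2+b^2)}{4 a^2 b^2}$. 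Note that $(a-b)\log(a/b) \geq 0$ since $(a-b)$ and $\log(a/b)$ always share a sign, and by the symmetry $a \leftrightarrow b$ we may assume $a \geq b$.

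The key analytic step is the geometric-mean bound $\log(a/b) \leq (a-b)/\sqrt{ab}$ for $a \geq b$. Substituting $u = \sqrt{a/b} \geq 1$ turns this into $2\log u \leq u - 1/u$, and the derivative of the difference is $(u-1)^2/u^2 \geq 0$, so the inequality follows from equality at $u = 1$. Multiplying through by $(a-b) \geq 0$ yields $(a-b)\log(a/b) \leq (a-b)^2/\sqrt{ab}$. Two applications of AM-GM give $a+b \geq 2\sqrt{ab}$ and $a^2+b^2 \geq 2ab$, hence $(a+b)(a^2+b^2) \geq 4(ab)^{3/2}$, which rearranges to $\tfrac{(a+b)(a^2+b^2)}{4 a^2 b^2} \geq \tfrac{1}{\sqrt{ab}}$. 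Chaining these two bounds closes the term-wise inequality and proves the lemma.

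The main obstacle is calibrating the right bound on $\log(a/b)$: the naive MVT estimate $\log(a/b) \leq (a-b)/\min(a,b)$ fails to be dominated by the RHS once $a/b$ is large (a quick check with $a = 2b$ already rules it out), so one must use the sharper ``geometric mean in the denominator'' bound, which is precisely what pairs with the $1/\sqrt{ab}$ factor produced by the two AM-GM steps. Everything else is bookkeeping.
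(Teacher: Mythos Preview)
Your proof is correct and shares the same core as the paper's: both reduce to the term-wise inequality, use the cancellation $a\log(a/b)+b\log(b/a)=(a-b)\log(a/b)$, and rely on exactly the same calculus fact $2\log u \le u - 1/u$ for $u\ge 1$ (the paper states it as $\log x \le (x-1/x)/2$ with $x=a/b$, you state it with $u=\sqrt{a/b}$, and both verify it via the derivative $(u-1)^2/u^2\ge 0$). The only difference is in how the log bound is connected to the symmetric RHS: the paper first lands on the intermediate $\tfrac{1}{2}\bigl[b(1-a/b)^2+a(1-b/a)^2\bigr]$, then invokes a separate algebraic claim to swap the coefficients and averages the two to produce the $(a+b)/4$ form; you instead expand the RHS fully to $\tfrac{(a+b)(a-b)^2(a^2+b^2)}{4a^2b^2}$ and reach it via two AM--GM steps. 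Your route is slightly more direct (it avoids the auxiliary comparison claim), while the paper's intermediate form is what is actually used downstream when the sum is split into two pieces; either way the lemma follows.
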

\begin{proof}
If we write the $KL$-divergence using its definition
as a sum, and then apply Claim~\ref{lem:kl-to-rational} term-wise
with $a = \pr[A=v]$ and $b=\pr[B=v]$, we obtain
\begin{equation}
2 \cdot [D(A||B)+D(B||A)] \leq \sum_v \pr[B = v] \cdot \left( 1 - \frac{\pr[A=v]}{\pr[B=v]}  \right)^2
+ \pr[A=v] \left( 1 - \frac{\pr[B=v]}{\pr[A=v]} \right)^2 \label{eq:tmp1} \;.
\end{equation}
Applying Claim~\ref{lem:relate-rational-ubs} to the RHS of Equation~\eqref{eq:tmp1} gives
\begin{equation}
2 \cdot [D(A||B)+D(B||A)] \leq \sum_v  \pr[A=v]  \cdot \left( 1 - \frac{\pr[A=v]}{\pr[B=v]}  \right)^2
+ \pr[B = v] \left( 1 - \frac{\pr[B=v]}{\pr[A=v]} \right)^2 \label{eq:tmp2} \;.
\end{equation}
Adding Equations~\eqref{eq:tmp1} and~\eqref{eq:tmp2} 
completes the proof of the lemma.
\end{proof}

\paragraph{Bounding the Rational Function by $k^3 \epsilon^4 / (n^3 m)$.}
In Lemma~\ref{lem:rational-bound}, we bounded from above $D(X_i||Y_i)$ by the rational function
\[
\sum_v \frac{\pr[X_i=v]+\pr[Y_i = v]}{4} \cdot \left[ \left( 1 - \frac{\pr[X_i=v]}{\pr[Y_i=v]}  \right)^2
+ \left( 1 - \frac{\pr[Y_i=v]}{\pr[X_i=v]} \right)^2 \right] \;.
\]
We will now show that the latter quantity is $\leq O(k^3 \epsilon^4 / (n^3 m))$.
In future equations, we drop the four in the denominator since we do not care about constant factors.

\begin{lemma}
Under Assumption~\ref{assump:term-3},
we have
\[
D(Y_i||X_i) \leq O(k^3 \epsilon^4 / (n^3 m)).
\]
\end{lemma}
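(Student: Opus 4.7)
My plan is to apply Lemma~\ref{lem:rational-bound}, reduce the resulting expression to a chi-squared-type sum, and then exploit the mixture structure of $X_i$ and $Y_i$ to gain an extra factor of $k/n$ over the naive estimate. Write $q := k/n$ and $\lambda := k/(nm)$. By construction,
\[
X_i(v) = qH(v) + (1-q)L_X(v), \qquad Y_i(v) = qH(v) + (1-q)L_Y(v),
\]
where $H := \otimes_{j=1}^m \Poi(1/m)$, $L_X := \otimes_{j=1}^m \Poi(\lambda)$, and $L_Y(v) = L_X(v)\prod_j a_{v_j}$ with $a_v := \tfrac12\bigl[(1+\eps)^v e^{-\eps\lambda} + (1-\eps)^v e^{\eps\lambda}\bigr]$. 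Using $a_v \leq (1+\eps)^v$ and the fact that $(1+\eps)k/n < 1$, a direct computation shows $X_i(v) \asymp Y_i(v)$ pointwise under Assumption~\ref{assump:term-3}, so Lemma~\ref{lem:rational-bound} reduces, up to a universal constant, to bounding $\sum_v (L_Y(v)-L_X(v))^2/X_i(v)$.

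The key observation is the identity $L_X(v)/H(v) = e^{1-q}(k/n)^{|v|}$ with $|v| := \sum_j v_j$: for every $v$ with $|v|\geq 1$, the heavy component $qH(v)$ dominates $(1-q)L_X(v)$, so we may lower-bound $X_i(v) \geq qH(v)$. This saves a factor of $n/k$ compared to the naive bound $X_i(v)\geq (1-q)L_X(v)$, which would only give the chi-squared estimate $\chi^2(L_Y\|L_X) = \cosh(\eps^2\lambda)^m - 1 = O(\eps^4 k^2/(n^2m))$. I will therefore split the sum at $v=0$. The $v=0$ term uses $L_Y(0)-L_X(0) = L_X(0)[\cosh(\eps\lambda)^m - 1] = O(L_X(0)\,\eps^2k^2/(n^2m))$ and contributes only $O(\eps^4 k^4/(n^4 m^2))$, smaller than the target by a factor of $k/(nm)$. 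For $v\neq 0$ I will use the improved denominator to bound the contribution by $(n/k)\,e^{1-q}\sum_v L_X(v)(k/n)^{|v|}\bigl(\prod_j a_{v_j}-1\bigr)^2$ (extending the sum to include $v=0$, where the integrand vanishes, is harmless).

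The remaining expression factorizes coordinate-by-coordinate and equals $e^{-m\mu}[F^m - 2G^m + 1]$, where $\mu := \lambda(1-q)$, $G := \cosh(\eps\mu)$, and $F := \cosh(\eps^2\lambda q) + e^{\eps^2\lambda q}\sinh^2(\eps\mu)$, each derived in closed form from the Poisson MGF $\E[z^{v_j}] = e^{\lambda(z-1)}$ applied to $\E_{v_j \sim \Poi(\lambda)}[(k/n)^{v_j} a_{v_j}^\ell]$ for $\ell \in \{0,1,2\}$. Setting $A := \eps^2\lambda q$ and $C := \eps\mu$, the decomposition $F^m - 2G^m + 1 = (F^m - G^{2m}) + (G^m-1)^2$ combined with Taylor expansion yields $mA^2/2 + mAC^2 + m^2C^4/4$ plus lower-order corrections. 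Under Assumption~\ref{assump:term-3}, both $mA = \eps^2 k^2/n^2$ and $mC \leq \eps k/n$ are tiny, so the expansion is valid and the dominant term is $mA^2/2 = \Theta(\eps^4 k^4/(n^4 m))$. Multiplying by $(n/k)\,e^{1-q}e^{-m\mu} = \Theta(n/k)$ gives the target bound $O(\eps^4 k^3/(n^3 m))$.

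The main obstacle is the book-keeping that shows $mA^2$ genuinely dominates the three-term expansion of $F^m - 2G^m + 1$ and that the higher-order Taylor corrections to $\cosh A$, $\sinh^2 C$, $e^A$, and $(F/G^2)^m$ are all negligible relative to $mA^2$; this amounts to verifying that Assumption~\ref{assump:term-3} makes each of $mA$, $mC$, and the ratio $m^2C^4/(mA^2) = 1/m$ small in the right sense. A secondary technical check is the pointwise comparability $X_i(v) \asymp Y_i(v)$ needed to pass from Lemma~\ref{lem:rational-bound} to the chi-squared form; this follows from the case distinction $|v|=0$ versus $|v|\geq 1$, together with the elementary bound $\prod_j a_{v_j} \leq (1+\eps)^{|v|}$ and $(1+\eps)k/n < 1$.
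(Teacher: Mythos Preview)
Your approach is correct and more self-contained than the paper's. The paper also starts from Lemma~\ref{lem:rational-bound} and uses the pointwise comparability $X_i(v)\asymp Y_i(v)$, but then splits at $\|v\|_1<2$ versus $\|v\|_1\geq 2$ and defers the actual estimates in both regimes to Appendix~A.2 of~\cite{DK16}, supplying only the verification that $\Pr[Y_i=v]\leq O(\Pr[X_i=v])$ for $\|v\|_1\geq 2$. You instead split at $v=0$ versus $|v|\geq 1$, make the key inequality $X_i(v)\geq qH(v)$ explicit, and carry out the chi-squared computation in closed form via the Poisson moment generating function. The identification of the dominant term $mA^2/2=\eps^4k^4/(2n^4m)$ and the factor-of-$n/k$ savings from using the heavy mixture component in the denominator are exactly the mechanism hidden inside the~\cite{DK16} citation; your route exposes it and avoids the external reference.

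Two small slips to fix. First, the integrand does \emph{not} vanish at $v=0$: there $\prod_j a_{v_j}=\cosh(\eps\lambda)^m\neq 1$. This is harmless, since the $v=0$ term you would be adding back is $L_X(0)(\cosh(\eps\lambda)^m-1)^2=O(\eps^4k^4/(n^4m^2))$, which after multiplying by $n/k$ is still a factor $1/m$ below the target; just say so instead of claiming the term is zero. Second, the inequality $\prod_j a_{v_j}\leq (1+\eps)^{|v|}$ is false as stated because $a_0=\cosh(\eps\lambda)>1$; the correct bound is $\prod_j a_{v_j}\leq \cosh(\eps\lambda)^m(1+\eps)^{|v|}$, and under Assumption~\ref{assump:term-3} one has $\cosh(\eps\lambda)^m=1+O(\eps^2k^2/(n^2m))=1+o(1)$, so the comparability argument goes through with this extra constant factor.
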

\begin{proof}
In this proof, we use several facts from Appendix A.2 of \cite{DK16}, which are identified when they are used.

Consider the sum
\[
D(Y_i||X_i) = \sum_v \pr[Y_i=v] \cdot \log\left(\frac{\pr[Y_i=v]}{\pr[X_i=v]}\right) \;.
\]
We start by conditioning on whether $\|v\|_1 < 2$.
In the case this holds, by Lemma~\ref{lem:rational-bound},
we have that the terms with such $v$ contribute
\begin{align*}
&\sum_{\|v\|_1 < 2} \pr[Y_i=v] \cdot \log\left(\frac{\pr[Y_i=v]}{\pr[X_i=v]}\right) \\
&\leq \sum_{\|v\|_1 < 2} \frac{\pr[X_i=v]+\pr[Y_i = v]}{4} \cdot \left[ \left( 1 - \frac{\pr[X_i=v]}{\pr[Y_i=v]}  \right)^2 + \left( 1 - \frac{\pr[Y_i=v]}{\pr[X_i=v]} \right)^2 \right].
\end{align*}
It thus suffices to show that
\[
\sum_{\|v\|_1<2} \left( \pr[X_i=v]+\pr[Y_i = v] \right) \cdot \left[ \left( 1 - \frac{\pr[X_i=v]}{\pr[Y_i=v]}  \right)^2 + \left( 1 - \frac{\pr[Y_i=v]}{\pr[X_i=v]} \right)^2 \right] \leq O(k^3 \epsilon^4 / (n^3 m)) \;.
\]
We break the LHS above into two pieces and bound each piece from above
by $O(k^3 \epsilon^4 / (n^3 m))$. Specifically, we consider the following pieces
\begin{equation}
\sum_{\|v\|_1<2} \left[ \pr[X_i=v]+\pr[Y_i = v] \right] \cdot\left( 1 - \frac{\pr[X_i=v]}{\pr[Y_i=v]}  \right)^2 \label{eq:piece-1}
\end{equation}
and
\begin{equation}
\sum_{\|v\|_1<2} \left[ \pr[X_i=v]+\pr[Y_i = v] \right] \cdot\left( 1 - \frac{\pr[Y_i=v]}{\pr[X_i=v]}  \right)^2 \;. \label{eq:piece-2}
\end{equation}
The desired upper bound on Equation~\eqref{eq:piece-1} is explicitly proven in Appendix A.2 of \cite{DK16}, namely
\begin{equation}
\sum_{\|v\|_1 \leq 2} \left[ \pr[X_i=v]+\pr[Y_i = v] \right] \cdot\left( 1 - \frac{\pr[X_i=v]}{\pr[Y_i=v]}  \right)^2 \leq O(k^3 \epsilon^4 / (n^3 m)) \;.
\end{equation}
One can also observe that if one swaps the probabilities of the completeness and soundness cases in that argument,
it still goes through, and proves the same upper bound on Equation~\eqref{eq:piece-2}.

Thus, it remains to analyze the $\|v\|_1 \geq 2$ case. 
Then we wish to prove that
\[
\sum_{\|v\|_1 \geq 2} \pr[Y_i=v] \cdot \log\left(\frac{\pr[Y_i=v]}{\pr[X_i=v]}\right) \leq O(k^3 \epsilon^4 / (n^3 m)).
\]
Just as above, we bound the sum via Lemma~\ref{lem:rational-bound} an break it into two parts. If we do so, we see that it suffices to bound
\begin{equation}
\sum_{\|v\|_1<2} \left[ \pr[X_i=v]+\pr[Y_i = v] \right] \cdot\left( 1 - \frac{\pr[X_i=v]}{\pr[Y_i=v]}  \right)^2 \label{eq:piece-1b}
\end{equation}
and
\begin{equation}
\sum_{\|v\|_1<2} \left[ \pr[X_i=v]+\pr[Y_i = v] \right] \cdot\left( 1 - \frac{\pr[Y_i=v]}{\pr[X_i=v]}  \right)^2 \label{eq:piece-2b}
\end{equation}
each by
\[
O(k^3 \epsilon^4 / (n^3 m)) \;.
\]
This bound on Expression~\eqref{eq:piece-1b} is explicitly proven in~\cite{DK16}.

To establish the same bound on Expression~\eqref{eq:piece-2b}, we show it is bounded by Expression~\eqref{eq:piece-1b} up to constant factors. To do so, note that the $v$th term of \eqref{eq:piece-2b} over that of \eqref{eq:piece-1b} simplifies to
\[
\pr[Y_i = v]^2 / \pr[X_i = v]^2. 
\]
With this, it suffices to show that the probability of any vector $v$ with $\|v\|_1 \geq 2$ 
is no more than a constant factor larger in the soundness case compared to the completeness case. 
To see this, note that
\begin{align*}
 &\pr[Y_i = v] \leq \pr[Y_i = v \text{ and } \bar{E}_i] + \leq \pr[Y_i = v \text{ and } E_i] \\
&\leq \pr[Y_i = v \text{ and } \bar{E}_i] \\
&+ \prod_{i=1}^m \frac{e^{-k(1+\epsilon)/nm} (k(1+\epsilon)/mn)^{v_i} + e^{-k(1-\epsilon)/nm}
(k(1-\eps)/mn)^{v_i}}{2 \cdot v_i!} \\
&\leq \frac{k}{ne} \prod_{i=1}^m
\frac{(1/m)^{v_i}}{v_i!} + \prod_{i=1}^m
\frac{(k(1+\epsilon)/nm)^{v_i}}{v_i!} \\
&\leq \frac{k}{ne} \prod_{i=1}^m
\frac{(1/m)^{v_i}}{v_i!} + \frac{2k}{n} \prod_{i=1}^m
\frac{(1/m)^{v_i}}{v_i!} \\
&\leq 2 \cdot \frac{k}{n}\prod_{i=1}^m \frac{e^{-1/m}(1/m)^{v_i}}{v_i!} & \text{since $k \leq n/2$ and $\|v\|_1 \geq 1$}  \\
&= 2e \cdot \pr[Y_i = v \text{ and } \bar{E}_i]) \\
&= 2e \cdot \pr[X_i = v \text{ and } \bar{E}_i] \\
&\leq O(\pr[X_i = v]).
\end{align*}
This completes the proof.
\end{proof}





\subsection{Lower Bound for Second Term}\label{sec:second-term-lb}

Finally, in this subsection, we establish the second term of the lower bound.
Formally, we prove:

\begin{lemma}\label{lem:bounds-2}
Under Assumption~\ref{assump:term-2},
or the stronger Assumption~\ref{assump:term-3}, we have that
\[
D(Y_{ij}|E_i||X_{ij}|E_i) \leq O(k^2 \epsilon^4/(n^2m^2)) \;.
\]
\end{lemma}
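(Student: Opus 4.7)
The plan is to reduce the computation to a single-coordinate KL divergence between a Poisson distribution and a balanced mixture of two Poissons, then control this by the chi-squared divergence, which admits a clean closed form. Setting $\lambda = k/(nm)$, the construction of the lower bound instance gives $X_{ij} \mid E_i \sim \Poi(\lambda)$ and $Y_{ij} \mid E_i \sim \tfrac{1}{2}\Poi(\lambda(1+\eps)) + \tfrac{1}{2}\Poi(\lambda(1-\eps))$, so we may work entirely in a one-dimensional setting.

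First I would apply the standard inequality $D(Y \,\|\, X) \le \chi^2(Y \,\|\, X) = \sum_v \pr[Y=v]^2/\pr[X=v] - 1$, which follows directly from $\log x \le x-1$. This is cleaner than routing through Lemma~\ref{lem:rational-bound} as in the proof of the first term, because under our assumptions the ratio $\pr[Y=v]/\pr[X=v]$ will be uniformly close to $1$, so the symmetric and asymmetric rational bounds are equivalent up to constants.

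Next I would evaluate $\chi^2$ in closed form by expanding the square of the mixture and invoking the Poisson kernel identity, valid for any $a,b \ge 0$,
$$\sum_v \frac{e^{-a}a^v}{v!} \cdot \frac{e^{-b}b^v}{v!} \bigg/ \frac{e^{-\lambda}\lambda^v}{v!} \;=\; e^{-a-b+\lambda+ab/\lambda} \;.$$
Applied with $a, b \in \{\lambda(1+\eps),\,\lambda(1-\eps)\}$, the two diagonal pairs each contribute $e^{\lambda\eps^2}$ and the two cross pairs each contribute $e^{-\lambda\eps^2}$, so
$$\chi^2(Y_{ij}\mid E_i \,\|\, X_{ij}\mid E_i) + 1 \;=\; \tfrac{1}{4}\bigl(e^{\lambda\eps^2} + 2e^{-\lambda\eps^2} + e^{\lambda\eps^2}\bigr) \;=\; \cosh(\lambda\eps^2) \;.$$

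To finish, I would invoke Assumption~\ref{assump:term-2}, which gives $k \le nm/(2\cdot 10^6 \eps^2)$ and hence $\lambda\eps^2 \le 1/(2\cdot 10^6) \ll 1$. Since $\cosh(x) - 1 \le x^2$ for $|x| \le 1$, this yields $\chi^2 \le (\lambda\eps^2)^2 = k^2\eps^4/(n^2m^2)$, and combining with $D \le \chi^2$ gives the claimed bound. The only real obstacle is computational, namely verifying the Poisson kernel identity and checking that the parameter regime enforced by Assumption~\ref{assump:term-2} is precisely what makes $\cosh - 1$ dominated by its leading quadratic term; both are routine once the identity above is in hand.
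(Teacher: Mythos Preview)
Your proof is correct and takes a genuinely different, cleaner route than the paper. You bound $D(Y\|X)$ by the $\chi^2$-divergence via $\log x\le x-1$, then exploit the Poisson kernel identity $\sum_v \frac{e^{-a}a^v}{v!}\frac{e^{-b}b^v}{v!}\big/\frac{e^{-\lambda}\lambda^v}{v!}=e^{-a-b+\lambda+ab/\lambda}$ to obtain the exact closed form $\chi^2=\cosh(\lambda\eps^2)-1$, and finish with $\cosh(x)-1\le x^2$ for $|x|\le 1$, which is exactly what Assumption~\ref{assump:term-2} guarantees for $x=\lambda\eps^2=k\eps^2/(nm)$.

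The paper instead routes through the symmetric rational-function bound of Lemma~\ref{lem:rational-bound}, splits into two pieces, and then performs a case analysis on whether $\lambda=k/(nm)$ is below or above $1$: the sublinear case cites Appendix~A.1 of~\cite{DK16}, while the superlinear case carries out a somewhat delicate Taylor analysis of $\log\big(\tfrac12(e^{a+b}+e^{a-b})\big)=a+\log\cosh b$ together with tail bounds for $|\ell-\lambda|\ge 1/\eps$. Your $\chi^2$ computation sidesteps all of this: the identity holds for every $\lambda$, so no case split is needed, and the only place the parameter assumption enters is to ensure $\lambda\eps^2\le 1$ so that the quadratic term of $\cosh$ dominates. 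What the paper's approach buys is a bound on the symmetrized divergence $D(X\|Y)+D(Y\|X)$ rather than just $D(Y\|X)$, but the lemma only requires the one direction, so your more elementary argument is fully sufficient here.
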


\begin{proof}
By Claim~\ref{lem:kl-to-rational} and an analogous argument to that used to prove Lemma~\ref{lem:rational-bound},
we have that
\begin{align*}
&D(Y_{ij}|E_i||X_{ij}|E_i)\\
&\leq \sum_{a} \left(\pr[X_{ij}=\ell|E_i]+\pr[Y_{ij} = \ell|E_i]\right) \cdot \left[ \left( 1 - \frac{\pr[X_{ij}=\ell|E_i]}{\pr[Y_{ij}=\ell|E_i]}  \right)^2 + \left( 1 - \frac{\pr[Y_{ij}=\ell|E_i]}{\pr[X_{ij}=ell|E_i]} \right)^2 \right] \;.
\end{align*}
Once again, we break this up into two pieces, one for each of the squared terms in the second set of brackets.
Appendix A.1 of \cite{DK16} explicitly proves that the piece corresponding to the left term is
\[
O\left(\frac{k^2 \epsilon^4}{n^2 m^2}\right) \;
\]
when $k/mn \leq 1$. For $k/mn \geq 1$, let $\lambda = k/nm$. Note that in the domain of interest $\lambda \ll 1/\eps^2$. We note that
$$
\pr[X_{ij}=\ell|E_i] = \frac{e^{-\lambda}\lambda^\ell}{\ell!}
$$
and
$$
\pr[Y_{ij}=\ell|E_i] = \frac{e^{-\lambda}\lambda^\ell\left(e^{\eps\lambda}(1-\eps)^\ell+ e^{-\eps \lambda}(1+\eps)^\ell\right)/2}{\ell!}.
$$
We note that 
$$
\frac{\pr[Y_{ij}=\ell|E_i]}{\pr[X_{ij}=\ell|E_i]} = \frac{\left(e^{\eps \lambda}(1-\eps)^\ell+ e^{-\eps \lambda}(1+\eps)^\ell\right)}{2} = \frac{e^{a+b}+e^{a-b}}{2} \;,
$$
where $a = \ell \log(1-\eps^2)=O(\ell\eps^2)$ and $b = \lambda \eps+\ell\log\left( \sqrt{\frac{1-\eps}{1+\eps}}\right)=O(\eps(\lambda-\ell)+\ell\eps^3).$ We also note that
$$
|\log((e^{a+b}+e^{a-b})/2)| = |a+\log(\cosh(b))| = O(|a|+b^2).
$$
Applying this, we find that the log ratio of probabilities is
$$
O(\ell\eps^2 + (\lambda-\ell)^2 \eps^2 + \ell^2 \eps^6).
$$
We note that this is $O(1)$ whenever $|\lambda-\ell|=O(1/\eps)$. 
In this range, the contribution to our sum of KL-divergences is
$$
\sum_\ell \pr[X_{ij}=\ell|E_i]O\left(\log\left(\frac{\pr[Y_{ij}=\ell|E_i]}{\pr[X_{ij}=\ell|E_i]} \right) \right)^2.
$$
This is at most the expectation over $\ell \sim \mathrm{Poi}(\lambda)$ of $O(\ell^2 \eps^4 + (\lambda-\ell)^4\eps^4 + \ell^4\eps^{12})$, which is $O(\lambda^2\eps^4+\lambda^4\eps^{12}) = O(\lambda^2\eps^4) = O(k^2 \eps^4/(nm)^2)$, as desired.

Next we consider what happens when $|\ell-\lambda| \geq 1/\eps$. This is an event that happens with probability $\exp(-\Omega(1/(\eps \sqrt{\lambda}))) = O(\lambda \eps^2)^{10}$ under either $X$ and $Y$ conditioned on $E$. When this occurs, the contribution to the KL-divergence is $O(\ell\eps^2 + (\lambda-\ell)^2 \eps^2 + \ell^2 \eps^6).$ 
By Cauchy-Schwarz, the total contribution to the KL divergence from these terms is thus at most
$$
\sqrt{O(\lambda \eps^2)^{10} \E[O(\ell\eps^2 + (\lambda-\ell)^2 \eps^2 + \ell^2 \eps^6)^2]} = O(\lambda^2 \eps^4).
$$
This completes our proof.
\end{proof}


\bibliographystyle{alpha}
\bibliography{allrefs}

\appendix

\newpage

\section{Testing Properties of Collections} \label{sec:col}
\new{ In this section, we consider a generalization of the closeness testing problem, where the goal is to test whether a set of distributions are collectively close to a given distribution.}

\subsection{Problem Setup}

In this problem we are given $m$ different probability distributions $p^{(1)},\ldots,p^{(m)}$ over the domain $[n]$ and are guaranteed that either $p^{(1)}=p^{(2)}=\ldots = p^{(m)}$ or for any distribution $P$ that $\frac{1}{m}\sum_{i=1}^m \dtv(p^{(i)},P) \geq \eps$. The algorithm is given sample access to a pair $(i,x)$ where $i$ is drawn uniformly from $[m]$ and $x$ is drawn from $p^{(i)}$.

It can be easily seen that this is very similar to an independence testing problem. In particular, if we let $p$ be the distribution of pairs $(i,m)$ we are exactly being asked to distinguish between the cases where $p$ is independent and where $p$ is $\Omega(\eps)$-far from any independent distribution. 
\new{Note that, our tester outputs ``YES" if $p$ is independent, of ``NO" if $p$ is $\eps$-far from the product of its marginals, both with probability $1-\delta$. The case where $p$ is $\eps$-far from any product distribution, clearly falls into the latter case. } 
 There are two important differences between this problem and the standard independence testing problem. Firstly, we are guaranteed that the second marginal of $p$ is the uniform distribution. Secondly, we are no longer sorting the parameters so that $n\geq m$. Somewhat surprisingly, the final sample complexity is the same up to constants.

\new{
\begin{theorem}\label{thm:collections-main}
There exists a universal constant $C>0$ such that the following holds:
When
\begin{equation}
k \geq C \left( n^{2/3} m^{1/3} \log^{1/3}(1/\delta) / \eps^{\new{4/3}} + ((nm)^{1/2}\log^{1/2}(1/\delta)+\log(1/\delta))/\eps^2 \right) \;,
\end{equation}
Algorithm \textsc{BasicTestFamily} is an $(\eps, \delta)$-tester for testing closeness of collections of $m$ 
distributions on $[n]$ in total variation distance.
\end{theorem}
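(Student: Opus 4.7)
The plan is to recognize the collections problem as an instance of independence testing on the joint distribution $p$ of $(i,x)\in[m]\times[n]$, with the side information that the first marginal $p_i$ is uniform on $[m]$. First I would verify the reduction: the completeness case $p^{(1)}=\cdots=p^{(m)}$ is equivalent to $p$ being a product distribution, and whenever $\frac{1}{m}\sum_i \dtv(p^{(i)},P)\geq\eps$ for every $P$, a triangle-inequality argument (splitting $\dtv(p, q_1\times q_2)$ into its marginal-deviation part $\tfrac{1}{2}\|q_1-\mathrm{Unif}_m\|_1$ and a conditional-deviation part lower-bounded by $\tfrac{1}{m}\sum_i \dtv(p^{(i)},q_2)-\tfrac{1}{2}\|q_1-\mathrm{Unif}_m\|_1$, then averaging) shows $p$ is $\Omega(\eps)$-far from every product $q_1\times q_2$ on $[m]\times[n]$. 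It therefore suffices to test independence of $p$ at distance $\Omega(\eps)$ with failure probability $\delta$, while exploiting the uniformity of $p_i$ to beat the bound $\max(n,m)^{2/3}\min(n,m)^{1/3}$ that a direct application of \Cref{thm:ub-main} would give whenever $m>n$.

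Next I would design \textsc{BasicTestFamily} by modifying \textsc{BasicTest} in one key way: since $p_i$ is known to be uniform, we omit the construction of $F_y$ entirely and retain only the flattening $F_x\subseteq[n]$, of expected size $\Theta(n)$, drawn from the $[n]$-projections of $\overline{S}$. All other steps---drawing $S_p,S_q$ via a random permutation of the remaining samples, subdividing into flagged/unflagged halves, computing the statistic $Z=\sum_u Z_u$ with $Z_u$ as in \eqref{eqn:tilde-zi-closeness}, and applying the abort/accept/reject thresholds---are unchanged modulo the new target $N_q=O(\max(k^2/(nm),k/n))$ in the \textrm{ABORT} condition. The full tester repeatedly invokes \textsc{BasicTestFamily} on a single pool $\overline{S}$ of $100k$ samples until a non-\textrm{ABORT} answer is returned, exactly as in \textsc{FullTest}.

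Then I would rerun the analysis of Section~\ref{sec:indep-alg}. The concentration lemmas (\Cref{lem:Zconc,lem:ZZ',lem:NM,lem:EZ,lem:ZNpNq}) depend only on the Lipschitz structure of $Z$ and on $N_q$ being small in expectation, so they carry over verbatim. The only statement requiring substantive rework is the analogue of \Cref{lem:ENq}: here the probability that two $q$-samples share an $[m]$-coordinate is exactly $1/m$ (no flattening needed), while the probability that they share an $[n]$-coordinate after applying $F_x$ is bounded by the same random-permutation computation as in the independence case, yielding the required $\E[N_q\mid\overline{S}]=O(\max(k^2/(nm),k/n))$. An analogue of \Cref{lm:abort} then follows by Markov's inequality applied to $|F_x|$, $2\ell+\ell'$, and $N_q$.

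The main obstacle is establishing the expectation lower bound $\E[Z]=\Omega(\sqrt{\min(k,\,k^2/(nm)+k/n)\log(1/\delta)})$ in the soundness case, which feeds into the threshold of \textsc{BasicTestFamily}. I would apply \Cref{cl:pointwise} to each bin of the flattened joint distribution on a domain of size $\Theta(nm)$, and reproduce the three-case analysis from the proof of \Cref{lem:close-exp-gap}(ii), using the correct domain size $\Theta(nm)$ in the generalized Holder step and the fact that $p_i^f$ remains essentially uniform. Careful bookkeeping against the three regimes of $k$ encoded in the statement---namely $k\geq C n^{2/3}m^{1/3}\log^{1/3}(1/\delta)/\eps^{4/3}$, $k\geq C\sqrt{nm\log(1/\delta)}/\eps^2$, and $k\geq C\log(1/\delta)/\eps^2$---shows that $\E[Z]$ exceeds the concentration threshold in each regime. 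Combining this with the concentration bound and the abort analysis via the geometric-series argument of \Cref{lem:ub-final} then yields the theorem.
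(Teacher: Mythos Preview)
Your overall plan mirrors the paper's: cast collections as independence testing of the joint law on $[m]\times[n]$ with known uniform $[m]$-marginal, drop the $[m]$-flattening, and otherwise rerun the analysis of Section~\ref{sec:indep-alg}. However, there is a genuine gap in the algorithmic modification you describe.

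You state that, apart from omitting $F_y$, ``all other steps---drawing $S_p,S_q$ via a random permutation of the remaining samples\dots---are unchanged''. In \textsc{BasicTest} this means each $q$-sample is formed by combining the $[n]$-coordinate of one element of $\overline{S}$ with the $[m]$-coordinate of \emph{another} element of $\overline{S}$. But then, \emph{conditionally on $\overline{S}$}, the probability that two $q$-samples share their $[m]$-coordinate is \emph{not} $1/m$: it is governed by the empirical multiplicities of the $[m]$-coordinates present in $\overline{S}$, which can be arbitrary (for the abort bound you need Lemma~\ref{lm:abort} to hold for \emph{every} $\overline{S}$, not just typical ones). Concretely, if every sample in $\overline{S}$ happens to carry the same $[m]$-index, all $q$-samples collide on that coordinate and your analogue of Lemma~\ref{lem:ENq} blows up. In the independence tester this is exactly what the $F_y$-flattening controls (via the factor $\E[1/(1+F_Y)]=O(k/(C_Y m))$); once you delete $F_y$, that cancellation disappears.

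The paper's fix---and the one modification you are missing---is to generate each $q$-sample by taking a single element of $\overline{S}$, keeping its $[n]$-coordinate, and \emph{replacing} its $[m]$-coordinate with a fresh uniform draw from $[m]$. With this change the conditional $[m]$-collision probability is exactly $1/m$ for every $\overline{S}$, and the computation yields $\E[N_q\mid\overline{S}]=O(\max(k^2/(nm),\,k/m))$ (note: $k/m$, not the $k/n$ you wrote; the second term comes from the regime $k\le n$ where flattening is inactive and each sample has $\E[N_s]=O(1/m)$). With the correct construction and the corrected $N_q$ target, the remainder of your outline (concentration lemmas, the $N_p$ rejection step, the expectation-gap argument via Claim~\ref{cl:pointwise}, and the geometric-series wrap-up) goes through exactly as in Section~\ref{sec:indep-alg}.
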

}

\subsection{Algorithm Outline}

The basic algorithm is the same as our algorithm for independence testing with one notable exception. Instead of obtaining samples from $q = p_x\times p_y = p_x \times U_m$ by taking two samples from $p$ and combining their $x$ and $y$ coordinates, instead we take just a single sample from $p$ and re-randomize its second coordinate. Because of this, it will no longer be necessary to flatten on the second coordinate. Our basic tester works as follows and the main tester is obtained from the basic tester in a way identical to the one used \new{in Section \ref{sec:indep-alg}.}

\begin{algorithm}\label{familyAlg}
    \SetKwInOut{Input}{Input}
    \SetKwInOut{Output}{Output}

    \Input{A Multiset $\overline{S}$ of $100k$ samples from $[n]\times [m]$ with $k=C\left(\frac{n^{2/3}m^{1/3}\log^{1/3}(1/\delta)}{\eps^{4/3}}+\frac{\sqrt{nm\log(1/\delta)}}{\eps^2}+\frac{\log(1/\delta)}{\eps^2}\right)$ where $C$ is a sufficiently large universal constant.}
    \Output{Information relating to whether these samples came from an independent distribution.}
    \tcc{Choose flattening $F$}
    $F_x\gets \emptyset$\\
  \For{$s \in \overline{S}$ }{
    $F_x=F_x\cup \{s\}$ with prob $\min\{n/100k,1/100\}$\\
    }

    \If{\label{flatteningfamily}$|F_x|>10n$}{\label{large Ffamily} return \textrm{ABORT}}

    \tcc{Draw samples $S_p^f, S_q^f$}
Let $\overline{S}^\prime=\{(x_i,y_i)\}$ be a uniformly random permutation of $\overline{S}\setminus (F_x\cup F_y)$\\
Draw $\ell,\ell^\prime\sim \Poi(2k)$.\label{permfamily}\\
\If{\label{samplingfamily}$\ell+\ell^\prime>|\overline{S}^\prime|$ }{\label{large lfamily} return \textrm{ABORT}}
Let $S_q=\{(x_j,u_j)\}_{j=1}^\ell$ where the $u_j$ are drawn i.i.d. uniformly from $[m]$. Let $S_p=\{(x_j,y_j)\}_{j=\ell+1}^{\ell+\ell^\prime}$\\
Create $S_p^f,S_q^f$ by assigning to corresponding sub-bins u.a.r\\
   Let $N_q:$ $\sharp$samples in $S_q^f$ that collide with another sample in $S_p^f\cup S_q^f$.\\
   Let $N_p:$ $\sharp$samples in $S_p^f$ that collide with another sample in $S_p^f$\\

\If{$N_q > c \max(k/m,k^2/mn)$}{\label{large Nqfamily} return \textrm{ABORT}}
  \If(\tcp*[f]{$C'$ a sufficiently large constant}){$N_p > 20N_q+ C'\log(1/\delta)$ \label{cond:large Nfamily}}
  {return ``NO"\label{line:reject0family}}
  \tcc{Compute test statistic $Z$}
   Flag each sample of $S_p^f,S_q^f$ independently with probability $1/2$.\\
   Let  $X^{(p0)}_i,X^{(q0)}_i$ be the counts for the number of times element $i$ appears \emph{flagged} in each set $S_p^f,S_q^f$ respectively and $X^{(p1)}_i,X^{(q1)}_i$ be the corresponding counts on \emph{unflagged} samples.\\
   \label{statisticfamily}Compute the statistic $Z=\sum_{i} Z_i$, where $Z_i=|X_i^{(p0)}-X_i^{(q0)}|+|X_i^{(p1)}-X_i^{(q1)}|-|X_i^{(p0)}-X_i^{(p1)}|-|X_i^{(q0)}-X_i^{(q1)}|$.  \\
   \eIf{\new{$Z< C^\prime\cdot \sqrt{\dnew{\min(k,(k^2/(mn)+k/m))}\log(1/\delta)}$}}{return ``YES"\label{line:acceptfamily}}{return ``NO"\label{line:reject1family}}
\caption{ \textsc{BasicTestFamily}($\overline{S}$): Given a joint distribution $p$ over $[n]\times [m]$ (with second marginal $U_m$) with marginals $p_x,p_y$, test if $p_x$ and $p_y$ are independent. }
\end{algorithm}

The analysis of this algorithm is very similar to that our of independence tester. In particular, we need to show two things about this basic tester:
\begin{enumerate}
\item For any set $\overline{S}$ the probability that the tester returns \textrm{ABORT} is at most $1/2$.
\item If $\overline{S}$ is a random set drawn i.i.d. from $p$ the probability that it returns a wrong answer (``NO" if $p$ is independent or ``YES" if $p$ is $\eps$-far) is at most $\delta$.
\end{enumerate}
Once we have this, the full tester will be the same as \new{Algorithm \ref{alg:Full-indep}} with analysis the same as \new{in Section \ref{sec:indep-alg}.}
 
Most of the analysis also follows in the similar vein. The probability of incorrectly rejecting in line \ref{line:reject0family} is bounded by the same argument as used in \new{Lemma \ref{lem: large Np}}. The probability of returning a wrong answer in lines \ref{line:acceptfamily} or \ref{line:reject1family} are also the same (as we have the same bounds on $N$ as before).

Bounding the probability of an \textrm{ABORT} is mostly straightforward. The probabilities that $|F_x|$ or $\ell+\ell'$ are too big can be bounded easily by the Markov inequality much as before. The one remaining issue is to prove an analogue of \new{Lemma \ref{lem:ENq}} to bound the probability that we abort on line \ref{large Nqfamily}.

\subsection{Analysis of $N_q$}

Here we prove the following Lemma which along with Markov's inequality should be enough to complete our analysis:
\begin{lemma}
The $N_q$ computed in Algorithm \ref{familyAlg} satisfies
$$
\E[N_q|\overline{S}] = O(k/m+k^2/mn).
$$
\end{lemma}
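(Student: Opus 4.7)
The plan is to mirror the proof of Lemma~\ref{lem:ENq}, exploiting the crucial simplification that in \textsc{BasicTestFamily} each sample of $S_q$ arises from a single element of $\overline{S}$ (the $x$-coordinate is kept; the $y$-coordinate is replaced by an independent uniform draw from $[m]$), rather than from an assembled pair of samples. This eliminates the $N_{i,j}$ terms present in the original decomposition and leaves only an analogue of the $N_i$ terms.

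Conditioning on $\overline{S}$, I first dispose of the tail event $\ell < k/2$: it has probability exponentially small in $k$, and on it $N_q = O(k)$ contributes only $O(1)$ to the expectation. Assuming $\ell \geq k/2$ and that none of the early abort conditions have fired (so $|\overline{S}'| \geq 50k$), I will use the crude bound
$$
N_q \leq 2 \cdot \#\{q\text{-}q \text{ colliding pairs}\} + \#\{p\text{-}q \text{ colliding pairs}\}
$$
and bound the two expectations separately.

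For the $q$-$q$ count, a colliding pair requires: (a) the two elements of $\overline{S}$ share an $x$-coordinate $X$; (b) after $x$-flattening they are assigned to the same subbin of bin $X$; and (c) their independent uniform $y$-coordinates agree, which happens with probability exactly $1/m$. Conditioning on the flattening, for a fixed pair of $\overline{S}$-elements that are not in $F_x$, the probability they both land in the first $\ell$ positions of the random permutation is $\ell(\ell-1)/(|\overline{S}'|(|\overline{S}'|-1)) = O(1)$ under our conditioning. The subbin-matching probability is $\E[1/(1+F_X)]$ where $F_X \sim \Bin(c_X, p_F)$ with $p_F = \min(n/(100k), 1/100)$ and $c_X$ the multiplicity of $X$ in $\overline{S}$; by the same binomial estimate used in Lemma~\ref{lem:ENq}, this is $O(\max(k/n,1)/c_X)$. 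Summing over the $\binom{c_X}{2}$ same-$X$ pairs and then over $X$, using $\sum_X c_X = 100k$, yields
$$
\E[\#q\text{-}q \text{ colliding pairs}\mid \overline{S}] = O\!\left(\frac{\max(k/n,1)}{m}\right)\cdot O(k) = O\!\left(\frac{k^2}{mn} + \frac{k}{m}\right).
$$

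The $p$-$q$ case is completely parallel: the permutation event now asks that one element lands in the first $\ell$ positions and the other in positions $\ell{+}1,\dots,\ell{+}\ell'$, still with total probability $O(1)$; the subbin-matching factor is the same $O(\max(k/n,1)/c_X)$; and the $y$-match probability is again $1/m$ because the $S_q$ $y$-coordinate is uniform on $[m]$ and independent of the $S_p$ sample's $y$-coordinate. Summing the two contributions gives the claimed bound. The only mildly delicate step is to justify the $O(1)$ permutation probability cleanly given the Poisson randomness of $\ell$, which I handle by conditioning on $\ell = \Theta(k)$ and $|\overline{S}'| \geq 50k$; both hold except on events whose contribution to $\E[N_q\mid\overline{S}]$ is already $O(k/m)$, so they can be absorbed.
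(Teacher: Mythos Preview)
Your proposal is correct and follows essentially the same approach as the paper. The paper organizes the argument per sample rather than per colliding pair: it sets $N_s$ to be the number of $S_q^f$-elements colliding with $s$ (for each $s\in\overline{S}$ that lands in $S_p\cup S_q$), observes $N_q\le\sum_s N_s$, and then shows $\E[N_s]\le C_X/m\cdot\E[1/(1+F_X)]=O(\max(1,k/n)/m)$ by crudely allowing all $C_X$ other same-$x$ elements to be potential $q$-colliders; summing over the $100k$ samples gives the bound. This is equivalent to your pair-counting decomposition but slightly cleaner, since it avoids separately tracking the permutation probability and the $q$-$q$ versus $p$-$q$ split.
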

\begin{proof}
For a sample $s\in \overline{S}$ let $N_s$ be $0$ unless $s$ is in either $S_p$ or contributing to an element of $S_q$ and otherwise let $N_s$ be the number of other samples of $S_q^f$ that collide with it. It is clear that
$$
N_q \leq \sum_{s\in \overline{S}} N_s.
$$
Therefore, it will suffice to prove that for each $s$ that $\E[N_s] = O(1/m+k/nm)$.

To do this, we let $C_X$ be the number of other elements of $\overline{S}$ with the same $x$-coordinate as $s$ and let \new{$F_X$} be the number of elements of $F_x$ with that value. We note that conditioning on $s$ not being used in flattening that $F_X$ is a binomial distribution $\Bin(C_X,\min(1/100,n/100k))$. It follows that $\E[1/(1+F_X)] = O(\max(1,k/n)/\new{C_X})$.

Conditioning on $F_x$, there are at most $C_X$ samples that could collide with $s$. Each has a probability of doing so that is at most $1/(m(1+F_X))$ (as if that sample contributes to an element of $S_q$ there is a $1/m$ chance that it is assigned the correct $y$-coordinate and a $1/(1+F_X)$ chance that its $x$-coordinate is assigned to the correct sub-bin). Thus, the expected number of samples that collide is at most
$$
\E[N_s] = \E_{F_x}[\E[N_s|F_x]] \leq \E_{F_x} [C_X/(m(1+F_X))] = O(C_X/m \max(1,k/n)/C_X) = O(\max(1/m,k/nm)),
$$
as desired.

This completes the proof \new{of Theorem \ref{thm:collections-main}}.
\end{proof}

\section{\new{Algorithm for Closeness Testing with Unequally Sized Sets of Samples}} \label{sec:close-unequal-alg}
\new{ In this section, we consider a different generalization of the closeness testing problem, where we are testing the closeness of two distributions, and we have access to a different number of samples from each of them. Having unlimited number of samples from one of the distributions, would be equivalent to identity testing, where one of the distributions is explicitly known. Thus, this can be viewed as an interpolation of the two problems.}
\subsection{Setup}

Here we are given sample access to two distributions $p$ and $q$ on $[n]$ and are guaranteed that either $p=q$ or $\dtv(p,q)>\eps$, and would like to distinguish between these possibilities. However, we are now given $O(K+k)$ samples from $q$ and $O(k)$ samples from $p$. This is particularly interesting when $K\gg k$ as we would like to know whether the extra samples can be used to reduce the sample complexity. In particular, we show that this can be distinguished so long as $k$ is a sufficiently large multiple of
$$
\frac{n \sqrt{\log(1/\delta)/\min(n,K)}}{\eps^2} + \frac{\log(1/\delta)}{\eps^2}.
$$

\new{
\begin{theorem}\label{thm:unequal-main}
There exists a universal constant $C>0$ such that the following holds:
When
\begin{equation}
k \geq C \left(n \sqrt{\log(1/\delta) / \min(n, K)} + \log(1/\delta)\right)/\eps^2 \;,
\end{equation}
Algorithm \textsc{BasicTestDifferentSamples} is an $(\eps, \delta)$--closeness tester in total variation distance, for discrete distributions over $[n]$
that draws $O(K+k)$ samples from one distribution and $O(k)$ samples from the other.
\end{theorem}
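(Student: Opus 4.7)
The plan is to follow the two-phase template established for the independence and collections testers in Sections~\ref{sec:indep-alg} and~\ref{sec:col}. First I would design a subroutine \textsc{BasicTestDifferentSamples} whose output is one of ``YES'', ``NO'', or ABORT, and then show (i) for \emph{any} input pool $\overline{S}$ of samples the ABORT probability is at most $1/2$, and (ii) on a pool of true samples the probability of an incorrect YES/NO output is $O(\delta)$. Re-running the subroutine until it does not ABORT then costs only a constant factor in total sample complexity and gives the claimed $(\eps,\delta)$-tester.

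The subroutine mirrors \textsc{BasicTest} of Section~\ref{sec:indep-alg}, but with one-sided flattening on the $q$-side. Setting $K':=\min(n,K)$, I would reserve $\Theta(K')$ of the $q$-samples as a flattening multiset $F_q$, yielding a flattened distribution $q^f$ on a domain of size at most $n+K'=O(n)$. From the remaining samples I would form one Poissonized multiset of size $\Poi(k)$ from $p^f$ and two Poissonized multisets of size $\Poi(k)$ from $q^f$, and then compute the symmetric statistic $Z=\sum_i Z_i$ of Section~\ref{sec:closeness-alg}. The subroutine ABORTs if $N_q$ (the number of non-singleton $q$-samples) exceeds $c\cdot k^2/\min(n,K)$; rejects if $N_p>20N_q+C'\log(1/\delta)$; and otherwise thresholds $Z$ at $\Theta(\sqrt{(N+\log(1/\delta))\log(1/\delta)})$.

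The key technical step is the analog of Lemma~\ref{lem:ENq}: for every $\overline{S}$, $\E[N_q\mid\overline{S}]=O(k^2/\min(n,K))$. This follows because flattening with $K'$ samples gives $\E[\|q^f\|_2^2]=O(1/K')$ via the standard estimate $\E[1/(1+f_i)]\le 1/((K'+1)q_i)$, and the bin-pairing argument used in the proof of Lemma~\ref{lem:ENq} converts this into the claimed collision-count bound. Markov's inequality, together with the easy tail bounds on $|F_q|$ and the Poisson draw sizes, then yields $\pr[\text{ABORT}\mid\overline{S}]\le 1/2$ as in Lemma~\ref{lm:abort}, and the ``reject on large $N_p$'' branch in the completeness case is handled by the argument of Lemma~\ref{lem: large Np}.

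For the main soundness step, conditioning on not aborting gives $N=O(\min(k,\,k^2/\min(n,K))+\log(1/\delta))$, so Lemma~\ref{lem:Zconc} bounds $|Z-\E[Z]|$ by the concentration threshold with probability $1-\delta/2$. In completeness $\E[Z]=0$ by symmetry (Lemma~\ref{lem:close-exp-gap}(i)); in soundness $\dtv(p^f,q^f)\geq\eps$ on a domain of size $|D|\le 2n$, so the three-case split of Section~\ref{ssec:close-exp-gap} applied with $|D|\le 2n$ yields
\begin{equation*}
\E[Z]=\Omega\bigl(\max\bigl\{\,k\eps,\ k^2\eps^2/n,\ k^{3/2}\eps^2/\sqrt{n}\,\bigr\}\bigr).
\end{equation*}
The hardest part, and the step that determines the stated rate, is a case analysis verifying that the hypothesis on $k$ forces at least one of these lower bounds to exceed the threshold: Case~2 ($k^2\eps^2/n\gtrsim k\sqrt{\log(1/\delta)/\min(n,K)}$) rearranges exactly to the first term of the theorem, while the second term $\log(1/\delta)/\eps^2$ controls the small-$N$ regime in which $N$ is capped by $k$, the threshold becomes $\Theta(\sqrt{k\log(1/\delta)})$, and Case~1 demands $k\eps\ge\sqrt{k\log(1/\delta)}$.
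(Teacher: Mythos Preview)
Your outline is essentially the paper's own argument: one-sided flattening using the larger $q$-pool, the same $Z$ statistic, the ABORT/repeat wrapper, the $\E[N_q\mid\overline{S}]=O(k^2/\min(n,K))$ bound, and the same concentration and expectation-gap lemmas. Two slips would need fixing before this goes through.

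First, the displayed lower bound on $\E[Z]$ should be $\Omega(\min\{\cdots\})$, not $\max$. The three terms arise from the partition $S_1,S_2,S_3$ of Section~\ref{ssec:close-exp-gap}, and you only know \emph{one} of them carries $\eps/6$ mass; hence the uniform guarantee is the minimum. Consequently your sentence ``forces at least one of these lower bounds to exceed the threshold'' is backwards: you must check that \emph{each} of the three terms beats the threshold. The paper does this by splitting on $k>n$ versus $k\le n$ (so that one of $k^2\eps^2/n$ and $k^{3/2}\eps^2/\sqrt{n}$ is never the minimum) and then verifying both remaining terms in each regime; your sketch only checks one term per regime and omits the $k^{3/2}\eps^2/\sqrt{n}$ case entirely.

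Second, the justification you give for $\E[N_q\mid\overline{S}]$ via $\E[\|q^f\|_2^2]=O(1/K')$ and $\E[1/(1+f_i)]\le 1/((K'+1)q_i)$ does not establish a bound conditional on an \emph{arbitrary} pool $\overline{S}$: those estimates refer to the true distribution $q$, whereas the ABORT-probability argument must be purely combinatorial in $\overline{S}$. The correct route (which you also cite) is the element-wise argument of Lemma~\ref{lem:ENq}: for each $s\in\overline{S}$ with $C_X$ other elements of $\overline{S_q}$ sharing its value, the flattening count $F_X\sim\Bin(C_X,\Theta(\min(1,n/K)))$ gives $\E[1/(1+F_X)]=O(\max(1,K/n)/C_X)$, and each of the $\le C_X$ potential colliders is selected into $S_q$ with probability $O(k/K)$.
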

}

\subsection{Algorithm}

Our algorithm runs on the same basic principles as our algorithm for independence. We have a basic tester that takes a fixed set of samples and returns ``YES", ``NO" or \textrm{ABORT}. Our full tester takes a random set of samples and repeatedly runs the basic tester until it gets a non-\textrm{ABORT} outcome and returns that. We again need only guarantee that:
\begin{enumerate}
\item For any set $\overline{S}$ the probability that the tester returns \textrm{ABORT} is at most $1/2$.
\item If $\overline{S}$ is a random set drawn i.i.d. from $p$ the probability that it returns a wrong answer (``NO" if $p$ is independent or ``YES" if $p$ is $\eps$-far) is at most $\delta$.
\end{enumerate}

Our basic tester uses $O(k)$ samples from $p$ and $O(K+k)$ samples from $q$. It uses $\min(n,K)$ of these $q$-samples to flatten and then draws $\Poi(2k)$ samples from each of $p$ and $q$ to compute the statistics $Z,N,N_q$. If we run out of samples (or flatten using too many) we abort. We also abort if $N_q$ is too large. Then if $N_p \gg N_q$ we reject and otherwise reject or accept based on the size of $Z$. The full version of our basic tester is as follows:

\begin{algorithm}\label{diffsamplesAlg}
    \SetKwInOut{Input}{Input}
    \SetKwInOut{Output}{Output}

    \Input{A Multiset $\overline{S}=\overline{S_q}\cup \overline{S_p}$ of $100(k+K)$ samples from $q$ and $100k$ samples from $p$ with $K>k=C\left(\frac{n \sqrt{\log(1/\delta)/\min(n,K)}}{\eps^2} + \frac{\log(1/\delta)}{\eps^2}\right)$ where $C$ is a sufficiently large universal constant.}
    \Output{Information relating to whether $p=q$.}
    \tcc{Choose flattening $F$}
    $F\gets \emptyset$\\
  \For{$s \in \overline{S_q}$ }{
    $F=F \cup \{s\}$ with prob $\min\{n/(100|\overline{S_q}|),1/100\}$\\
    }

    \If{\label{flatteningdiffsamples}$|F|>n$ or $|F|>50K$}{\label{large Fdiffsamples} return \textrm{ABORT}}

    \tcc{Draw samples $S_p^f, S_q^f$}
    Let $\overline{S_q^\prime}=\overline{S_q}\setminus F$.\\
    Draw $\ell,\ell^\prime \sim \Poi(2k)$.\\
    \If{\label{samplingdiffsamples}$\ell>\min(|\overline{S_q^\prime}|,100k)$ or $\ell^\prime > |\overline{S_p}|$ }{\label{large ldiffsamples} return \textrm{ABORT}}
    Let $S_q$ be a set of $\ell$ random samples (taken without replacement) from $\overline{S_q^\prime}$ and $S_p$ a set of $\ell^\prime$ random samples from $\overline{S_p}$.
    Create $S_p^f,S_q^f$ by assigning to corresponding sub-bins u.a.r\\
   Let $N_q:$ $\sharp$samples in $S_q^f$ that collide with another sample in $S_p^f\cup S_q^f$.\\
   Let $N_p:$ $\sharp$samples in $S_p^f$ that collide with another sample in $S_p^f$\\

\If{$N_q > c \max(k^2/K,k^2/n)$}{\label{large Nqdiffsamples} return \textrm{ABORT}}
  \If(\tcp*[f]{$C'$ a sufficiently large constant}){$N_p > 20N_q+ C'\log(1/\delta)$ \label{cond:large Ndiffsamples}}
  {return ``NO"\label{line:reject0diffsamples}}
  \tcc{Compute test statistic $Z$}
   Flag each sample of $S_p^f,S_q^f$ independently with probability $1/2$.\\
   Let  $X^{(p0)}_i,X^{(q0)}_i$ be the counts for the number of times element $i$ appears \emph{flagged} in each set $S_p^f,S_q^f$ respectively and $X^{(p1)}_i,X^{(q1)}_i$ be the corresponding counts on \emph{unflagged} samples.\\
   \label{statisticdiffsamples}Compute the statistic $Z=\sum_{i} Z_i$, where $Z_i=|X_i^{(p0)}-X_i^{(q0)}|+|X_i^{(p1)}-X_i^{(q1)}|-|X_i^{(p0)}-X_i^{(p1)}|-|X_i^{(q0)}-X_i^{(q1)}|$.  \\
   \eIf{\new{$Z< C^\prime\cdot \sqrt{\dnew{(\min(k,(k^2/K+k^2/n))+\log(1/\delta))}\log(1/\delta)}$}}{return ``YES"\label{line:acceptdiffsamples}}{return ``NO"\label{line:reject1diffsamples}}
\caption{ \textsc{BasicTestDifferentSamples}($\overline{S}$): Given two distributions $p$ and $q$ on $[n]$ provide information as to whether $p=q$. }
\end{algorithm}

\subsection{Basic Algorithm Analysis}

Much of our analysis here is either easy or identical to the analysis of our other algorithms. First, we consider the probability of our algorithm aborting. We note that the expected size of $F$ is at most $n/100$, so the probability that $|F|>n$ is at most $1\%$ by Markov's inequality (bounding the probability of aborting on line \ref{large Fdiffsamples}). Similarly the expected size of $F$ is at most $K+k\leq 2K$ so there is at most a $4\%$ chance that it is bigger than $50K$. Additionally, since the expected size is at most $K+k$, with at least $98\%$ probability, we have that $|\overline{S_q^\prime}|\geq 50k$. Since, $\ell,\ell^\prime$ have expectations $2k$, if the above holds, there is at most a $6\%$ chance that either $\ell > 50k$ or $\ell' > 100k$, which bounds the total probability of aborting on line \ref{large ldiffsamples} by $8\%$. Bounding the probability of aborting on line \ref{large Nqdiffsamples} is more complicated and we will address it in the next section.

As for the probability of returning an incorrect result, the analysis from \new{Lemma \ref{lem: large Np}} still applies to show that we incorrectly reject on line \ref{line:reject0diffsamples} with probability at most $\delta$. The analysis for lines \ref{line:acceptdiffsamples} and \ref{line:reject1diffsamples} is more complicated and we will handle it shortly.

\subsection{The Expectation of $N_q$}

Appropriate bounds on probability of aborting on line \ref{large Nqdiffsamples} will follow from this lemma:
\begin{lemma}
For any given set $\overline{S} = (\overline{S_q},\overline{S_p})$, the expectation of $N_q$ is $O(\max(k^2/K,k^2/n))$, where $N_q$ is defined to be $0$ if the algorithm aborts before computing it.
\end{lemma}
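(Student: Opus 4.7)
The plan is to adapt the decomposition argument from Lemma~\ref{lem:ENq}, exploiting the simpler sampling structure here (where $S_q$ and $S_p$ are drawn directly from disjoint sample pools, without a pairing step). Throughout I condition on the fixed $\overline{S}$ and restrict attention to the event that the algorithm does not abort before computing $N_q$, since otherwise $N_q=0$. The strategy is to upper bound $N_q$ by a sum of collision indicators over ordered pairs $(s,s')$ of distinct samples with $s\in\overline{S_q}$ and $\mathrm{val}(s)=\mathrm{val}(s')$, then split this sum into a $q$--$p$ contribution (where $s'\in\overline{S_p}$) and a $q$--$q$ contribution (where $s'\in\overline{S_q}$), and show that each is $O(\max(k^2/K,k^2/n))$.

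Three ingredients drive the bound. First, for each domain value $v$ with multiplicity $C_v^q$ in $\overline{S_q}$, the flattening count $F_v$ is distributed as $\Bin(C_v^q,\alpha)$ with $\alpha:=\min(n/(100|\overline{S_q}|),1/100)$, yielding $\E[1/(1+F_v)]=O(\max(1,K/n)/C_v^q)$; here the abort on $|F|$ ensures $|\overline{S_q^\prime}|=\Theta(K)$, using also that $K\geq k$. Second, the inclusion probabilities are $\pr[s\in S_p]=O(1)$ for $s\in\overline{S_p}$ and, conditional on $F$, $\pr[s'\in S_q\mid F]=O(k/K)$ for $s'\in\overline{S_q^\prime}$, where the abort on $\ell$ permits treating $\ell=O(k)$ in expectation. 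Third, conditional on which samples are selected and on $F$, the sub-bin assignments are independent, so each pair of same-value samples collides in the flattened domain with conditional probability $1/(1+F_v)$.

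Plugging these in, the $q$--$p$ contribution is at most
\[
\sum_v C_v^q\,C_v^p\cdot O(k/K)\cdot O(\max(1,K/n)/C_v^q)=O(k\max(1/K,1/n))\sum_v C_v^p=O(\max(k^2/K,k^2/n)),
\]
using $\sum_v C_v^p=100k$. The $q$--$q$ contribution is similarly at most
\[
\sum_v (C_v^q)^2\cdot O(k^2/K^2)\cdot O(\max(1,K/n)/C_v^q)=O(k^2\max(1/K^2,1/(nK)))\sum_v C_v^q=O(\max(k^2/K,k^2/n)),
\]
using $\sum_v C_v^q=\Theta(K)$. I expect the main obstacle to lie in making rigorous the conditioning between the random choice of $F$ and the subsequent without-replacement draw of $S_q$ from $\overline{S_q^\prime}$: the pool size $|\overline{S_q^\prime}|$ depends on $F$, and for the $q$--$q$ term both $s$ and $s'$ must first avoid being placed in $F$ before they can be eligible for $S_q$. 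The abort conditions on $|F|$ and $\ell$, together with the assumption $K\geq k$, are what keep these dependencies benign enough to absorb into constant factors in the final bound.
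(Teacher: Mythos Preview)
Your proposal is correct and follows essentially the same approach as the paper. The paper decomposes $N_q$ via per-sample contributions $N_s$ (the number of elements of $S_q^f$ colliding with $s$, for $s\in S_p\cup S_q$), while you decompose directly over ordered collision pairs $(s,s')$ anchored at a $q$-sample; these are equivalent pair-counting arguments, and both rely on the same three ingredients you identify: the binomial bound $\E[1/(1+F_v)]=O(\max(1,K/n)/C_v^q)$, the inclusion probabilities $O(k/K)$ and $O(1)$, and the $1/(1+F_v)$ sub-bin collision probability.
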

\begin{proof}
Let $s$ be an element of $\overline{S}$ let $N_s$ be $0$ if $s$ is not chosen to be in $S_q$ or $S_p$ or if the algorithm aborts before computing $N_q$. Otherwise let $N_s$ be the number of other elements of $S_q^f$ that collide with it. It is not hard to see that $N_q \leq \sum_{s\in\overline{S}} N_s$. It is also not hard to see that the probability that $s$ is chosen in $S_p$ or $S_q$ is $O(1)$ and $O(k/(K+k))$, respectively. Therefore, our lemma will follow from the claim that for all $s$:
$$
\E[N_s | s\in S_p\cup S_q] = O(\max(k/K,k/n)).
$$

Let $C_X$ be the number of (other) elements of $\overline{S_q}$ with the same value as $s$ and let $F_X$ be the number of these samples in $F$. We note that $F_X$ is a binomial distribution with $C_X$ terms and probability $\Omega(\min(1,n/K))$ and so $\E[1/(1+F_X)] = O(\max(1,K/n)/C_X)$. Once we've conditioned on $F$, each of the remaining (at most $C_X$) elements of $\overline{S_q'}$ with the correct value have an $O(k/K)$ chance of being chosen to be in $S_q$ and then will have a $1/(1+F_X)$ chance of colliding with $s$ after flattening. Thus the total expected size is at most
\begin{align*}
\E[N_s | s\in S_p\cup S_q] = \E_{F}[\E[N_s | s\in S_p\cup S_q, F]] &\ll \E_F [C_X (k/K)/(1+F_X)]\\
 &\ll C_X (k/K) \max(1,K/n)/C_X\\
  &= \max(k/K,k/n),
\end{align*}
as desired.

This completes our proof.
\end{proof}

\subsection{Analysis of $Z$}

We now need to bound the probability of getting an incorrect output from lines \ref{line:acceptdiffsamples} or \ref{line:reject1diffsamples}. For this, we note an alternative scheme for generating $Z,N,N_q$ with $\overline{S}$ being taken at random. This can be done by taking $f\sim Bin(100K+100k, \min\{n/(100|\overline{S_q}|),1/100\})$, generating $f$ random samples from $q$ to get $F$. The letting $S_q$ and $S_p$ be generated by taking $\ell$ and $\ell'$ independent samples from $q$ and $p$, respectively. We note that this produces statistics $Z,N,N_q$ that are identically distributed to the ones produced by our algorithm when the algorithm doesn't abort before computing them. It will be enough for us to show that:
\begin{enumerate}
\item If $p=q$, the probability that $N_q < c\min(\max(k^2/K,k^2/n),k)$ and $N_p < 20 N_q + C' \log(1/\delta)$ and $Z> C'\sqrt{(\min(k,(k^2/K+k^2/n))+\log(1/\delta))\log(1/\delta)}$ is at most $\delta$.
\item If $\dtv(p,q)>\eps),$ and $|F|\leq 10n$ the probability that $N_q < c\min(\max(k^2/K,k^2/n),k)$ and $N_p < 20 N_q + C' \log(1/\delta)$ and $Z< C'\sqrt{(\min(k,(k^2/K+k^2/n))+\log(1/\delta))\log(1/\delta)}$ is at most $\delta$.
\end{enumerate}
(Note that if our algorithm doesn't abort that $N_q < |S_q| = O(k)$.)

Note that if the conditions $N_q < c\min(\max(k^2/K,k^2/n),k)$ and $N_p < 20 N_q + C' \log(1/\delta)$ hold that together they imply that $N=O(\min(\max(k^2/K,k^2/n),k)+\log(1/\delta))$. The first of these is easy to show, in particular, if $p=q$ we have that $\E[Z]=0$ and so by \new{Lemma \ref{lem:Zconc}, that} $Z \ll \sqrt{(N+\log(1/\delta))\log(1/\delta)}$ except with probability $\delta$.

The second condition is somewhat more involved. We will still have that $|Z-\E[Z|F]|\ll \sqrt{(N+\log(1/\delta))\log(1/\delta)}$ except with probability $\delta$. Furthermore, we know that since after conditioning on $F$, $Z$ is an instantiation of our closeness tester statistic on a domain of size $O(n)$ we have by \new{arguments in the proof of Lemma \ref{lem:close-exp-gap}} that
$$
\E[Z] = \Omega\left(\min\left(k\eps, \frac{k^2\eps^2}{n},\frac{k^{3/2}\eps^2}{n^{1/2}} \right)\right).
$$
It will thus suffice to show:
\begin{claim}
$$
\left(\min\left(k\eps, \frac{k^2\eps^2}{n},\frac{k^{3/2}\eps^2}{n^{1/2}} \right)\right) \gg \sqrt{(\min(k,(k^2/K+k^2/n))+\log(1/\delta))\log(1/\delta)}.
$$
\end{claim}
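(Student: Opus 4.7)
The strategy is to compare the minimum on the left-hand side with the minimum on the right-hand side by breaking into regimes based on which term realizes each minimum. We will use freely the two bounds on the sample size $k$ that follow from $k \geq C\bigl(n\sqrt{\log(1/\delta)/\min(n,K)}/\eps^2 + \log(1/\delta)/\eps^2\bigr)$, namely
\begin{itemize}
\item[(a)] $k\eps^2 \geq C\log(1/\delta)$, and in particular $k\eps \geq C\log(1/\delta)$ since $\eps \leq 1$;
\item[(b)] $k\eps^2 \geq C n\sqrt{\log(1/\delta)/\min(n,K)}$, and hence $k\eps^2 \geq C\sqrt{n\log(1/\delta)}$ in all cases (since if $K\leq n$ then $n\sqrt{1/K}\geq\sqrt n$, and if $K\geq n$ this is immediate).
\end{itemize}
Also, using $\sqrt{x+y}\leq \sqrt x+\sqrt y$, it suffices to bound each LHS term by both $\log(1/\delta)$ and $\sqrt{\min(k,\,k^2/K+k^2/n)\log(1/\delta)}$.

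\textbf{Regime split for the LHS.} A routine comparison shows $A_1:=k\eps$, $A_2:=k^2\eps^2/n$, $A_3:=k^{3/2}\eps^2/n^{1/2}$ satisfy: $A_2$ is the minimum when $k\leq n$; $A_3$ is the minimum when $n\leq k\leq n/\eps^2$; and $A_1$ is the minimum when $k\geq n/\eps^2$. In the last regime, $k\geq n\geq Kn/(K+n)$, so $\min(k,k^2/K+k^2/n)=k$ and both conditions $k^2\eps^2\gtrsim k\log(1/\delta)$ and $k^2\eps^2\gtrsim\log^2(1/\delta)$ follow immediately from (a). In the middle regime $n\leq k\leq n/\eps^2$, one checks that $\min(k,k^2/K+k^2/n)=k$ except in boundary cases, and then $A_3^2=k^3\eps^4/n\geq k\log(1/\delta)$ reduces to $k^2\eps^4\geq n\log(1/\delta)$, which is (b) squared; the bound $A_3\geq \log(1/\delta)$ follows from writing $A_3=\sqrt{k/n}\cdot k\eps^2\geq \sqrt{k/n}\cdot C\log(1/\delta)$ and using $k\geq n$.

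\textbf{The hard regime.} The genuinely delicate case is the first regime, $k\leq n$, where $\min_\text{LHS}=A_2=k^2\eps^2/n$. Here one must split further according to whether $\min(k,\,k^2/K+k^2/n)$ equals $k$ or $k^2/K+k^2/n \asymp k^2/L$ with $L:=\min(K,n)$. In the latter subcase, the required inequality $k^4\eps^4/n^2\geq (k^2/L)\log(1/\delta)$ rearranges to $k^2\eps^4 L/n^2\geq \log(1/\delta)$, which is exactly the square of (b) after multiplying through. The main obstacle is the former subcase: when $\min=k$, one needs $k^3\eps^4\gtrsim n^2\log(1/\delta)$. This sub-subcase only happens when $k\gtrsim \min(K,n)$, and the key observation is that in Regime I one then has $k\asymp n$ or $K\lesssim k$. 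In the first situation, $n\geq k\geq C\sqrt{n\log(1/\delta)}/\eps^2$ forces $n\eps^4\geq C^2\log(1/\delta)$, and combined with $k\gtrsim n$ this yields $k^3\eps^4\gtrsim n^3\eps^4\gtrsim n^2\log(1/\delta)$. In the second situation, $L=K\lesssim k$ lets one rewrite (b) as $k^2\eps^4\gtrsim n^2\log(1/\delta)/K$, and multiplying by $k/K=\Omega(1)$ gives the desired $k^3\eps^4\gtrsim n^2\log(1/\delta)$.

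\textbf{Conclusion.} The bound $A_2\geq \log(1/\delta)$ in Regime I is easy: it is equivalent to $k\eps\geq \sqrt{n\log(1/\delta)}$, which follows from (b) since $\eps\leq 1$. Putting the three regimes together yields the claimed separation. The main technical obstacle throughout is the subcase $\min=k$ in Regime I, which requires using the inequality $k\leq n$ (from Regime I) together with bound (b) to convert $k\gtrsim \min(K,n)$ into the desired inequality on $k^3\eps^4$; once this algebraic juggling is isolated, the remaining regimes follow directly from (a) and (b).
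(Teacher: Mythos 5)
Your proof is correct, and at its core it follows the same outline as the paper's: split on the relationship between $k$ and $n$, identify which term of the left-hand minimum is dominant, note that when $k\geq n$ the quantity $\min(k,\,k^2/K+k^2/n)=k$, and then close each case via the two consequences (a) and (b) of the hypothesis on $k$. The paper uses just two regimes ($k>n$ and $k\leq n$), showing in each that \emph{both} surviving LHS terms dominate the RHS; your three-regime decomposition by which LHS term is smallest is an equivalent reorganization. The one genuine point of divergence is your ``hard regime'' (Regime~I with $\min(k,\,k^2/K+k^2/n)=k$, i.e.\ $k\gtrsim\min(K,n)$). The paper does not treat this subcase explicitly: its key bullet assumes $\min(n,K)>k$, which holds automatically because the algorithm's input specification imposes $K>k$, making the subcase essentially vacuous in context. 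Your argument instead handles it from scratch by distinguishing $k\asymp n$ from $K\lesssim k$, which is extra algebraic work but also makes the proof robust to dropping the standing assumption $K>k$. So: correct, same core method, with somewhat more case analysis (and one case the paper sidesteps via its standing hypothesis) on your side.
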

\begin{proof}
We split into cases based on whether $k>n$. In particular, if $k>n$, it suffices to show that
$$
\min\left(k\eps, \frac{k^{3/2}\eps^2}{n^{1/2}} \right) \gg \sqrt{k\log(1/\delta)}.
$$
However, we have that:
\begin{itemize}
\item Since $k\gg \log(1/\delta)/\eps^2$, we have $k\eps \gg \sqrt{k\log(1/\delta)}$.
\item Since $k\gg \sqrt{n\log(1/\delta)}/\eps^2$, we have $k^{3/2}\eps^2/n^{1/2} \gg \sqrt{k\log(1/\delta)}.$
\end{itemize}

For $k\leq n$, it will suffice to show that
$$
\min\left(k\eps, \frac{k^{2}\eps^2}{n} \right) \gg \sqrt{k^2/\min(K,n)\log(1/\delta)}+\log(1/\delta).
$$
However, we have that:
\begin{itemize}
\item Since $k\gg \log(1/\delta)/\eps^2$, we have $k\eps \gg \log(1/\delta)$.
\item Since $\min(n,K) > k \gg \log(1/\delta)/\eps^2$, we have $k\eps \gg \sqrt{k^2/\min(n,K) \log(1/\delta)}$.
\item Since $k \gg n\sqrt{\log(1/\delta)/\min(n,K)}/\eps^2$, we have $k^2\eps^2/n \gg \sqrt{k^2/\min(n,K)\log(1/\delta)}$.
\item Since $k \gg \sqrt{n\log(1/\delta)}/\eps^2$, we have $k^2\eps^2/n \gg \log(1/\delta)$.
\end{itemize}
This completes our proof \new{of Theorem \ref{thm:unequal-main}}.
\end{proof}

\section{Sample Complexity Lower Bound for Testing Closeness with Unequal Sized Samples}\label{sec:lb-unequal}

We now give a lower bound for closeness testing when one uses an unequal number of samples from each distribution. The overall technique is essentially the same as that from Section~\ref{sec:lb-ind}.

Suppose we are given discrete distributions $p,q$ which we want to test closeness with respect to which each have domain size $n$. We take $k$ Poissonized samples from $p$ and $K$ Poissonized samples from $q$. Without loss of generality, we may assume $K \geq k$.

For the hard family of instances, we construct a pseudo-distribution randomly by setting $p_i,q_i$ i.i.d for each $i$. Specifically, we set
\[
p_i = q_i = 1/K
\]
with probability $K/n$. Otherwise, we do the following. We set $p_i = \epsilon/n$ in both the completeness and soundness cases. We set $q_i = \epsilon / n$ in the completeness case and to either $0$ or $2 \epsilon/n$ in the soundness case with probability $1/2$ each.


\begin{theorem}
Given discrete distributions $p,q$, which each have domain size $n$, let $K$ be taken from $p$ and $k$ from $q$. Then any closeness tester with failure probability better than $\delta$ requires the numbers of samples $k,K$ to satisfy
\[
k \geq \Omega\left(\frac{n\sqrt{\log(1/\delta)}}{\sqrt{K}\cdot\epsilon^2} + \frac{\sqrt{n\log(1/\delta)}}{\epsilon^2} + \frac{\log(1/\delta)}{\epsilon^2} \right).
\]
\end{theorem}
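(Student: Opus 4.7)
The proof follows the template of Section~\ref{sec:lb-ind}. I would first pass to the Poissonized-sample, pseudo-distribution formulation via a straightforward adaptation of Lemma~\ref{lem:pseudo-lb-to-lb}, which reduces the claim to showing that no tester can distinguish the pair $(\mathcal{D}_{\mathrm{comp}}, \mathcal{D}_{\mathrm{sound}})$ of distributions over pseudo-distributions, given $K$ Poissonized samples from $p$ and $k$ Poissonized samples from $q$, with probability at least $1-O(\delta)$. By Fact~\ref{lem:better-dtv-ub}, distinguishability with probability $1-\delta$ forces the KL-divergence between the induced joint sample distributions to be at least $\log(2/\delta)$, so it suffices to show that the KL is strictly below $\log(2/\delta)$ whenever $k$ is smaller than the claimed lower bound. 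That the construction produces a pseudo-distribution with high probability is verified by a Bernstein-type concentration check on $\sum_i p_i$ and $\sum_i q_i$, analogous to the suitability lemmas in Section~\ref{sec:lb-suitability}.

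Since the pairs $(p_i,q_i)$ are drawn i.i.d.\ across coordinates $i \in [n]$, the joint KL factorizes as $n$ times the per-coordinate KL between the laws of $(X_i^{(p)},X_i^{(q)}) = (\Poi(K p_i),\Poi(k q_i))$ in the two cases. Each such per-coordinate law is a two-component mixture: a ``heavy'' component of weight $K/n$ with $p_i=q_i=1/K$, and a ``light'' component of weight $1-K/n$. The heavy component is identical in the two cases, and by joint convexity of KL it drops out, bounding the per-coordinate KL by $(1-K/n)$ times the KL between the two light components. Since the $p$-marginal of the light component agrees in both cases (namely $\Poi(K\epsilon/n)$), by independence this reduces to a one-dimensional KL on the $q$-marginal: that between $\Poi(k\epsilon/n)$ and the two-point Poisson mixture $\tfrac{1}{2}\Poi(0)+\tfrac{1}{2}\Poi(2k\epsilon/n)$. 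I would upper bound this one-dimensional KL using the rational-function surrogate of Lemma~\ref{lem:rational-bound}: the likelihood ratio equals $\cosh(k\epsilon/n)$ at $y=0$ and $2^{y-1}e^{-k\epsilon/n}$ at $y\ge 1$, and a term-by-term expansion identifies the leading contribution. Multiplying by $n$ and comparing to $\log(2/\delta)$ yields the first term $n\sqrt{\log(1/\delta)}/(\sqrt{K}\epsilon^2)$ of the bound, using the assumption $K \ge k$ to control the range of $k\epsilon/n$.

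For the second term $\sqrt{n\log(1/\delta)}/\epsilon^2$ and the third term $\log(1/\delta)/\epsilon^2$, I would appeal to standard reductions: closeness testing with unequal sample sizes specializes to uniformity testing in the limit $K \to \infty$ (so that $q$ is essentially known), which by the lower bound of \cite{DGPP17} requires $\Omega(\sqrt{n\log(1/\delta)}/\epsilon^2)$ samples from $p$; and it also specializes to distinguishing a fair from an $\epsilon$-biased coin on a constant-sized domain, which requires $\Omega(\log(1/\delta)/\epsilon^2)$ samples, as invoked at the end of Section~\ref{sec:lb-setup}. Both reductions preserve $n,\epsilon,\delta$ up to constant factors.

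The hardest step is the first-term KL computation. Unlike the $\pm\epsilon$ perturbation handled in Lemma~\ref{lem:bounds-2}, the on--off soundness marginal produces a likelihood ratio whose deviation from $1$ does not cancel at first order, so the rational function bound must be expanded carefully with attention to the $y=0,1,2$ leading terms and to the separate regimes $k\epsilon/n \le 1$ and $k\epsilon/n > 1$, mimicking the case split in Section~\ref{sec:second-term-lb}. A useful sanity check is that at $K=\Theta(k)$ the three-term lower bound must collapse to the closeness testing bound of Corollary~\ref{cor:close-lb}, which pins down the correct constants and exponents in the per-coordinate estimate.
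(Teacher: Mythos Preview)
Your treatment of the second and third terms via reduction to uniformity and coin-testing matches the paper and is fine. The gap is in your first-term argument: the joint-convexity step that ``drops'' the heavy component is too lossy, and with it you cannot recover the $n\sqrt{\log(1/\delta)}/(\sqrt{K}\epsilon^2)$ term.

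Concretely, after convexity you reduce to the one-dimensional KL between $\Poi(\lambda)$ and $\tfrac12\Poi(0)+\tfrac12\Poi(2\lambda)$ with $\lambda=k\epsilon/n$. A direct computation (or your rational-function surrogate) gives this KL as $\Theta(\lambda^2)=\Theta(k^2\epsilon^2/n^2)$, so summing over $n$ coordinates and invoking Fact~\ref{lem:better-dtv-ub} yields only $k=\Omega(\sqrt{n\log(1/\delta)}/\epsilon)$, which is strictly weaker than the target (and in fact even weaker than the second term you already have). The paper instead applies Lemma~\ref{lem:rational-bound} to the \emph{full} mixture, not to the light component alone. The point is that in $(\Pr[X_i=v]-\Pr[Y_i=v])^2/(\Pr[X_i=v]+\Pr[Y_i=v])$ the heavy component cancels in the numerator but \emph{survives in the denominator}; for $v_2\ge 2$ the heavy mass $\tfrac{K}{n}(k/K)^{v_2}/(v_1!v_2!)$ dominates the light mass $(\epsilon k/n)^{v_2}$, which shrinks the per-coordinate bound from $\Theta(\epsilon^2 k^2/n^2)$ down to $O(\epsilon^4 k^2 K/n^3)$. (They first check $\Pr[X_i=v]\asymp\Pr[Y_i=v]$ so that the symmetric rational bound collapses to the chi-square form, then split on $v_2\in\{0,1,\ge 2\}$; the $v_2\ge 2$ case is the one that drives the answer.) Keeping the heavy component in the denominator is precisely what makes the $K$-dependence appear, so you cannot discard it by convexity.
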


\begin{proof}
The last two terms are immediate from the fact that uniformity testing requires
\[
\Omega\left(\frac{\sqrt{n\log(1/\delta)}}{\epsilon^2} + \frac{\log(1/\delta)}{\epsilon^2}\right)
\]
samples \cite{DGPP17}. Thus, we only need to establish
\[
k \geq \Omega\left(\frac{n\sqrt{\log(1/\delta)}}{\sqrt{K}\cdot\epsilon^2} \right).
\]

We use the hard family of instances described earlier in this section and take $K$ Poissonized samples from $p$ and $k$ from $q$. Let $X_i$ be the $2$- tuple of the number of samples from element $i$ from each of $p$,$q$ in the completeness case and $Y_i$ be the same, but in the soundness case. Then we have by the product rule for KL divergence and Fact~\ref{lem:better-dtv-ub} that it suffices to show 
\[
D(X_i || Y_i) \leq O\left(\frac{\epsilon^4 k^2 K}{n^3}\right).
\]
By Lemma~\ref{lem:rational-bound}, we have
\begin{align*}
D(X_i || Y_i) &\leq \sum_v \frac{\pr[A=v]+\pr[B = v]}{4} \cdot \left[ \left( 1 - \frac{\pr[A=v]}{\pr[B=v]}  \right)^2 + 
\left( 1 - \frac{\pr[B=v]}{\pr[A=v]} \right)^2 \right].
\end{align*}
Thus, it suffices to show 
\[
\sum_v \frac{\pr[A=v]+\pr[B = v]}{4} \cdot \left[ \left( 1 - \frac{\pr[A=v]}{\pr[B=v]}  \right)^2 + 
\left( 1 - \frac{\pr[B=v]}{\pr[A=v]} \right)^2 \right] \leq O\left(\frac{\epsilon^4 k^2 K}{n^3}\right).
\]

To do this, we condition on the number of samples $v_2$ from $q$ according to whether $v_2=0$, $v_2=1$, or $v_2 \geq 2$ and bound the individual contributions to the sum from terms corresponding to each of these values of $v_2$. 

However first, we establish some bounds that hold for all cases of this conditioning.
First, note that since $n \geq K \geq k \geq \Omega( \sqrt{n} / \epsilon^2 )$ and $\epsilon < 1$, we have
\begin{align*}
\pr[X_i = v], \pr[Y_i = v] &= \frac{\left(1-\frac{K}{n}\right)(K \epsilon / n)^{v_1}(\Theta(k) \epsilon / n)^{v_2}e^{-(K+\Theta(k))\epsilon/n}+\frac{K}{n} (K/K)^{v_1}(k/K)^{v_2}e^{-(K+k)/K}}{v_1!v_2!}\\
&= \frac{\Theta(1) \cdot (K \epsilon / n)^{v_1}(\Theta(k) \epsilon / n)^{v_2}e^{-(K+\Theta(k))\epsilon/n}}{v_1!v_2!} + \Theta(1) \cdot \frac{K}{n \cdot v_1! \cdot v_2!}\cdot \left(\frac{k}{K}\right)^{v_2}\\
&= \Theta(1) \cdot \frac{(K \epsilon / n)^{v_1}(\Theta(k) \epsilon / n)^{v_2}}{v_1!v_2!} + \Theta(1) \cdot \frac{K}{n \cdot v_1! \cdot v_2!}\cdot \left(\frac{k}{K}\right)^{v_2}.
\end{align*}

Note that in the final line above, the first term dominates when $v=(0,0)$ and the second term dominates otherwise. Thus, regardless of what the $\Theta(1)$ terms in that line end up being, that line cannot change by more than a constant factor compared to any other choice of the $\Theta(1)$ terms. Thus, $\pr[X_i = v]$ and $\pr[Y_i = v]$ are within constant factors of each other.

Thus we may simplify the quantity we need to bound as
\begin{align*}
&\sum_v \frac{\pr[A=v]+\pr[B = v]}{4} \cdot \left[ \left( 1 - \frac{\pr[A=v]}{\pr[B=v]}  \right)^2 + 
\left( 1 - \frac{\pr[B=v]}{\pr[A=v]} \right)^2 \right] \\
&= \Theta(1) \cdot \sum_v \frac{(\pr[A=v]-\pr[B = v])^2}{\pr[A=v]+\pr[B = v]}.
\end{align*}

So it suffices to show
\[
\Theta(1) \cdot \sum_v \frac{(\pr[A=v]-\pr[B = v])^2}{\pr[A=v]+\pr[B = v]} \leq O(\epsilon^4 k^2 K / n^3).
\]


We also have
\begin{align*}
\left| \pr[X_i = v] - \pr[Y_i = v] \right| &\leq \frac{(\epsilon K / n)^{v_1} e^{-\epsilon K / n}}{v_1!} \cdot \left|\frac{(\epsilon k / n)^{v_2} e^{-\epsilon k / n}}{v_2!} - \frac{1}{2} \left[ \frac{(2\epsilon k / n)^{v_2} e^{-2\epsilon k / n}}{v_2!} + \delta_{v_2=0} \right] \right| \\
&\leq \frac{(\epsilon K / n)^{v_1}}{v_1!} \cdot \left|\frac{(\epsilon k / n)^{v_2} e^{-\epsilon k / n}}{v_2!} - \frac{1}{2} \left[ \frac{(2\epsilon k / n)^{v_2} e^{-2\epsilon k / n}}{v_2!} + \delta_{v_2=0} \right] \right|  
\end{align*}
where we used the fact that the contribution to the probabilities from the $q_i=p_i=1/K$ case cancels and in the other case, $(X_i)_1,(Y_i)_1$ have the same distribution which is independent of the other coordinate.

Thus, for $v_2=0$, we have
\begin{align*}
&\sum_{v|v_2=0} \frac{(\pr[X_i=v]-\pr[Y_i = v])^2}{\pr[X_i=v]+\pr[Y_i = v]} \\
&\leq O(1) \cdot \sum_{v|v_2=0} \frac{\frac{(\epsilon K / n)^{v_1}}{v_1!} \cdot \left|\frac{(\epsilon k / n)^{v_2} e^{-\epsilon k / n}}{v_2!} - \frac{1}{2} \left[ \frac{(2\epsilon k / n)^{v_2} e^{-2\epsilon k / n}}{v_2!} + \delta_{v_2=0} \right] \right|^2}{
\frac{(K \epsilon / n)^{v_1}(\Theta(k) \epsilon / n)^{v_2}}{v_1!v_2!}} \\
&\leq O(1) \cdot \sum_{v|v_2=0} O\left(\frac{\frac{(K \epsilon/n)^{v_1}}{v_1!} \cdot \left( e^{-k \epsilon/n} - (1+e^{-2k \epsilon/n})/2\right)^2}{
(K \epsilon/n)^{v_1} / v_1!}\right) \\
&\leq O(1) \cdot n (k \epsilon/n)^4 \\
&\leq O(1) & \text{since $k \leq n^{3/4}$}.
\end{align*}

For $v_2=1$, we have
\begin{align*}
&\sum_{v|v_2=1} \frac{(\pr[X_i=v]-\pr[Y_i = v])^2}{\pr[X_i=v]+\pr[Y_i = v]} \\
&\leq O(1) \cdot \sum_{v|v_2=1} \frac{\frac{(\epsilon K / n)^{v_1}}{v_1!} \cdot \left|\frac{(\epsilon k / n)^{v_2} e^{-\epsilon k / n}}{v_2!} - \frac{1}{2} \left[ \frac{(2\epsilon k / n)^{v_2} e^{-2\epsilon k / n}}{v_2!} + \delta_{v_2=1} \right] \right|^2}{\frac{K}{n \cdot v_1! \cdot v_2!}\cdot \left(\frac{k}{K}\right)^{v_2}} \\
&\leq O(1) \cdot \sum_{v|v_2=1} \frac{\frac{(\epsilon K / n)^{v_1}}{v_1!} \cdot ((\eps k/n)( e^{-k \eps/n} -e^{-2k \eps/n}))^2}{\frac{K}{n \cdot v_1!}\cdot \left(\frac{k}{K}\right)} \\
&\leq O(1) \cdot \sum_{v|v_2=1} \frac{(\epsilon K / n)^{v_1} \cdot ((\eps k/n)( e^{-k \eps/n} -e^{-2k \eps/n}))^2}{\frac{k}{n}} \\
&\leq O(1) \cdot \sum_{v|v_2=1} (\epsilon K / n)^{v_1} (\eps k /n)^4 (n/k) \\
&\leq O(k^3 \eps^4/n^3) \\ 
&\leq O(1).
\end{align*}

For $v_2 \geq 2$, we have
\begin{align*}
&\sum_{v|v_2\geq 2} \frac{(\pr[X_i=v]-\pr[Y_i = v])^2}{\pr[X_i=v]+\pr[Y_i = v]} \\
&\leq O(1) \cdot \sum_{v|v_2\geq 2} \frac{\frac{(\epsilon K / n)^{v_1}}{v_1!} \cdot \left|\frac{(\epsilon k / n)^{v_2} e^{-\epsilon k / n}}{v_2!} - \frac{1}{2} \left[ \frac{(2\epsilon k / n)^{v_2} e^{-2\epsilon k / n}}{v_2!} + \delta_{v_2=0} \right] \right|}{\frac{K}{n \cdot v_1! \cdot v_2!}\cdot \left(\frac{k}{K}\right)^{v_2}} \\
&\leq O(1) \cdot \sum_{v|v_2\geq 2} \frac{\frac{(\epsilon K / n)^{v_1}}{v_1!} \cdot ((\epsilon k / n)^{v_2} / v_2!)^2}{\frac{K}{n \cdot v_1! \cdot v_2!}\cdot \left(\frac{k}{K}\right)^{v_2}} \\
&\leq O(1) \cdot \sum_{v|v_2\geq 2} \frac{(\epsilon K / n)^{v_1} \cdot ((\epsilon k / n)^{v_2} / v_2!)^2}{\frac{K}{n \cdot v_2!}\cdot \left(\frac{k}{K}\right)^{v_2}} \\
&\leq O(1) \cdot \frac{n}{K} \sum_{v|v_2\geq 2} (\epsilon K / n)^{v_1} \cdot (\epsilon^2 kK / n^2)^{v_2} / v_2! \\
&\leq O(1) \cdot \frac{n}{K} \left. (\epsilon K / n)^{v_1} \cdot (\epsilon^2 kK / n^2)^{v_2} / v_2! \right|_{v=(0,2)} \\
&= O(1) \cdot \frac{n}{K} (\epsilon^2 kK / n^2)^{2} \\\
&= O(k^2 K \eps^4/n^3).
\end{align*}
This completes the case analysis and the proof.
\end{proof}

\section{Omitted Proofs from Section~\ref{sec:lb-ind}}\label{sec:app-lb}

\subsection{Allowable Assumptions}

\begin{proof}[Proof of \Cref{claim:assumps-2}]
For this lower bound term, we may assume
\[
k \leq \sqrt{nm \log(1/\delta)} / (2\epsilon^2)
\]
and
\[
100\log(1/\delta)/\epsilon^2 \leq 10\sqrt{nm \log(1/\delta)} / \epsilon^2 \;.
\]
Solving the second equation for $\log(1/\delta)$ yields
\[
\log(1/\delta) \leq nm / 100 \;.
\]
Substituting this into the first equation yields
\[
k \leq nm / (2\epsilon^2) \leq nm / 200 \;.
\]
\end{proof}

\begin{proof}[Proof of \Cref{claim:assumps-3}]
For this lower bound term, we may assume
\[
k \leq n^{2/3} m^{1/3} \log^{1/3}(1/\delta) / (2\epsilon^{4/3})
\]
and
\[
10\sqrt{nm \log(1/\delta)} / \epsilon^2 \leq n^{2/3} m^{1/3} \log^{1/3}(1/\delta) / \epsilon^{4/3}.
\]
Solving the second equation for $\log(1/\delta)$ yields
\[
\log(1/\delta) \leq \epsilon^4 m/n.
\]
Substituting this into the first equation yields
\[
k \leq m^{4/3}/(2n^{1/3}) \leq n/2.
\]
The final step follows from using $m \leq n$
\end{proof}

\subsection{Inequalities Used to Bound KL Divergence}

We first prove some simple inequalities.

\begin{claim}\label{lem:relate-rational-ubs}
For all $a,b > 0$, we have that
\[
\left[ b \left(1-\frac{a}{b}\right)^2 + a \left(1-\frac{b}{a} \right)^2 \right] \leq 
\left[ a \left(1-\frac{a}{b} \right)^2 + b \left(1-\frac{b}{a} \right)^2 \right] 
\]
\end{claim}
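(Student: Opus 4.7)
The plan is to rearrange the inequality, collect terms, and reduce to a manifestly nonnegative quantity. Specifically, I would move everything to one side and write
\[
\bigl[a(1-a/b)^2 + b(1-b/a)^2\bigr] - \bigl[b(1-a/b)^2 + a(1-b/a)^2\bigr] = (a-b)\bigl[(1-a/b)^2 - (1-b/a)^2\bigr].
\]
So it suffices to show that $(a-b)\bigl[(1-a/b)^2 - (1-b/a)^2\bigr] \geq 0$.

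Next, I would substitute the identities $1 - a/b = (b-a)/b$ and $1 - b/a = -(b-a)/a$, which give $(1-a/b)^2 = (b-a)^2/b^2$ and $(1-b/a)^2 = (b-a)^2/a^2$. Hence
\[
(1-a/b)^2 - (1-b/a)^2 = (b-a)^2\left(\frac{1}{b^2} - \frac{1}{a^2}\right) = \frac{(b-a)^2(a^2-b^2)}{a^2 b^2} = -\frac{(b-a)^3(a+b)}{a^2 b^2}.
\]
Multiplying by $(a-b) = -(b-a)$ yields
\[
(a-b)\bigl[(1-a/b)^2 - (1-b/a)^2\bigr] = \frac{(b-a)^4(a+b)}{a^2 b^2},
\]
which is manifestly nonnegative since $a,b>0$ and $(b-a)^4 \geq 0$. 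This proves the claim.

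There is essentially no obstacle here: the result is an algebraic identity that becomes transparent once one notices that $(1-a/b)$ and $(1-b/a)$ differ only by the factor $-b/a$ (or equivalently that $(b-a)/b$ and $(b-a)/a$ share the factor $(b-a)$). The only thing to be careful about is the sign bookkeeping when factoring out $(a-b)$ versus $(b-a)$, but the final quartic factor $(b-a)^4$ absorbs any such sign ambiguity.
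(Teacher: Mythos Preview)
Your proof is correct and follows exactly the paper's approach: move everything to one side and simplify to a manifestly nonnegative expression. In fact your simplification $\frac{(a-b)^4(a+b)}{a^2 b^2}$ is the correct one; the paper's stated expression $\frac{(a-b)^2(a+b)}{a^2 b^2}$ is a minor typo (check $a=3$, $b=1$), though this does not affect the validity of either argument since both quantities are nonnegative.
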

\begin{proof}
Moving everything to the same side and simplifying, the desired inequality is equivalent to
\[
\frac{(a - b)^2 (a + b)}{a^2 b^2} \geq 0.
\]
which is always true for $a,b>0$.
\end{proof}

\begin{claim}\label{lem:kl-to-rational}
For all $a,b > 0$, we have that
\[
a \log\left( \frac{a}{b} \right) + b \log\left( \frac{b}{a} \right) \leq  
\left[ b \left(1-\frac{a}{b} \right)^2 + a \left(1-\frac{b}{a} \right)^2 \right] /2 \;.
\]
\end{claim}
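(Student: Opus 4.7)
The plan is to simplify both sides until the inequality reduces to a one-variable calculus statement. First, observe that the left-hand side collapses nicely:
\[
a \log(a/b) + b \log(b/a) = (a-b)\log(a/b).
\]
For the right-hand side, expand and combine over a common denominator:
\[
\tfrac{1}{2}\!\left[ b(1-a/b)^2 + a(1-b/a)^2 \right] = \tfrac{1}{2}\!\left[\tfrac{(a-b)^2}{b} + \tfrac{(a-b)^2}{a}\right] = \frac{(a-b)^2(a+b)}{2ab}.
\]
So the claim is equivalent to $(a-b)\log(a/b) \leq \tfrac{(a-b)^2(a+b)}{2ab}$.

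Next, note that both sides are invariant under swapping $a$ and $b$ (the left side since $(a-b)\log(a/b)=(b-a)\log(b/a)$, the right obviously). So I may assume $a \geq b$, and substitute $t := a/b \geq 1$. Dividing through by $b(t-1)$ (which is $\geq 0$; the $t=1$ case is trivial since both sides vanish), the inequality becomes
\[
\log t \;\leq\; \frac{(t-1)(t+1)}{2t} \;=\; \frac{1}{2}\!\left(t - \tfrac{1}{t}\right), \qquad t \geq 1.
\]

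Finally, I would verify this univariate inequality by elementary calculus. Define $g(t) := \tfrac{1}{2}(t - 1/t) - \log t$. Then $g(1)=0$ and
\[
g'(t) \;=\; \tfrac{1}{2}\!\left(1 + \tfrac{1}{t^2}\right) - \tfrac{1}{t} \;=\; \frac{t^2 - 2t + 1}{2t^2} \;=\; \frac{(t-1)^2}{2t^2} \;\geq\; 0,
\]
so $g$ is nondecreasing on $[1,\infty)$ and hence $g(t) \geq g(1) = 0$ for all $t \geq 1$. This yields the required inequality. There is no real obstacle here: the entire argument is routine algebraic manipulation plus one derivative computation. The only step that requires any care is keeping track of signs when dividing by $(a-b)$, which is why I reduce to the symmetric case $a \geq b$ first.
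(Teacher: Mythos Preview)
Your proof is correct and follows essentially the same route as the paper: simplify the left-hand side to $(a-b)\log(a/b)$, use symmetry to reduce to $a\geq b$, substitute $t=a/b$, and verify $\log t \leq \tfrac{1}{2}(t-1/t)$ for $t\geq 1$ by comparing derivatives, which boils down to $(t-1)^2\geq 0$. The only cosmetic difference is that you first rewrite the right-hand side as $\frac{(a-b)^2(a+b)}{2ab}$ before substituting, whereas the paper divides through by $(a-b)$ first and then simplifies; the resulting one-variable inequality and its verification are identical.
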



\begin{proof}
Note that the inequality we wish to prove is symmetric with respect to $a$ and $b$. 
Thus, we assume WLOG that $a \geq b$. Since both sides are $0$ when $a=b$, 
we assume WLOG that $a > b$.

The LHS is equal to
\[
(a-b) \log\left( \frac{a}{b} \right) \;.
\]
Dividing both sides by $(a-b)$, we get that we want to show
\[
\log\left( \frac{a}{b} \right) \leq \frac{b \left(1-\frac{a}{b} \right)^2 + a \left(1-\frac{b}{a} \right)^2}{2 (a-b)} = \left[ \frac{a}{b} - \frac{b}{a} \right] / 2.
\]
Define $x \triangleq a/b > 1$. 
Then we wish to show for all $x > 1$ that
\[
\log(x) \leq \left[ x - \frac{1}{x} \right]/2 \;.
\]
We prove this using standard calculus. 
Note that the two sides are equal when $x=1$. 
The derivative of the LHS is $1/x$ and the derivative of the RHS is $[1+1/x^2]/2$. 
It suffices to prove the former quantity is at most the latter for all $x$. So, we want to show
\[
\frac{1}{x} \leq \left[ 1 + \frac{1}{x^2} \right] / 2 \;, 
\quad \text{or equivalently,} \quad
1-2x+x^2 \geq 0 \;.
\]
The polynomial in the second equation is minimized by $x=1$ 
and takes value $0$ there, so it is nonnegative as desired.
\end{proof}

\end{document}